\newcommand{\be}{\begin{eqnarray}}
\newcommand{\ee}{\end{eqnarray}}
\newcommand{\ben}{\begin{eqnarray*}}
\newcommand{\een}{\end{eqnarray*}}
\newcommand{\utwi}[1]{\mbox{\boldmath $ #1$}}
\newcommand{\ba}{{\utwi{a}}}
\newcommand{\bone}{{\utwi{1}}}
\newcommand{\bc}{{\utwi{c}}}
\newcommand{\bff}{{\utwi{f}}}
\newcommand{\bg}{{\utwi{g}}}
\newcommand{\bh}{{\utwi{h}}}
\newcommand{\bq}{{\utwi{q}}}
\newcommand{\br}{{\utwi{r}}}
\newcommand{\bx}{{\utwi{x}}}
\newcommand{\bA}{{\utwi{A}}}
\newcommand{\bB}{{\utwi{B}}}
\newcommand{\bC}{{\utwi{C}}}
\newcommand{\bE}{{\utwi{E}}}
\newcommand{\bF}{{\utwi{F}}}
\newcommand{\bG}{{\utwi{G}}}
\newcommand{\bH}{{\utwi{H}}}
\newcommand{\bI}{{\utwi{I}}}
\newcommand{\bM}{{\utwi{M}}}
\newcommand{\bO}{{\utwi{O}}}
\newcommand{\bQ}{{\utwi{Q}}}
\newcommand{\bR}{{\utwi{R}}}
\newcommand{\bS}{{\utwi{S}}}
\newcommand{\bU}{{\utwi{U}}}
\newcommand{\bW}{{\utwi{W}}}
\newcommand{\bX}{{\utwi{X}}}
\newcommand{\bY}{{\utwi{Y}}}
\newcommand{\bZ}{{\utwi{Z}}}
\newcommand{\bPhi}{{\utwi{\mathnormal\Phi}}}
\newcommand{\cD}{{\cal D}}
\newcommand{\cF}{{\cal F}}
\newcommand{\cM}{{\cal M}}
\newcommand{\bQone}{\bQ_1}
\newcommand{\bQtwo}{\bQ_2}
\newcommand{\bepsilon}{{\utwi{\epsilon}}}
\newcommand{\bGamma}{{\utwi{\Gamma}}}
\newcommand{\bOmega}{{\utwi{\Omega}}}
\renewcommand{\bPhi}{{\utwi{\Phi}}}
\newcommand{\bSigma}{{\utwi{\Sigma}}}
\newcommand{\widehatbQone}{\widehat{\bQ}_1}
\newcommand{\widehatbQtwo}{\widehat{\bQ}_2}
\def\Cov{\mathrm{Cov}}
\newcommand{\rmvec}{\mathrm{vec}}
\newcommand{\rmtr}{\mathrm{tr}}
\def\expec{\mathrm{E}}
\newcommand{\bzero}{{\utwi{0}}}
\newtheorem{theorem}{Theorem}
\newtheorem{lemma}{Lemma}
\begin{document}

\title{\bf Factor Models for Matrix-Valued High-Dimensional Time
  Series \footnote{
Chen's research was supported in part by National Science Foundation
grants DMS-1503409 and DMS-1209085.  Corresponding author: Rong Chen,
Department of Statistics, Rutgers University, Piscataway, NJ 08854,
USA. Email: rongchen@stat.rutgers.edu. } }

\author{Dong Wang \\ {\small Department of Operations Research and
    Financial Engineering, Princeton University, Princeton, NJ 08544} \\
Xialu Liu \\
{\small Management Information Systems Department, San Diego State
  University, San Diego, CA 92182} \\
Rong Chen \\
{\small Department of Statistics, Rutgers University, Piscataway, NJ 08854}
}
\date{June, 2017}
\maketitle

\begin{abstract}
In finance, economics and many other fields, observations in a matrix
form are often observed over time. For example, many economic
indicators are obtained in different countries over time. Various
financial characteristics of many companies are reported over
time. Although it is natural to turn a matrix observation into a long
vector then use standard vector time series models or factor analysis,
it is often the case that the columns and rows of a matrix represent
different sets of information that are closely interrelated in a very
structural way. We propose a novel factor model that maintains and
utilizes the matrix structure to achieve greater dimensional reduction
as well as finding clearer and more interpretable factor
structures. Estimation procedure and its theoretical properties are
investigated and demonstrated with simulated and real examples.
\end{abstract}

\section{Introduction}

Time series analysis is widely used in many applications. Univariate
time series, when one observes one variable through time, is well
studied, with linear models \citep[e.g.][]{Box&Jenkins76,
  Brockwell&Davis91, Tsay05}, nonlinear models \citep[e.g.][]{Engle82,
  bollerslev1986generalized,Tong90}, and nonparametric models
\citep[e.g.][]{Fan&Yao03}. Multivariate time series and panel time
series, when one observes a vector or a panel of variables through
time, is also a long studied but still active field \citep[e.g.][and
  others]{tiao1981, tiao1989, engle1995, Stock&Watson2004,
  lutkepohl2005, Tsay14}. Such analysis not only reveals the temporal
dynamics of the time series, but also explores the relationship among
a group of time series, using the available information more
fully. Often, the investigation of the relationship among the time
series is the objective of the study.

Matrix-valued time series, when one observes a group of variables
structured in a well defined matrix form over time, has not been
studied. Such a time series is encountered in many applications. For
example, in economics, countries routinely report a set of
economic indicators (e.g. GDP growth, unemployment rate, inflation
index and others) every quarter. Table~\ref{matrix} depicts such a
matrix-valued time series.  One can concentrate on one cell in
Table~\ref{matrix}, say US Unemployment rate series $\{X_{t,21},~t=1,2
\ldots\}$ and build a univariate time series model. Or one can concentrate
on one column in Table~\ref{matrix}, say, all economic indicators of
US $\{(X_{t,11},\ldots, X_{t,41})'\}$ and study it as a vector time
series. Similarly, if one is interested in modeling GDP growth of the
group of countries, a panel time series model can be built for the
first row $\{(X_{t,11},\ldots, X_{t,1p})\}$ in
Table~\ref{matrix}. However, there are certainly relationships among
all variables in the table and the matrix structure is extremely
important. For example, the variables in the same column (same
country) would have stronger inter-relationship. Same for the
variables in the same row (same indicator). Hence it is important to
analyze the entire group of variables while fully preserve and utilize
its matrix structure.

\begin{table} \label{matrix}
\begin{center}
\begin{tabular}{c|ccccc}
 & US & Japan & $\cdots$ & China \\ \hline
GDP & $X_{t,11}$ & $X_{t,12}$ & $\cdots$ & $X_{t,1p}$ \\
Unemployment & $X_{t,21}$ & $X_{t,22}$ & $\cdots$ & $X_{t,2p}$\\
Inflation & $X_{t,31}$ & $X_{t,32}$ & $\cdots$ & $X_{t,3p}$\\
Payout Ratio & $X_{t,41}$ & $X_{t,42}$ & $\cdots$ & $X_{t,4p}$\\
\end{tabular}
\caption{Illustration of a matrix-valued time series}
\end{center}
\end{table}


%

There are many other examples. Investors may be interested in the time
series of a group
of financials (e.g. asset/equity ratio, dividend per share, and
revenue) for a group of companies, the
import-export volume among a group of countries, pollution and
environmental variables (e.g. PM2.5, ozone level, temperature,
moisture, wind speed, etc) observed at a group of stations. In this
article we study such a matrix-valued time series.

Matrix-valued data has been studied
\citep[e.g.][]{gupta1999matrix, kollo2006advanced,
  werner2008estimation, leng2012sparse, yin2012model,
  zhao2014structured, zhou2014gemini, zhou2014regularized}. Their
study mainly focuses on independent observations. The concept of
matrix-valued time series was introduced by \citet{walden2001},
applied in signal and image processing. Still, the temporal dependence
of the time series was not fully exploited for model building.

In this article, we focus on high-dimensional matrix-valued time
series data. In cases, we may allow the dimensions of the matrix to be
as large as, or even larger than the length of the observations. A
well-known issue often accompanying with high-dimensional data is the
curse of dimensionality. We adopt a factor model approach.  Factor
analysis can effectively reduce the number of parameters involved, and
is a powerful statistical approach to extracting hidden driving
processes, or latent factor processes, from an observed stochastic
process. In the past decades, factor models for high-dimensional time
series data have drawn great attention from both econometricians and
statisticians \citep[e.g.][]{chamberlain1983arbitrage,
  forni2000generalized, bai2002determining, hallin2007determining,
  pan2008modelling, lam2011estimation, fan2011high, lam2012factor,
  fan2013large, chang2015high, liu2016regime}.

With the above observations and motivations, in this article, we
aim to develop factor models for matrix-valued time series, which
fully explore the matrix structure.  The rest of this article is
organized as follows. In Section 2, detailed model settings are
introduced and interpretations are discussed in detail. Section 3
presents an estimation procedure. The theoretical properties of the
estimators are also studied.  Simulation results are shown in Section
4 and two real data examples are given in Sections 5 and 6.
Section 7 provides a brief summary. All proofs are in
Appendix.

\section{Matrix Factor Models}

Let $\bX_t$ ($t=1,\ldots,T$) be a matrix-valued time series, where
each $\bX_t$ is a matrix of size $p_1 \times p_2$,
\[
\bX_t=\left( \begin{array}{ccc}
  X_{t,11} & \cdots & X_{t,1p_2} \\
 \vdots & \ddots & \vdots  \\
  X_{t, p_1 1} & \cdots & X_{t, p_1 p_2}
\end{array}\right).
\]
We propose the following factor model for matrix-valued time series,
\begin{equation}
\label{eqn:2dmodel}
\bX_t=\bR \bF_{t} \bC' + \bE_t, \quad t=1,2,\ldots,T.
\end{equation}
Here, $\bF_t$ is a $k_1\times k_2$ unobserved matrix-valued time
series of common {\bf fundamental factors}, $\bR$ is a $p_1 \times
k_1$ front loading matrix, $\bC$ is a $p_2 \times k_2$ back loading
matrix, and $\bE_t$ is a $p_1 \times p_2$ error matrix. In model
(\ref{eqn:2dmodel}), the common fundamental factors $\bF_t$'s drive all
dynamics and co-movement of $\bX_t$. $\bR$ and $\bC$ reflect the
importance of common factors and their interactions.

Similar to multivariate factor models, we assume that the
matrix-valued time series is driven by a few latent factors. Unlike
the classical factor model, the factors $\bF_t$'s in model
\eqref{eqn:2dmodel} are assumed to be organized in a matrix
form. Correspondingly, we adopt two loading matrices $\bR$ and $\bC$
to capture the dependency between each individual time series in the
matrix observations and the matrix factors. In the following we
provide two interpretations of the loading matrices. We first
introduce some notation. For a matrix $\bA$, we use $\ba_{i \cdot}$
and $\ba_{j}$ to represent the $i$-th row and the $j$-th column
of $\bA$, respectively, and $A_{ij}$ to denote the $ij$-th element of
$\bA$.

\vspace{0.2in}

\noindent
{\bf Interpretation I:} To isolate effects, assume $k_1=p_1$ and
$\bR=\bI_{p_1}$, then $\bX_t=\bF_{t}\bC' + \bE_t$. In this case, each
column of $\bX_t$ is a linear combination of the columns of
$\bF_{t}$. Take the example shown in Table~\ref{matrix} and consider the
first column of $\bX_t$ (the US economic indicators),

\vspace{.5in}

\centerline{
US \hspace{0.9in} $\bff_{t,1}$ \hspace{0.5in}
$\ldots$  \hspace{0.5in} $\bff_{t,k_2}$  \ \ \ \ \  \ \ \ \ \ \ \
}
\vspace{-0.05in}
\[
\left(\begin{array}{c}
\mbox{GDP} \\ \mbox{Unem} \\ \mbox{Inf} \\ \mbox{PayR}
\end{array}\right)_t
\!\!=\!C_{11}\left(\begin{array}{c}
\mbox{F-GDP} \\ \mbox{F-Unem} \\ \mbox{F-Inf} \\ \mbox{F-PayR}
\end{array}
\right)_{t}
\!\!\!\!+\cdots+
C_{1 k_2}\left(\begin{array}{c}
\mbox{F-GDP} \\ \mbox{F-Unem} \\ \mbox{F-Inf} \\ \mbox{F-PayR}
\end{array}
\right)_{t}+\mathbf{e}_{t,US} .
\]
It is seen that the US GDP only depends on the first row of $\bF_t$.
Similarly, other countries' GDP also only depends on the first row of
$\bF_t$. Hence we can view the first row of $\bF_t$ as the GDP
factors. Similarly, the second row of $\bF_t$ can be considered as the
unemployment factors.  There is no interaction between the indicators
in this setting (when $\bR=\bI$).  The loading matrix $\bC$ reflects
how each country (column of $\bX_t$) depends on the columns of
$\bF_t$, hence reflects column interactions, or the interactions
between the countries.  Because of this, we will call $\bC$ the
column loading matrix.

Similarly, the rows of $\bF_t$ can be viewed as common factors of all
rows of $\bX_t$, and the front loading matrix $\bR$ as row loading
matrix. Again, assume $k_2=p_2$ and $\bC=\bI_{p_2}$, it follows that
$\bX_t=\bR\bF_{t}+\bE_t$. Then each row of $\bX_t$ is a linear
combination of the rows of $\bF_{t}$. Consider the first row of
$\bX_t$,

\vspace{.2in}

\centerline{
\hspace{0.15in} US \hspace{0.05in} Japan \hspace{0.005in}...\hspace{0.005in} China
\hspace{0.7in} US \hspace{0.18in}
Japan\hspace{0.15in}...\hspace{0.15in}China \ \ \ \ \ \ \ \ \ \ \ \ \ \ \ \  \ }
\vspace{-0.3in}
\begin{eqnarray*}
(\mbox{GDP}, \mbox{GDP}, \ldots, \mbox{GDP})_{t} = &
R_{11}(\mbox{F-US}, \mbox{F-Japan}, \ldots, \mbox{F-China})_{t} & \bff_{t,1\cdot}\\
 & + R_{12}
(\mbox{F-US}, \mbox{F-Japan}, \ldots, \mbox{F-China})_{t}  & \bff_{t,2\cdot}\\
 & + \cdots  & \vdots \\
 & + R_{1k_1}
(\mbox{F-US}, \mbox{F-Japan}, \ldots, \mbox{F-China})_{t} &  \bff_{t,k_1\cdot} \\
 & + \mathbf{e}_{t,GDP\cdot}. &
\end{eqnarray*}
It is seen that all economic movements (of each country) are driven by
$k_1$ (row) common factors. For example, every US's indicator depends
on only the first column of $\bF_t$. Hence the first column of $\bF_t$
can be viewed as the US factor. And the second column of $\bF_t$ can
be viewed as Japan factor. The loading matrix $\bR$ reflects how each
indicator depends on the rows of $\bF_t$. It reflects row
interactions, the interactions between the indicators within each
country. Because of this, we will call $\bR$ the row loading matrix.


Obviously column and row interaction would be of interests and of
importance. One way to introduce interaction is to assume an additive
structure, by combining the column and row factor models
\[
\bX_t=\bR \bF_{1t} + \bF_{2t}\bC'+ \bE_t, \quad t=1,2,\ldots,T.
\]
However, the number of factors in this model is large
($k_1\times p_2+p_1\times k_2$). A more parsimonious  model would be a direct
interaction as in model (\ref{eqn:2dmodel}). In this case the number of
factors is only $k_1\times k_2$.

\vspace{0.2in}

\noindent
{\bf Interpretation II:} We can view the model (\ref{eqn:2dmodel})
as a two-step
hierarchical model.

\noindent{\bf Step 1:} For each fixed row $i = 1,2, \ldots, p_1$, using
data $\{\bx_{t,i\cdot},~t=1,2,\ldots,T\}$, we can find a $p_2 \times
k_2$ dimensional loading matrix $\bC^{(i)}$ and $k_2$ dimensional factors
$\{\bg_{t,i\cdot}=(G_{t,i1},\dots, G_{t,ik_2}),~t=1,2,\ldots,T\}$ under a standard vector factor model
setting. That is,
\[
(X_{t,i1},\ldots,X_{t,i p_2})=(G_{t,i1},\ldots,G_{t,ik_2}) \bC^{(i)'} +
(H_{t,i1},\ldots,H_{t,i p_2}), \quad t=1,2,\ldots,T.
\]
Let $\bG_t$ be the $p_1 \times k_2$ matrix formed with $p_1$ rows of $\bg_{t,i\cdot}$. Also denote $\bH_t$ as the $p_1\times p_2$ error matrix
formed with the rows of $\{\bh_{t,i\cdot}= (H_{t,i1} \ldots, H_{t,ip_2})\}$.

\noindent{\bf Step 2:} Suppose each column $j=1,2,\ldots,k_2$ of the
assembled factor matrix $\bG_t$ obtained in Step 1 also assumes the
factor structure, with a $p_1\times k_1$ loading matrix $\bR^{(j)}$ and
a $k_1$ dimensional factor $\bff_{t,j}$. That is,
\[
\left( \begin{array}{c} G_{t,1j} \\ \vdots \\ G_{t,p_1 j}\end{array} \right)
=\bR^{(j)} \left( \begin{array}{c}  F_{t,1j} \\ \vdots \\ F_{t,k_1j} \end{array} \right)
+
 \left( \begin{array}{c}  H^*_{t,1j} \\ \vdots \\ H^*_{t,p_1j} \end{array} \right),
\quad t=1,2,\ldots,T.
\]
This step reveals the common factors that drive the co-moments in
$\bG_t$. Let $\bF_t$ be the $k_1 \times k_2$ matrix
formed with the columns $\bff_{t, j}$. And let
$\bH^*_t$ be the $p_1 \times k_2$ error matrix formed with columns
$\{\bh^*_{t,j}= (H^*_{t,1j}, \ldots, H^*_{t,p_1j} )'\}$.

\noindent{\bf Step 3: Assembly:} With the above two-step factor
analysis and notation, assume $\bR^{(1)}=\ldots=\bR^{(k_2)}=\bR$ and
$\bC^{(1)}=\ldots=\bC^{(p_1)}=\bC$, we have
\[
\bX_t=\bG_t\bC^{'}+\bH_t \mbox{\ \ and \ \ } \bG_t=\bR\bF_t+\bH^*_t.
\]
Hence
\[
\bX_t=\bR \bF_t \bC^{'}+ \bH_t^* \bC^{'} + \bH_t = \bR \bF_t \bC^{'} + \bE_t,
\]
where $\bE_t=\bH_t^* \bC^{'} + \bH_t$. It is identical to
\eqref{eqn:2dmodel}.

\medskip

Here we provide some additional remarks of  model
\eqref{eqn:2dmodel}.

\noindent
{\bf Remark 1:\ } Let $\rmvec(\cdot)$ be the vectorization operator,
i.e., $\rmvec(\cdot)$ converts a matrix to a vector by stacking
columns of the matrix on top of each other. The classical factor
analysis treats $\rmvec(\bX_t)$ as the observations, and a factor
model is in the form of
\begin{equation}
\label{eqn:1dmodel}
\rmvec(\bX_t) = \bPhi {\bf f}_t + {\bf e}_t, \quad t=1,2,\ldots,T,
\end{equation}
where $\bPhi$ is a $p_1p_2 \times k$ loading matrix, ${\bf f}_t$ of
length $k$ is the latent factor, ${\bf e}_t$ is the error term, and
$k$ is the total number of factors. On the other hand, note that model
\eqref{eqn:2dmodel} can be re-written as
\begin{equation}
\label{eqn:2dmodel1d}
\rmvec(\bX_t)=(\bC \otimes \bR)
\rmvec(\bF_{t})+\rmvec(\bE_t).
\end{equation}
Assume $k=k_1k_2$. Then model \eqref{eqn:2dmodel1d} is a special case of
model \eqref{eqn:1dmodel}, with a Kronecker product structured
loading matrix. Hence model \eqref{eqn:2dmodel} is a restricted
version of model \eqref{eqn:1dmodel}, assuming a special structure for
the loading spaces. The number of parameters for the loading matrix
$\bPhi$ in model \eqref{eqn:1dmodel} is $(p_1k_1) \times (p_2k_2)$
whereas it is $p_1k_1 + p_2k_2$ for the loading matrices $\bR$ and $\bC$
in model \eqref{eqn:2dmodel}. Therefore, model \eqref{eqn:2dmodel}
significantly reduces the dimension of the problem.

\medskip

\noindent
{\bf Remark 2:\ } Interpretation II also reveals the reduction in
the number of factors comparing to using factor models for each
column panel or row panel. Note that, if one ignores the interconnection
between the rows and obtain individual
factor models for each row, as in Step 1,
the total number of factors is $p_1\times k_2$. These factors may have
connections across rows. Step 2 exploits such correlations and uses another
factor model to reduce the number of factors from $p_1\times k_2$ to
$k_1\times k_2$.

\medskip

\noindent
{\bf Remark 3:\ }
We also observed that in practice, the total number of factors used in
our model may be larger than the number of factors needed in the
vectorized factor model (\ref{eqn:1dmodel}).
This is possible since
the vectorized model simultaneously exploits common driving features
in all series, while the matrix factor model does it by working on the row
vectors separately first (Step 1), then condensing them by the columns
(Step 2). Such a two-step approach may result in redundancy (highly
correlated factors) which may be
further simplified. Because we are forcing the
factors to assume a neat matrix structure, it is difficult to have
simplifications such as having
one or several elements in the factor matrix $\bF_t$ be constant zero. Since
$k_1$ and $k_2$ are usually small, we will tolerate such redundancy.
One extension is to assume that
the factor matrix $\bF_t$, after a certain rotation, has a block diagonal
structure, resulting in a multi-term factor model
\begin{align}
\bX_t=\sum_{i=1}^s \bR_i \bF_{it} \bC_i^{'} + \bE_t, \quad t=1,2,\ldots,T, \label{multi}
\end{align}
where $\bF_{it}$ is a $k_{i1}\times k_{i2}$ factor matrix, and
$\sum_{i=1}^s k_{i1}=k_1$ and $\sum_{i=1}^s k_{i2}=k_2$. This will reduce
the number of factors from $k_1\times k_2$ to $\sum_{i=1}^s k_{i1}k_{i2}$,
with corresponding dimension reduction in the loading matrices as well.
We are currently investigating the properties and estimation procedures of
such a  multi-term factor model.

\medskip

\noindent
{\bf Remark 4:\ }
As in all factor model setting, the properties or assumptions on
the observed process $\bX_t$ are inferred from the assumptions on the
factors and the noise processes, since the observed series are assumed to be
linear combinations of the factor processes plus the noise process. Indirectly, we assume that
all autocovariance matrices of lag $h\geq 1$ of all series lie in a
structured $k_1k_2\times k_1k_2$ space, but no assumption on the
contemporary covariance matrix, as we do not assume any
contemporary covariance structure on the error $\bE_t$.

\medskip

\noindent
{\bf Remark 5:\ } Similar models as model \eqref{eqn:2dmodel} have
been proposed and studied when conducting principal component analysis
on matrix-valued data \citep[e.g.][]{paatero1994fac, yang2004two,
  ye2005generalized, ding2005twod, zhang20052d,
  crainiceanu2011population, wang2016efficient}. In those studies, the
matrix-valued observations $\bX_t$ are assumed to be independent, and
they primarily focused on principal component analysis. To the best of
our knowledge, our paper is the first one considering factor models
for matrix-valued time series data.

In this article, we extend the methods described in
\citet{lam2011estimation} and \citet{lam2012factor} for vector-valued
factor model \eqref{eqn:1dmodel} to matrix-valued factor model
\eqref{eqn:2dmodel}. We propose estimators for the loading spaces and
the numbers of row and column factors, investigate their theoretical
properties, and establish their convergence rates. Simulated and
real examples are presented to illustrate the performance of the
proposed estimators, to compare the asymptotics under different
conditions with different factor strengths, and to explore
interactions between row and column factors.

\section{Estimation and Modeling Procedures}
\label{sec:method}

Because of the latent nature of the factors, various assumptions are
imposed to `define' a factor. Two common assumptions are used. One
assumes that the factors must have impact on most of the series, and weak
serial dependence is allowed for the idiosyncratic noise process, see
\citet{chamberlain1983arbitrage, forni2000generalized,
  bai2002determining, hallin2007determining}, among others.  Another
assumes that the factors should capture all dynamics of the observed
process, hence the idiosyncratic noise process has no serial
dependence (but may have strong cross-sectional dependence), see
\citet{pan2008modelling, lam2011estimation, lam2012factor,
  chang2015high, liu2016regime}.  Here we adopt the second assumption
and assume that the vectorized error $\rmvec(\bE_t)$ is a white noise
process with mean $\bzero$ and covariance matrix $\bSigma_e$, and is
independent of the factor process $\rmvec(\bF_t)$. For ease of
presentation, we will assume that the process $\bF_t$ has mean
$\bzero$, and the observations $\bX_t$'s are centered and standardized
through out this paper.

For the vector-valued factor model \eqref{eqn:1dmodel}, it is
well-known that there exists an identifiable issue among the factors
${\bf f}_t$ and the loading matrix $\bPhi$. Similar problem also
arises in the proposed matrix-valued factor model
\eqref{eqn:2dmodel}. Let $\bU_1$ and $\bU_2$ be two invertible
matrices of sizes $k_1 \times k_1$ and $k_2 \times k_2$. Then the
triplets $(\bR,\bF_t,\bC)$ and $(\bR \bU_1 ,\bU_1^{-1}\bF_t\bU_2^{-1},
\bC \bU_2')$ are equivalent under model \eqref{eqn:2dmodel}, and hence
model \eqref{eqn:2dmodel} is not identifiable. However, with a similar
argument as in \citet{lam2011estimation} and \citet{lam2012factor},
the column spaces of the loading matrices $\bR$ and $\bC$ are uniquely
determined. Hence, in the following, we will focus on the estimation
of the column spaces of $\bR$ and $\bC$, denoted by $\cM(\bR)$ and
$\cM(\bC)$, and referred to as row factor loading space and column
factor loading space, respectively.

We can further decompose $\bR$ and $\bC$ as
follows,
\begin{equation*}
\label{eqn:qrmodel}
\bR = \bQ_1 \bW_1, \mbox{ and }\bC= \bQ_2 \bW_2,
\end{equation*}
where $\bQ_i$ is a $p_i \times k_i$ matrix with orthonormal columns
and $\bW_i$ is a $k_i \times k_i$ non-singular matrix, for
$i=1,2$. Let $\cM(\bQ_i)$ denote the column space of $\bQ_i$. Then we
have $\cM(\bQ_1) = \cM(\bR)$ and $\cM(\bQ_2) = \cM(\bC)$ . Hence, the
estimation of column spaces of $\bR$ and $\bC$ is equivalent to the
estimation of column spaces of $\bQ_1$ and $\bQ_2$.

Write
\begin{equation*}
\bZ_t = \bW_1 \bF_t \bW_2', \quad t=1,2,\ldots,T,
\end{equation*}
as a transformed latent factor process. Then, model
\eqref{eqn:2dmodel} can be re-expressed as
\begin{equation}
\label{eqn:2dmodelqr}
\bX_t = \bQ_1 \bZ_t \bQ_2' + \bE_t, \quad
t=1,2, \ldots,T.
\end{equation}
Equation \eqref{eqn:2dmodelqr} can be viewed as another formulation of
the matrix-valued factor model with orthonormal loading
matrices. Since $\cM(\bR) = \cM(\bQ_1)$ and $\cM(\bC)=\cM(\bQ_2)$, we
will perform analysis on model \eqref{eqn:2dmodel} and
\eqref{eqn:2dmodelqr} interchangeably whenever one is more convenient
than the other.

\subsection{Estimation}
\label{subsec:est}

To estimate the matrix-valued factor model \eqref{eqn:2dmodel}, we
follow closely the idea of \citet{lam2011estimation} and
\citet{lam2012factor} in estimating vector-valued factor models.  The
key idea is to calculate auto-cross-covariances of the time series
then construct a Box-Ljung type of statistics in matrix.  Under the
matrix factor model and white idiosyncratic noise assumption, the
space spanned by such a matrix is directly linked with the loading
matrices.  In what follows, we will illustrate the method to obtain an
estimate of $\cM(\bR)$. The column space of $\bC$ can be estimated in a
similar way using the transposes of $\bX_t$'s.

Let the $j$-th column of $\bX_t$, $\bR$, $\bC$, $\bQ_i$ and $\bE_t$ be
$\bx_{t,j}$, $\br_{j}$, $\bc_{j}$, $\bq_{i,j}$ and $\bepsilon_{t,j}$,
respectively.  Let $\br_{k\cdot}$, $\bc_{k\cdot}$ and $\bq_{i,k\cdot}$
be the row vectors that denote the $k$-th row of $\bR$, $\bC$ and
$\bQ_i$, respectively. Then it follows from \eqref{eqn:2dmodel} and
\eqref{eqn:2dmodelqr} that
\begin{equation}
\label{eqn:ytj}
\bx_{t,j} = \bR \bF_t \bc_{j\cdot}' + \bepsilon_{t, j} = \bQone \bZ_t \bq_{2,j\cdot}' + \bepsilon_{t, j}, \quad j = 1, 2,
\ldots, p_2.
\end{equation}
From the zero mean assumptions of both $\bF_t$ and $\bE_t$, we have
$\expec(\bx_{t,j}) = \bzero$.

Let $h$ be a positive integer. Define
\begin{eqnarray}
\label{eqn:omegafqijdef}
\bOmega_{zq,ij}(h) &=& \frac{1}{T-h} \sum_{t=1}^{T-h} \Cov(\bZ_t
\bq_{2,i\cdot}', \bZ_{t+h} \bq_{2,j\cdot}'), \\
\label{eqn:omegayijdef}
\bOmega_{x,ij}(h) &=& \frac{1}{T-h} \sum_{t=1}^{T-h} \Cov(\bx_{t,
  i},\bx_{t+h,j}),
\end{eqnarray}
for $i,j=1,2,\ldots,p_2$. By plugging \eqref{eqn:ytj} into
\eqref{eqn:omegayijdef} and by the assumption that $\bE_t$ is white,
 it follows that
\begin{equation}
\label{eqn:omegayijsimpleform}
\bOmega_{x,ij}(h) = \bQone \bOmega_{zq,ij}(h) \bQone',
\end{equation}
for $h\geq1$. For a pre-determined integer $h_0$, define
\begin{equation}
\label{eqn:Mdef}
\bM_1 = \sum_{h=1}^{h_0} \sum_{i=1}^{p_2} \sum_{j=1}^{p_2}
\bOmega_{x,ij}(h) \bOmega_{x,ij}'(h).
\end{equation}

\noindent
By \eqref{eqn:omegayijsimpleform} and \eqref{eqn:Mdef}, it
follows that
\begin{equation}
\label{eqn:Mexpression}
\bM_1 = \bQone \left(\sum_{h=1}^{h_0} \sum_{i=1}^{p_2} \sum_{j=1}^{p_2}
\bOmega_{zq,ij}(h) \bOmega_{zq,ij}'(h) \right) \bQone'.
\end{equation}

Suppose the matrix $\bM_1$ has rank $k_1$ (Condition 5 in Section
\ref{sec:thm}).
From \eqref{eqn:Mexpression}, we can see that
each column of $\bM_1$ is a linear combination of columns of $\bQone$,
and thus the matrices $\bM_1$ and $\bQone$ have the same column spaces,
that is, $\cM(\bM_1) = \cM(\bQone)$. It follows that the eigen-space of
$\bM_1$ is the same as $\cM(\bQone)$. Hence, $\cM(\bQone)$ can be
estimated by the space spanned by the eigenvectors of the sample
version of $\bM_1$.
Assume that $\bM_1$ has $k_1$ distinct nonzero eigenvalues, and let $\bq_{1,j}$
be the unit eigenvector corresponding to the $j$-th largest eigenvalue. As
there are two unit eigenvectors corresponding to each eigenvalue, we use
the one with positive $\bone'\bq_{1,j}$.
We can now uniquely define $\bQ_1$ by
\[
\bQ_1=( \bq_{1,1}, \bq_{1,2}, \ldots, \bq_{1,k_1}).
\]

Now we construct the sample versions of these quantities and introduce
the estimation procedure as follows. For any positive integer $h$ and
a pre-scribed positive integer $h_0$, let
\begin{eqnarray}
\label{eqn:hatomegayijdef}
\widehat{\bOmega}_{x,ij}(h) &=& \frac{1}{T-h} \sum_{t=1}^{T-h} \bx_{t,i}
\bx_{t+h, j}', \\
\label{eqn:hatmdef}
\widehat{\bM}_1 &=& \sum_{h=1}^{h_0} \sum_{i=1}^{p_2} \sum_{j=1}^{p_2}
\widehat{\bOmega}_{x,ij}(h) \widehat{\bOmega}_{x,ij}'(h).
\end{eqnarray}
Then, $\cM(\bQ_1)$ can be estimated by $\cM(\widehat{\bQ}_1)$, where
$\widehat{\bQ}_1=\{\widehat{\bq}_{1,1}, \ldots, \widehat{\bq}_{1,k_1}\}$,
and $\widehat{\bq}_{1,1}, \ldots \widehat{\bq}_{1,k_1}$ are the eigenvectors
of $\widehat{\bM}_1$ corresponding to its $k_1$ largest eigenvalues.

In practice, the number of row factors $k_1$ is usually
unknown. This quantity can be estimated through a similar eigenvalue
ratio estimator as described in \citet{lam2012factor}. Let
$\widehat{\lambda}_{1,1} \geq \widehat{\lambda}_{1,2} \geq \ldots \geq
\widehat{\lambda}_{1,p_1} \geq 0$ be the ordered eigenvalues of
$\widehat{\bM}_1$. Then
\[
\widehat{k}_1 = \mathrm{arg\,min}_{1 \leq i \leq p_1/2}
\frac{\widehat{\lambda}_{1,i+1}}{ \widehat{\lambda}_{1,i}} .
\]

For $\bQtwo$ and $k_2$, they can be estimated by performing the same
procedure on the transposes of $\bX_t$'s to construct $\bM_2$ and $\widehat{\bM}_2$. Once $\widehatbQone$ and
$\widehatbQtwo$ are obtained, the estimate of $\bZ_t$ can be found via
a general linear regression analysis, since
\[
\rmvec(\bX_t)=(\bQtwo \otimes \bQone)
\rmvec(\bZ_{t})+\rmvec(\bE_t).
\]
Together with the orthonormal properties of both $\widehat{\bQ}_1$ and
$\widehat{\bQ}_2$ and the properties of Kronecker product, it follows that
\[
\widehat{\bZ}_t = \widehatbQone' \bX_t \widehatbQtwo.
\]
Let $\bS_t$ be the dynamic signal part of $\bX_t$, that is,
$\bS_t=\bR \bF_t \bC'=\bQ_1 \bZ_t \bQ_2'$. Then a natural
estimator of $\bS_t$ is given by,
\begin{equation}\label{S}
\widehat{\bS}_t = \widehat{\bQ}_1 \widehat{\bQ}_1' \bX_t \widehat{\bQ}_2 \widehat{\bQ}_2'.
\end{equation}
\noindent{\bf Remark 6:} Theoretically any $h_0$ can be used to estimate the
loading spaces, as long as one of the $\bOmega_{x,ij}(h)$ is of full rank for $i,j=1,\ldots, p_2$ and $h=1,\ldots, h_0$. Although they converge at the same rate, the estimate from the lag where the autocorrelation maximizes is most efficient. We demonstrate the
impact of $h_0$ in the matrix factor model setting in Section 4.
As the autocorrelation is often at its strongest at small time lags, a relatively small $h_0$ is usually adopted \citep{lam2011estimation,chang2015high,liu2016regime}. Larger $h_0$ strengthens the signal, but also adds more noises
in the estimation of $\bM_i$.

\noindent{\bf Remark 7:} $K$-fold
cross-validation procedures can be adopted for model selection between matrix-valued factor models in (\ref{eqn:2dmodel}) and vector-valued factor models in (\ref{eqn:1dmodel}), and among the models with
different number of factors. Specifically, we first partition the data $D$
into $k$ subsets $D_1, \ldots, D_k$, and fit a factor model
with each of the $D\backslash D_k$ sets.
 Then we use the estimated loading spaces, together with the data in $D_k$ to
obtain the dynamic signal process $\bS_t$ for $D_k$, and obtain out-of-sample
residuals. Residual sum of squares (RSS) of the $K$ folds is then adopted for
model comparison. Rolling-validation which uses only the data before the
block for estimation can be used as well.

\subsection{Theoretical Properties of the Estimator}
\label{sec:thm}

In this section, we study the asymptotic properties of the estimators
under the setting that all $T$, $p_1$ and $p_2$ grow to infinity while
$k_1$ and $k_2$ are being fixed. In the following, for any matrix
$\bY$, we use ${\rm rank} (\bY)$, $\| \bY \|_2$, $\| \bY \|_F$, $\| \bY \|_{\min}$, and $\sigma_j(\bY)$ to
denote the rank, the spectral norm, the Frobenius norm, the smallest nonzero
singular value and the $j$-th largest singular value of $\bY$. When $\bY$ is a square matrix, we denote by
$\rmtr(\bY)$, $\lambda_{\max}(\bY)$ and $\lambda_{\min}(\bY)$ the
trace, maximum and minimum eigenvalues of $\bY$, respectively. We write
$a \asymp b$ when $a = O(b)$ and $b = O(a)$. Define
\begin{equation*}
\bSigma_{f}(h) = \frac{1}{T-h}\sum_{t=1}^{T-h} \Cov \left(\rmvec
  (\bF_t), \rmvec(\bF_{t+h}) \right), \ \ \
\mbox{and} \ \ \
\bSigma_e = {\rm Var}(\rmvec(\bE_t)).
\end{equation*}

The following regularity conditions are imposed before we derive the
asymptotics of the estimators.

\vspace{.1in}
\noindent {\bf Condition 1.} The vector-valued process $\rmvec(\bF_t)$
is $\alpha$-mixing. Specifically, for
some $\gamma > 2$, the mixing coefficients
satisfy the condition $\sum_{h=1}^\infty \alpha(h)^{1-2/\gamma} <
\infty$, where
\[
\alpha(h) = \sup_i \sup_{A \in {\cF}^i_{-\infty}, B \in
  {\cF}^{\infty}_{i+h}} | P(A \cap B) - P(A) P(B) |,
\]
and ${\cF}_i^j$ is the $\sigma$-field generated by $\{\rmvec(\bF_t):i
\leq t \leq j\}$.
\vspace{.1in}

\noindent {\bf Condition 2.} Let $F_{t,ij}$ be the $ij$-th entry of
$\bF_t$. For any $i=1,\ldots, k_1$, $j=1, \ldots, k_2$, and
$t=1,\ldots, T$, we assume that $\expec (|F_{t,ij}|^{2\gamma})\leq C$, where
$C$ is a positive constant, and $\gamma$ is given in Condition
1. In addition, there exists an $1 \leq h \leq h_0$ such that
${\rm rank}(\bSigma_f(h)) \geq k$, and $\| \bSigma_f(h)\|_2 \asymp O(1) \asymp \sigma_k(
\bSigma_f(h))$, where $k=\max\{k_1, k_2\}$, as $p_1$ and $p_2$ go to infinity and $k_1$ and $k_2$ are
fixed. For $i=1,\ldots, k_1$ and $j=1,\ldots, k_2$, $\frac{1}{T-h}\sum_{t=1}^{T-h}\Cov(\bff_{t,i}, \bff_{t+h,i}) \neq \mathbf{0}$, $\frac{1}{T-h}\sum_{t=1}^{T-h}\Cov(\bff_{t,j\cdot}, \bff_{t+h,j\cdot}) \neq \mathbf{0}$.

\vspace{.1in}
The latent process does not have to be stationary, but needs to
satisfy the mixing condition (Condition 1) and boundedness condition
(Condition 2). They are weaker than stationarity. For example,
when a process has a deterministic seasonal variance component or with
a deterministic regime switching mechanism, it
is not stationary but mixing.
We do not need to assume any specific
model for the latent process $\{\bF_t\}$ since we only use the
eigen-analysis based on autocovariances of the observed process at
nonzero lags.

Under Condition 2, $\bSigma_f(h)$ may not be of full rank, which indicates that it is allowed to involve some extent of redundancy in the factors. Condition 2 also guarantees that there is no redundant row or column in $\bF_t$, and in each row or column there is at least one factor which has serial dependence at lag $h$. The greater dimension reduction can be achieved by a multi-term factor model in (\ref{multi}). We are currently investigating the properties and estimation procedures of
such a multi-term factor model.

\vspace{.1in}
\noindent {\bf Condition 3.}  Each element of $\bSigma_e$ remains
bounded as $p_1$ and $p_2$ increase to infinity.
\vspace{.1in}

In model \eqref{eqn:2dmodel}, $\bR \bF_t \bC'$ can be viewed as
the signal part of the observation $\bX_t$, and $\bE_t$ as the
noise. The signal strength, or the strength of the factors, can be
measured by the $L_2$-norm of the loading matrices which are assumed
to grow with the dimensions.

\vspace{.1in}
\noindent {\bf Condition 4.} There exist constants $\delta_1$ and
$\delta_2 \in [0,1]$ such that $\|\bR\|_2^2 \asymp p_1^{1-\delta_1}
\asymp \|\bR\|_{\min}^2$ and $\|\bC\|_2^2 \asymp p_2^{1-\delta_2}
\asymp \|\bC\|_{\min}^2$, as $p_1$ and $p_2$ go to infinity and $k_1$
and $k_2$ are fixed.

\vspace{.1in}

The rates $\delta_1$ and $\delta_2$ are called the strength for row
factors and the strength for column factors, respectively. They measure
the relative growth rate of the amount of information carried by the observed
process $\bX_t$ on the common factors as the dimensions
increase, with respect to the growth rate of the amount of noise. When
$\delta_i=0$, the factors are strong; when $\delta_i >0$, the factors
are weak, which means the information contained in $\bX_t$ on the
factors grows more slowly than the noises introduced as $p_i$
increases. For detailed discussion of factor strength, see
\cite{lam2012factor}.

\vspace{.1in}
\noindent {\bf Condition 5.} $\bM_i$ has $k_i$ distinct positive
eigenvalues for $i=1,2$.
\vspace{.1in}

As stated in Section \ref{sec:method}, only $\cM(\bQ_1)$ and
$\cM(\bQ_2)$ are uniquely determined, while $\bQ_1$ and $\bQ_2$ are
not. However, when the eigenvalues of $\bM_i$ are distinct, we can
uniquely define $\bQ_i$ as $\bQ_i=\{\bq_{i,1}, \cdots, \bq_{i,k_i}\}$, where
$\bq_{i,1}, \cdots, \bq_{i,k_i}$ are the unit eigenvectors of $\bM_i$ corresponding to
its $k_i$ largest eigenvalues $\{\lambda_{i,1} > \lambda_{i,2} \ldots >
\lambda_{i,k_i}\}$ which make $ \mathbf{1}' \bq_{i,1}$, $\bone'\bq_{i,2}$, $\ldots$, and $\bone'\bq_{i,k_i}$ all positive, for $i=1,2$.

The following theorems show the rate of convergence for estimators of
loading spaces and the eigenvalues.
\begin{theorem}
\label{thm:qrate}
Under Conditions 1-5 and $p_1^{\delta_1} p_2^{\delta_2} T^{-1/2} =
o(1)$, it holds that
\begin{equation*}
\| \widehat{\bQ}_i - \bQ_i \|_2 = O_p(p_1^{\delta_1} p_2^{\delta_2}
T^{-1/2}), \mbox{ for } i=1,2.
\end{equation*}
\end{theorem}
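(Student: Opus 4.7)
The plan is to reduce the convergence of $\widehat\bQ_i$ to a spectral-norm perturbation bound for $\widehat\bM_i$ versus $\bM_i$, then apply a Davis--Kahan $\sin\Theta$ inequality. By the symmetry of the construction (apply the same argument to the transposed series, swapping the roles of $\bR$ and $\bC$), it suffices to treat $i=1$. The argument has three steps: (a) compute the order of $\sigma_{k_1}(\bM_1)$ from the factor-strength condition; (b) bound $\|\widehat\bM_1-\bM_1\|_2$ via moment and mixing arguments applied to sums of sample autocovariances; (c) invoke Condition~5 (simple eigenvalues plus the sign normalization $\bone'\bq_{1,j}>0$) so that the eigenvector difference, and not just the projector difference, is controlled.

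\textbf{Step 1: magnitude of $\bM_1$.} Substituting $\bx_{t,j}=\bR\bF_t\bc_{j\cdot}'+\bepsilon_{t,j}$ into \eqref{eqn:omegayijdef} and using the white-noise assumption on $\bE_t$ yields $\bOmega_{x,ij}(h)=\bR\,\bPsi_{ij}(h)\,\bR'$, where $\bPsi_{ij}(h)=(T-h)^{-1}\sum_t\Cov(\bF_t\bc_{i\cdot}',\bF_{t+h}\bc_{j\cdot}')$ is $k_1\times k_1$, so that
\[
\bM_1 \;=\; \bR\,\Bigl(\sum_{h=1}^{h_0}\sum_{i,j=1}^{p_2}\bPsi_{ij}(h)\,\bR'\bR\,\bPsi_{ij}'(h)\Bigr)\,\bR'.
\]
Using $\|\bR\|_2^2\asymp\|\bR\|_{\min}^2\asymp p_1^{1-\delta_1}$ and $\|\bC\|_F^2\asymp\|\bC\|_2^2\asymp p_2^{1-\delta_2}$ (since $k_1,k_2$ are fixed), with Conditions~2 and~5 guaranteeing that the inner bracket has all $k_1$ nonzero singular values of the same order, I expect
\[
\|\bM_1\|_2 \;\asymp\; \sigma_{k_1}(\bM_1) \;\asymp\; p_1^{2(1-\delta_1)}\,p_2^{2(1-\delta_2)}.
\]

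\textbf{Step 2: concentration of $\widehat\bM_1$.} Decompose
\[
\widehat\bM_1-\bM_1 \;=\; \sum_{h,i,j}\Bigl[(\widehat\bOmega_{x,ij}(h)-\bOmega_{x,ij}(h))\widehat\bOmega_{x,ij}'(h)+\bOmega_{x,ij}(h)(\widehat\bOmega_{x,ij}'(h)-\bOmega_{x,ij}'(h))\Bigr],
\]
and split each $\widehat\bOmega_{x,ij}(h)-\bOmega_{x,ij}(h)$ into a signal--signal piece (bilinear in $\bF_t$), two signal--noise cross pieces, and a noise--noise piece. Each contributes an $O_p(T^{-1/2})$ factor via a standard $\alpha$-mixing variance bound for sample autocovariances (Conditions~1--3). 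Exploiting the low-rank identity $\bOmega_{x,ij}(h)=\bQ_1\bOmega_{zq,ij}(h)\bQ_1'$ so that the sum lives effectively in the $k_1$-dimensional range of $\bQ_1$, and weighting by $\|\bc_{i\cdot}\|\|\bc_{j\cdot}\|$ before summing through $\|\bC\|_F^2\asymp p_2^{1-\delta_2}$, I expect
\[
\|\widehat\bM_1-\bM_1\|_2 \;=\; O_p\!\bigl(T^{-1/2}\,p_1^{2-\delta_1}\,p_2^{2-\delta_2}\bigr).
\]

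\textbf{Step 3: perturbation of eigenvectors.} Condition~5 gives a positive gap between $\sigma_{k_1}(\bM_1)$ and $\sigma_{k_1+1}(\bM_1)=0$, and the sign convention $\bone'\bq_{1,j}>0$ aligns sample and population eigenvectors. Davis--Kahan then yields
\[
\|\widehat\bQ_1-\bQ_1\|_2 \;\le\; \frac{C\,\|\widehat\bM_1-\bM_1\|_2}{\sigma_{k_1}(\bM_1)} \;=\; O_p\!\bigl(p_1^{\delta_1}p_2^{\delta_2}T^{-1/2}\bigr),
\]
the assumption $p_1^{\delta_1}p_2^{\delta_2}T^{-1/2}=o(1)$ ensuring the perturbation is small enough for Weyl's inequality to preserve the spectral gap in $\widehat\bM_1$ and to absorb the factor of $\widehat\bOmega_{x,ij}$ (rather than $\bOmega_{x,ij}$) appearing in one of the cross terms of Step~2. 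The main obstacle is Step~2: the double sum has $p_2^2$ summands, so a na\"ive term-by-term Cauchy--Schwarz would inflate the rate by an extra factor of $p_2$. The delicate point is to invoke the $\bQ_1$-range factorization of each $\bOmega_{x,ij}(h)$ and aggregate the $\bc_{i\cdot}$-weights through $\|\bC\|_F$ rather than through $p_2$, so that the denominator $\sigma_{k_1}(\bM_1)$ exactly absorbs the signal scale and leaves only the declared rate $p_1^{\delta_1}p_2^{\delta_2}T^{-1/2}$.
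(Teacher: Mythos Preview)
Your proposal is correct and mirrors the paper's proof: the paper establishes Step~1 as Lemma~5, Step~2 as Lemmas~2--4 (with the same signal/cross/noise decomposition of $\widehat\bOmega_{x,ij}-\bOmega_{x,ij}$ and the same aggregation via $\sum_i\|\bc_{i\cdot}\|_2^2=\|\bC\|_F^2$ and Cauchy--Schwarz at the summed level rather than termwise), and then invokes the eigenvector perturbation Lemma~3 of \citet{lam2011estimation} in place of your Davis--Kahan. The rates you state for $\sigma_{k_1}(\bM_1)$ and $\|\widehat\bM_1-\bM_1\|_2$ match the paper's exactly, and the ratio gives $p_1^{\delta_1}p_2^{\delta_2}T^{-1/2}$ as claimed.
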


Concerning the impact of $\delta_i$'s, it is not surprising that the stronger
the factors are, the more useful information the observed process
carries and the faster the estimators converge.  More interestingly,
the strengths of row factors and column factors $\delta_1$ and
$\delta_2$ determine the rates together.  An increase in the strength
of row factors is able to improve the estimation of the column factors
loading space and vice versa.

When $p_1$ and $p_2$ are fixed, the convergence rate for estimating the
loading matrices are $\sqrt{T}$. If the loadings are strong ($\delta_i=0$),
the rate is also $\sqrt{T}$, since the signal is as strong as the noise,
and the increase in dimensions will
not affect the estimation of the loading spaces.
When $\delta_i$'s are not 0, the  noise increases faster than
useful information. In this case, increases in dimension
will dilute the information, resulting in less efficient estimators.

\begin{theorem}
\label{thm:eigval}
With Conditions 1-5 and
$p_1^{\delta_1} p_2^{\delta_2} T^{-1/2} = o(1)$, the eigenvalues $\{\widehat{\lambda}_{i,1}, \ldots, \widehat{\lambda}_{i,p_i}\}$ of $\widehat{\bM}_i$  which are sorted in descending order satisfy
\begin{eqnarray*}
|\widehat{\lambda}_{i,j} - \lambda_{i,j}| &=& O_p (p_1^{2-\delta_1} p_2^{2-\delta_2}
T^{-1/2}), \quad \mbox{for  } j=1,2,\ldots,k_i, \\
\mbox{and} \qquad \qquad \qquad |\widehat{\lambda}_{i,j}| &=& O_p (p_1^{2} p_2^{2}
T^{-1}), \quad \mbox{for  } j=k_i+1,\ldots,p_i,
\end{eqnarray*}
where $\lambda_{i,1} >\lambda_{i,2} \ldots >\lambda_{i,k_i}$ are eigenvalues of
$\bM_i$, for $i=1,2$.
\end{theorem}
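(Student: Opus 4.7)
The plan is to bound $\widehat{\bM}_i - \bM_i$ in two different ways, depending on whether we target the top $k_i$ or the remaining $p_i - k_i$ eigenvalues. Note first that $\bM_i$ has rank exactly $k_i$ by \eqref{eqn:Mexpression} and Condition~5, so $\lambda_{i,j} = 0$ for $j > k_i$; the two claims therefore correspond to genuinely different arguments.

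For $j \le k_i$, the approach is Weyl's inequality, $|\widehat{\lambda}_{i,j} - \lambda_{i,j}| \le \|\widehat{\bM}_i - \bM_i\|_2$, combined with the spectral-norm bound $\|\widehat{\bM}_i - \bM_i\|_2 = O_p(p_1^{2-\delta_1} p_2^{2-\delta_2} T^{-1/2})$, which is the same driving estimate behind Theorem~\ref{thm:qrate}. Writing $\bDel_{ij}(h) := \widehat{\bOmega}_{x,ij}(h) - \bOmega_{x,ij}(h)$ and expanding $\widehat{\bOmega}_{x,ij}(h)\widehat{\bOmega}_{x,ij}'(h) - \bOmega_{x,ij}(h)\bOmega_{x,ij}'(h)$, one obtains a cross-product piece linear in $\bDel_{ij}(h)$ and a quadratic piece. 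The linear piece dominates, its rate arising from the size of $\bOmega_{x,ij}(h)$ (driven by the factor strengths in Condition~4) multiplied by the $T^{-1/2}$ sample-autocovariance rate standard under $\alpha$-mixing and the moment/bounded-variance Conditions~1--3.

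For $j > k_i$, Weyl alone would yield only the suboptimal rate $O_p(p_1^{2-\delta_1}p_2^{2-\delta_2}T^{-1/2})$, so a finer argument is required. The key idea is to combine the Courant--Fischer inequality
\[
\widehat{\lambda}_{i,k_i+j}\ \le\ \lambda_j\bigl(\bP_\perp \widehat{\bM}_i \bP_\perp\bigr),\qquad \bP_\perp = \bI - \bQ_i\bQ_i',
\]
with the structural identity $\bOmega_{x,ij}(h) = \bQ_1 \bOmega_{zq,ij}(h) \bQ_1'$, which gives $\bP_\perp \bOmega_{x,ij}(h) = \bOmega_{x,ij}(h) \bP_\perp = \bzero$ and hence $\bP_\perp \bM_i \bP_\perp = \bzero$. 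Expanding $\bP_\perp(\widehat{\bOmega}_{x,ij}(h)\widehat{\bOmega}_{x,ij}'(h) - \bOmega_{x,ij}(h)\bOmega_{x,ij}'(h))\bP_\perp$ in $\bDel_{ij}(h)$, the two linear cross terms vanish under the sandwich, leaving only the purely quadratic piece
\[
\bP_\perp \widehat{\bM}_i \bP_\perp\ =\ \sum_{h=1}^{h_0}\sum_{i,j=1}^{p_2} \bP_\perp \bDel_{ij}(h)\bDel_{ij}'(h) \bP_\perp.
\]
Being quadratic in the sampling fluctuation $\bDel_{ij}(h)$, this upgrades one factor of $T^{-1/2}$ to $T^{-1}$; the assumption $p_1^{\delta_1}p_2^{\delta_2} T^{-1/2} = o(1)$ is precisely what makes $p_1^2 p_2^2 T^{-1}$ beat $p_1^{2-\delta_1}p_2^{2-\delta_2}T^{-1/2}$, so the improvement is genuine.

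The main obstacle is to establish $\bigl\|\sum_{h,i,j}\bDel_{ij}(h)\bDel_{ij}'(h)\bigr\|_2 = O_p(p_1^2 p_2^2 T^{-1})$ with the correct $p_1^2 p_2^2$ factor. Bounding each summand by $\|\bDel_{ij}(h)\|_2^2$ would be too lossy; instead one decomposes $\bx_{t,i} = \bs_{t,i} + \bepsilon_{t,i}$ (signal plus noise columns), so that each $\bDel_{ij}(h)$ splits into signal--signal, signal--noise, noise--signal, and noise--noise pieces, and one then controls the aggregated outer-product sum in operator norm separately for each of the four types using $\alpha$-mixing and the moment bounds of Conditions~1--3. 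Once this quadratic rate is in hand, the second claim follows directly.
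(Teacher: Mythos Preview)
Your proposal is correct, and in some ways cleaner than the paper's own argument, but the route is genuinely different.

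For $j\le k_i$ the paper does \emph{not} invoke Weyl's inequality; instead it writes $\widehat\lambda_{i,j}-\lambda_{i,j}=\widehat\bq_{i,j}'\widehat\bM_i\widehat\bq_{i,j}-\bq_{i,j}'\bM_i\bq_{i,j}$ and expands into five pieces $I_1,\ldots,I_5$, then controls each with the eigenvector rate $\|\widehat\bq_{i,j}-\bq_{i,j}\|_2=O_p(p_1^{\delta_1}p_2^{\delta_2}T^{-1/2})$ from Theorem~\ref{thm:qrate} together with $\|\widehat\bM_i-\bM_i\|_2$ and $\|\bM_i\|_2$. Your Weyl argument is shorter and uses only $\|\widehat\bM_i-\bM_i\|_2=O_p(p_1^{2-\delta_1}p_2^{2-\delta_2}T^{-1/2})$, bypassing Theorem~\ref{thm:qrate} entirely. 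For $j>k_i$ the paper again works at the level of the estimated eigenvectors: it decomposes $\widehat\lambda_{i,j}=\widehat\bq_{i,j}'\widehat\bM_i\widehat\bq_{i,j}$ into $K_1+K_2+K_3$, where $K_1$ is the quadratic piece $\widehat\bq_{i,j}'\bigl(\sum_h\sum_{l,m}\bDel_{lm}(h)\bDel_{lm}'(h)\bigr)\widehat\bq_{i,j}$ and $K_2,K_3$ are handled via $\|\widehat\bB_i-\bB_i\|_2=O_p(p_1^{\delta_1}p_2^{\delta_2}T^{-1/2})$ for the complementary eigenvector block $\bB_i$. Your Cauchy interlacing with the deterministic projection $\bP_\perp=\bI-\bQ_i\bQ_i'$ is more elegant: the structural identity $\bP_\perp\bOmega_{x,lm}(h)=\bOmega_{x,lm}(h)\bP_\perp=\bzero$ kills the linear cross terms outright, so you never need eigenvector perturbation for the (degenerate) null space.

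One correction to your final paragraph: bounding $\bigl\|\sum_{h,l,m}\bDel_{lm}(h)\bDel_{lm}'(h)\bigr\|_2\le\sum_{h,l,m}\|\bDel_{lm}(h)\|_2^2$ is \emph{not} too lossy. The paper's Lemma~3 shows precisely $\sum_{l,m}\|\bDel_{lm}(h)\|_2^2=O_p(p_1^2p_2^2T^{-1})$, which is the target rate. The signal--noise split you describe is indeed how that lemma is proved (via Lemma~2), but it enters one level down---to control the aggregate $\sum_{l,m}\|\bDel_{lm}(h)\|_2^2$, not to avoid the triangle inequality on $\sum\bDel\bDel'$. So your last step is simpler than you anticipate: the crude operator-norm triangle inequality plus the existing Lemma~3 already finishes the job.
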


Theorem \ref{thm:eigval} shows that the estimators for nonzero
eigenvalues of $\bM_i$ converge more slowly than those for the zero
eigenvalues. It provides the theoretical support for the ratio
estimator proposed in Section \ref{subsec:est}. 

The following theorem demonstrates the theoretical properties of
the estimator $\widehat{\bS}_t$ in (\ref{S}).

\begin{theorem} If Conditions 1-5 hold, $p_1^{\delta_1} p_2^{\delta_2} T^{-1/2} = o(1)$, and $\|\bSigma_e \|_2$ is bounded, we have
\begin{eqnarray*}
p_1^{-1/2}p_2^{-1/2} \| \widehat{\bS}_t-\bS_t\|_2 &=&O_p( a_1\|\widehat{\bQ}_1 -{\bQ}_1\|_2) +O_p( a_2\|\widehat{\bQ}_2 -{\bQ}_2\|_2) +O_p(p_1^{-1/2}p_2^{-1/2})\\
&=&O_p (p_1^{\delta_1/2}
p_2^{\delta_2/2} T^{-1/2} +p_1^{-1/2}p_2^{-1/2}),
\end{eqnarray*}
where $a_1 \asymp a_2 \asymp  O(p_1^{-\delta_1/2} p_2^{-\delta_2/2}T^{-1/2})$.
\end{theorem}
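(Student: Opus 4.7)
The plan is to decompose $\widehat{\bS}_t - \bS_t$ into two signal projection errors and one residual noise projection, then bound each term using Theorem~1 (for $\widehat{\bQ}_i - \bQ_i$), Condition~4 (for signal strength), and the boundedness of $\|\bSigma_e\|_2$. Writing $\bP_i = \bQ_i \bQ_i'$ and $\widehat{\bP}_i = \widehat{\bQ}_i \widehat{\bQ}_i'$, the identities $\bS_t = \bP_1 \bS_t \bP_2$ (since $\bS_t = \bQ_1 \bZ_t \bQ_2'$) and $\widehat{\bS}_t = \widehat{\bP}_1(\bS_t + \bE_t)\widehat{\bP}_2$ yield
\[
\widehat{\bS}_t - \bS_t \;=\; (\widehat{\bP}_1 - \bP_1)\, \bS_t\, \widehat{\bP}_2 \;+\; \bP_1\, \bS_t\, (\widehat{\bP}_2 - \bP_2) \;+\; \widehat{\bP}_1\, \bE_t\, \widehat{\bP}_2,
\]
so it suffices to bound the spectral norm of each of the three pieces.

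For the two signal projection terms, I would first establish $\|\bS_t\|_2 \leq \|\bR\|_2\|\bF_t\|_2\|\bC\|_2 = O_p(p_1^{(1-\delta_1)/2} p_2^{(1-\delta_2)/2})$ using Condition~4 and the bounded second moments in Condition~2 (with $k_1, k_2$ fixed, $\|\bF_t\|_2 = O_p(1)$). The telescoping identity $\widehat{\bP}_i - \bP_i = (\widehat{\bQ}_i - \bQ_i)\widehat{\bQ}_i' + \bQ_i(\widehat{\bQ}_i - \bQ_i)'$ together with $\|\widehat{\bQ}_i\|_2 = \|\bQ_i\|_2 = 1$ gives $\|\widehat{\bP}_i - \bP_i\|_2 \leq 2\|\widehat{\bQ}_i - \bQ_i\|_2$. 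Multiplying these two bounds and dividing by $\sqrt{p_1 p_2}$, the first signal term contributes an $a_1 \|\widehat{\bQ}_1 - \bQ_1\|_2$ error with $a_1 \asymp p_1^{-\delta_1/2}p_2^{-\delta_2/2}$, which by Theorem~1 equals $O_p(p_1^{\delta_1/2} p_2^{\delta_2/2} T^{-1/2})$; the second signal term is handled symmetrically.

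The noise term is where the main obstacle lies. Use $\|\widehat{\bP}_1 \bE_t \widehat{\bP}_2\|_2 = \|\widehat{\bQ}_1' \bE_t \widehat{\bQ}_2\|_2 \leq \|\widehat{\bQ}_1' \bE_t \widehat{\bQ}_2\|_F$ and expand
\[
\widehat{\bQ}_1' \bE_t \widehat{\bQ}_2 \;=\; \bQ_1' \bE_t \bQ_2 \;+\; (\widehat{\bQ}_1-\bQ_1)' \bE_t \bQ_2 \;+\; \bQ_1' \bE_t (\widehat{\bQ}_2-\bQ_2) \;+\; (\widehat{\bQ}_1-\bQ_1)' \bE_t (\widehat{\bQ}_2-\bQ_2).
\]
The leading piece $\bQ_1' \bE_t \bQ_2$ has $k_1 k_2 = O(1)$ entries with deterministic projection vectors and variance controlled by $\|\bSigma_e\|_2$, so it is $O_p(1)$ in Frobenius norm. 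The three cross terms require care because $\widehat{\bQ}_i$ depends on $\bE_t$; I would bound, for instance, $\|(\widehat{\bQ}_1-\bQ_1)' \bE_t \bQ_2\|_F \leq \|\widehat{\bQ}_1-\bQ_1\|_2 \cdot \|\bE_t \bQ_2\|_F$ with $\expec\|\bE_t \bQ_2\|_F^2 = \rmtr(\bQ_2' \expec(\bE_t' \bE_t) \bQ_2) \leq k_2 p_1 \|\bSigma_e\|_2$, so $\|\bE_t \bQ_2\|_F = O_p(\sqrt{p_1})$. Combining this with Theorem~1 and the standing assumption $p_1^{\delta_1} p_2^{\delta_2} T^{-1/2} = o(1)$ shows that each cross term is dominated by the two main rates after the $\sqrt{p_1 p_2}$ normalization. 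Therefore $\|\widehat{\bP}_1 \bE_t \widehat{\bP}_2\|_2 = O_p(1)$, which contributes the $O_p((p_1 p_2)^{-1/2})$ term. Adding the three bounds yields the claimed rate.
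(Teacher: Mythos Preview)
Your decomposition and handling of the two signal–projection pieces are exactly what the paper does: it writes $\widehat{\bS}_t-\bS_t=I_1+I_2+I_3$ with $I_1=\widehat{\bQ}_1\widehat{\bQ}_1'\bQ_1\bZ_t\bQ_2'(\widehat{\bQ}_2\widehat{\bQ}_2'-\bQ_2\bQ_2')$, $I_2=(\widehat{\bQ}_1\widehat{\bQ}_1'-\bQ_1\bQ_1')\bQ_1\bZ_t\bQ_2'$, $I_3=\widehat{\bQ}_1\widehat{\bQ}_1'\bE_t\widehat{\bQ}_2\widehat{\bQ}_2'$, and bounds $\|I_1\|_2,\|I_2\|_2\le 2\|\bZ_t\|_2\|\widehat{\bQ}_i-\bQ_i\|_2$ just as you do. Your reading $a_i\asymp p_1^{-\delta_1/2}p_2^{-\delta_2/2}$ matches the paper's proof; the extra $T^{-1/2}$ in the displayed $a_i$ is a slip in the statement.

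Where you differ is the noise term. The paper disposes of $I_3$ in one line, writing $\|I_3\|_2\le\|(\widehat{\bQ}_2'\otimes\widehat{\bQ}_1')\rmvec(\bE_t)\|_2\le k_1k_2\|\bSigma_e\|_2=O_p(1)$; this effectively treats $\widehat{\bQ}_1,\widehat{\bQ}_2$ as independent of $\bE_t$ and reads off a moment bound. You are more honest in expanding $\widehat{\bQ}_1'\bE_t\widehat{\bQ}_2$ around $\bQ_1'\bE_t\bQ_2$ to isolate the dependence, but the bound you then use for the cross pieces is too coarse to close. For instance, $\|(\widehat{\bQ}_1-\bQ_1)'\bE_t\bQ_2\|_F\le\|\widehat{\bQ}_1-\bQ_1\|_2\cdot\|\bE_t\bQ_2\|_F$ with $\|\bE_t\bQ_2\|_F=O_p(\sqrt{p_1})$ gives, after the $\sqrt{p_1p_2}$ normalization, $O_p(p_1^{\delta_1}p_2^{\delta_2-1/2}T^{-1/2})$; this is \emph{not} in general dominated by $p_1^{\delta_1/2}p_2^{\delta_2/2}T^{-1/2}+(p_1p_2)^{-1/2}$ (take $\delta_1=1$, $\delta_2=0$, $p_2=\log p_1$, $T=p_1^{2+\epsilon}$ with small $\epsilon>0$: the assumption $p_1^{\delta_1}p_2^{\delta_2}T^{-1/2}=o(1)$ holds, yet the cross term beats both main rates). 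So your assertion that ``each cross term is dominated by the two main rates'' is where the argument breaks; the paper's proof simply does not confront this and asserts $\|I_3\|_2=O_p(1)$ directly.
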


The theorem shows that, in order to estimate the signal $\bS_t$ consistently,
dimensions $p_1$ and $p_2$ must go to infinity, in order to have sufficient
information on $\bS_t$ at  each time point $t$.

Since $\bQ_i$ is not identifiable in model \eqref{eqn:2dmodel},
another measure to quantify the accuracy of factor loading matrices
estimation is the distance between $\cM(\bQ_i)$ and
$\cM(\widehat{\bQ}_i)$. For two orthogonal matrices $\bO_1$ and
$\bO_2$ of sizes $p \times q_1$ and $p \times q_2$, define
\begin{equation*}
\cD(\bO_1,\bO_2) = \left(1 - \frac{1}{\max(q_1,q_2)} \rmtr(\bO_1 \bO_1'
\bO_2 \bO_2') \right)^{1/2}.
\end{equation*}
Then $\cD(\bO_1,\bO_2)$ is a quantity between $0$ and $1$. It is equal
to $0$ if the column spaces of $\bO_1$ and $\bO_2$ are the same and
$1$ if they are orthogonal.

\begin{theorem}
If Conditions 1-5 hold and $p_1^{\delta_1}p_2^{\delta_2} T^{-1/2} =
o(1)$, we have
\begin{equation*}
\cD (\widehat{\bQ}_i, \bQ_i)= O_p(p_1^{\delta_1}p_2^{\delta_2} T^{-1/2}), \, \mbox{ for } i=1,2.
\end{equation*}
\end{theorem}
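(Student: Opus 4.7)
The plan is to show that this theorem follows almost immediately from Theorem~\ref{thm:qrate} once we translate the subspace distance $\cD$ into a statement about projection matrices. Because $k_i$ is fixed and both $\widehat{\bQ}_i$ and $\bQ_i$ have orthonormal columns, the finite-rank structure lets us bound the Frobenius-norm distance between the associated projections by the spectral-norm distance of the bases, and then invoke the rate already established in Theorem~\ref{thm:qrate}.

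First I would set $\bP_i = \bQ_i\bQ_i'$ and $\widehat{\bP}_i = \widehat{\bQ}_i \widehat{\bQ}_i'$, and use the identity
\[
\|\widehat{\bP}_i - \bP_i\|_F^2 \;=\; \rmtr(\widehat{\bP}_i) + \rmtr(\bP_i) - 2\rmtr(\widehat{\bP}_i \bP_i) \;=\; 2k_i - 2\rmtr(\widehat{\bQ}_i\widehat{\bQ}_i' \bQ_i \bQ_i'),
\]
which rewrites $\cD(\widehat{\bQ}_i,\bQ_i)^2 = \frac{1}{2k_i}\|\widehat{\bP}_i - \bP_i\|_F^2$. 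Since $\widehat{\bP}_i - \bP_i$ has rank at most $2k_i$, one has $\|\widehat{\bP}_i - \bP_i\|_F \le \sqrt{2k_i}\,\|\widehat{\bP}_i - \bP_i\|_2$, giving $\cD(\widehat{\bQ}_i,\bQ_i) \le \|\widehat{\bP}_i - \bP_i\|_2$.

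Next I would estimate $\|\widehat{\bP}_i - \bP_i\|_2$ in terms of $\|\widehat{\bQ}_i - \bQ_i\|_2$. Writing $\widehat{\bP}_i - \bP_i = (\widehat{\bQ}_i - \bQ_i)\widehat{\bQ}_i' + \bQ_i(\widehat{\bQ}_i - \bQ_i)'$ and using $\|\widehat{\bQ}_i\|_2 = \|\bQ_i\|_2 = 1$, we obtain
\[
\|\widehat{\bP}_i - \bP_i\|_2 \;\le\; 2\,\|\widehat{\bQ}_i - \bQ_i\|_2.
\]
Applying Theorem~\ref{thm:qrate} then yields $\cD(\widehat{\bQ}_i,\bQ_i) = O_p(p_1^{\delta_1} p_2^{\delta_2} T^{-1/2})$, which is exactly the claim.

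There is no genuine obstacle here, since all of the analytic work (concentration bounds on $\widehat{\bM}_i - \bM_i$, perturbation of eigenvectors, use of Conditions~1--5) has already been done in the proof of Theorem~\ref{thm:qrate}. The only point to be careful about is the identification convention: Theorem~\ref{thm:qrate} fixes the sign of each column of $\bQ_i$ via the constraint $\bone'\bq_{i,j} > 0$, but since $\cD$ depends only on the column spaces $\cM(\bQ_i)$ and $\cM(\widehat{\bQ}_i)$, the bound via $\|\widehat{\bQ}_i - \bQ_i\|_2$ transfers directly to $\cD$ without any alignment issue.
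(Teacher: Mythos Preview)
Your argument is correct. The identity $\cD(\widehat{\bQ}_i,\bQ_i)^2 = \frac{1}{2k_i}\|\widehat{\bP}_i-\bP_i\|_F^2$, the rank bound $\|\widehat{\bP}_i-\bP_i\|_F \le \sqrt{2k_i}\,\|\widehat{\bP}_i-\bP_i\|_2$, and the decomposition $\widehat{\bP}_i-\bP_i = (\widehat{\bQ}_i-\bQ_i)\widehat{\bQ}_i' + \bQ_i(\widehat{\bQ}_i-\bQ_i)'$ are all valid, and together they cleanly reduce the claim to Theorem~\ref{thm:qrate}.

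The paper's own proof is a one-line appeal to the argument of Theorem~3 in \citet{liu2016regime}, so no detailed comparison is possible from the text itself; what can be said is that both routes ultimately rest on the eigenvector rate $\|\widehat{\bQ}_i-\bQ_i\|_2 = O_p(p_1^{\delta_1}p_2^{\delta_2}T^{-1/2})$ of Theorem~\ref{thm:qrate}. Your version has the advantage of being fully self-contained: it makes explicit the elementary link between $\cD$ and the projection-matrix difference, whereas the paper outsources that step to an external reference. Your closing remark about the sign convention is also apt, since $\cD$ is invariant to the particular orthonormal basis chosen for $\cM(\bQ_i)$.
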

Theorem 4 shows that the error to estimate loading spaces is on the same order as that for the estimated $\bQ_i$'s.

\section{Simulation}
\label{sec:sim}

In this section, we study the numerical performance of the proposed
matrix-valued approach. In all simulations, the observed data
$\bX_t$'s are simulated according to model \eqref{eqn:2dmodel},
\begin{equation*}
\bX_t = \bR \bF_t \bC' + \bE_t, \quad t=1,2,\ldots,T.
\end{equation*}
We choose the dimensions of the latent factor process $\bF_t$ to be
$k_1 = 3$ and $k_2=2$. The entries of $\bF_t$ are simulated as
$k_1k_2$ independent processes with noise $N(0,1)$ where the types and
coefficients of the processes will be specified later. The entries of
$\bR$ and $\bC$ are independently sampled from the uniform
distribution $U(-p_i^{-\delta_i/2},p_i^{-\delta_i/2})$ for $i=1,2$,
respectively. The error process $\bE_t$ is a white noise process with
mean $\mathbf{0}$ and a Kronecker product covariance structure, that
is, $\Cov(\rmvec(\bE_t)) = \bGamma_2 \otimes \bGamma_1$, where
$\bGamma_1$ and $\bGamma_2$ are of sizes $p_1 \times p_1$ and $p_2
\times p_2$, respectively. Both $\bGamma_1$ and $\bGamma_2$ have
values $1$ on the diagonal entries and $0.2$ on the off-diagonal
entries. For all simulations, the reported results are based on $200$
simulation runs.

We first study the performance of our proposed approach on estimating
the loading spaces. In this part, the $k_1k_2=6$ latent factors are
independent AR(1) processes with the AR coefficients
$[-0.5~0.6;~0.8~-0.4;~0.7~0.3]$. We consider three pairs of
$(\delta_1,\delta_2)$ combinations: $(0.5,0.5)$, $(0.5,0)$ and
$(0,0)$. For each pair of $\delta_1$ and $\delta_2$, the two
dimensions $(p_1,p_2)$ are chosen to be $(20,20)$, $(20,50)$ and
$(50,50)$. The sample size $T$ is selected as $0.5p_1p_2$, $p_1p_2$,
and $2p_1p_2$. We take $h_0=1$ since it is sufficient for AR(1) model
as will be shown later.

Table \ref{table:simmatdist} shows the results for estimating the
loading spaces ${\cal M}(\bQ_1)$ and ${\cal M}(\bQ_2)$. The accuracies are measured by
$\cD(\widehat{\bQ}_1,\bQ_1)$ and $\cD(\widehat{\bQ}_2,\bQ_2)$ using
the correct $k_1$ and $k_2$, respectively. The results show that with
stronger signals and more data sample points, the approach increases the
estimation accuracies. Moreover, increasing the strength of one
loading matrix can improve the estimation accuracies for both loading
spaces.

With the same simulated data, we compare the proposed matrix-valued
approach and the vector-valued approach in \citet{lam2012factor}
through the estimation accuracy of the total loading matrix $\bQ =
\bQtwo \otimes \bQone$. In what follows, the subscripts mat and vec
denote our approach and \citet{lam2012factor}'s method,
respectively. The loading space $\widehat{\bQ}_{\rm mat}$ is computed
as $\widehat{\bQ}_{\rm mat} = \widehat{\bQ}_2 \otimes \widehat{\bQ}_1$
once we obtain estimates of $\widehat{\bQ}_1$ and $\widehat{\bQ}_2$
through our approach. For the vector-valued approach, we apply
\citet{lam2012factor}'s method to the observations $\{\rmvec(\bX_t),
t=1,2,\ldots,T\}$ to obtain $\widehat{\bQ}_{\rm vec}$. Table
\ref{table:simcompdist} presents the results for the estimation
accuracies of $\bQ$ measured by $\cD_{\rm vec}(\widehat{\bQ},\bQ)$ and
$\cD_{\rm mat}(\widehat{\bQ},\bQ)$. It shows that the matrix approach
efficiently improves the estimation accuracy over the vector-valued
approach.

\begin{table}
\small
\begin{center}
\begin{tabular}{|cc|cc|cc|cc|cc|cc|}
\hline
&  &  & & \multicolumn{2}{c|}{$T=.5*p_1*p_2$}
& \multicolumn{2}{c|}{$T=p_1*p_2$}
& \multicolumn{2}{c|}{$T=2*p_1*p_2$}  \\ \hline
$\delta_1$ & $\delta_2$& $p_1$& $p_2$ & $\cD(\widehat{\bQ}_1,\bQ_1)$ &  $\cD(\widehat{\bQ}_2,\bQ_2)$ & $\cD(\widehat{\bQ}_1,\bQ_1)$ &  $\cD(\widehat{\bQ}_2,\bQ_2)$ & $\cD(\widehat{\bQ}_1,\bQ_1)$ &  $\cD(\widehat{\bQ}_2,\bQ_2)$  \\ \hline
 0.5 & 0.5 &  20 & 20&5.96(0.19)&7.12(0.03)&5.80(0.07)&7.09(0.01)&5.73(0.04)&7.08(0.01) \\
 & &  20 & 50&5.87(0.15)&7.07(0.02)&5.77(0.04)&7.05(0.01)&5.74(0.02)&7.04(0.01) \\
 & &  50 & 50&6.26(0.56)&7.05(0.01)&5.73(0.13)&7.04(0.00)&5.61(0.03)&7.03(0.00) \\ \hline
 0.5 & 0 &  20 & 20&5.36(0.41)&5.42(2.22)&4.27(1.13)&1.66(1.70)&1.52(0.75)&0.54(0.17) \\
 & &  20 & 50&5.02(0.67)&5.15(1.61)&1.82(0.77)&1.32(0.60)&0.54(0.18)&0.53(0.17) \\
 & &  50 & 50&3.68(0.48)&3.44(1.23)&1.31(0.20)&0.65(0.19)&0.51(0.07)&0.28(0.08) \\ \hline
 0 & 0 &  20 & 20&0.55(0.16)&0.44(0.10)&0.36(0.08)&0.31(0.06)&0.24(0.04)&0.22(0.04) \\
 & &  20 & 50&0.25(0.06)&0.36(0.07)&0.16(0.03)&0.26(0.05)&0.10(0.02)&0.18(0.03) \\
 & &  50 & 50&0.13(0.02)&0.12(0.02)&0.09(0.01)&0.08(0.01)&0.06(0.01)&0.06(0.01) \\ \hline
\end{tabular}
\end{center}
\caption{Means and standard deviations (in parentheses) of
  $\cD(\widehat{\bQ}_i,\bQ_i)$, $i=1,2$, over 200 simulation runs. For
  ease of presentation, all numbers in this table are the true numbers
  multiplied by 10.}
\label{table:simmatdist}
\end{table}

\begin{table}
\small
\begin{center}
\begin{tabular}{|cc|cc|cc|cc|cc|cc|cc|}
\hline
&  &  & & \multicolumn{2}{c|}{$T=.5*p_1*p_2$} & \multicolumn{2}{c|}{$T=p_1*p_2$} & \multicolumn{2}{c|}{$T=2*p_1*p_2$} \\ \hline
$\delta_1$ &  $\delta_2$& $p_1$ & $p_2$
& $\cD_{\rm vec}(\widehat{\bQ},\bQ)$ & $\cD_{\rm mat}(\widehat{\bQ},\bQ)$
& $\cD_{\rm vec}(\widehat{\bQ},\bQ)$ & $\cD_{\rm mat}(\widehat{\bQ},\bQ)$
& $\cD_{\rm vec}(\widehat{\bQ},\bQ)$ & $\cD_{\rm mat}(\widehat{\bQ},\bQ)$\\
\hline
 0.5 & 0.5 &  20 & 20&8.75(0.17)&8.26(0.07)&8.24(0.18)&8.19(0.03)&7.62(0.17)&8.16(0.02) \\
 & &  20 & 50&8.72(0.10)&8.20(0.06)&8.40(0.09)&8.15(0.01)&7.92(0.16)&8.13(0.01) \\
 & &  50 & 50&8.51(0.14)&8.34(0.22)&7.62(0.14)&8.13(0.05)&6.81(0.06)&8.09(0.01) \\ \hline
 0.5 & 0 &  20 & 20&6.40(0.29)&7.19(1.13)&5.50(0.31)&4.66(1.45)&4.37(0.45)&1.64(0.72) \\
 & &  20 & 50&5.64(0.24)&6.75(1.13)&4.75(0.35)&2.30(0.80)&3.37(0.45)&0.78(0.20) \\
 & &  50 & 50&5.07(0.10)&4.92(0.94)&4.46(0.29)&1.47(0.23)&2.73(0.46)&0.59(0.08) \\ \hline
 0 & 0 &  20 & 20&3.64(0.23)&0.71(0.16)&2.77(0.16)&0.48(0.08)&2.07(0.13)&0.33(0.04) \\
 & &  20 & 50&2.84(0.18)&0.44(0.07)&2.13(0.10)&0.30(0.05)&1.56(0.07)&0.21(0.03) \\
 & &  50 & 50&1.85(0.10)&0.18(0.02)&1.34(0.06)&0.12(0.01)&0.97(0.04)&0.09(0.01) \\ \hline
\end{tabular}
\end{center}
\caption{Means and standard deviations (in parentheses) of
  $\cD(\widehat{\bQ},\bQ)$ over 200 replicates. For ease of
  presentation, all numbers are the true numbers multiplied by 10.}
\label{table:simcompdist}
\end{table}

We next demonstrate the performance of the matrix-valued approach on
estimating the number of factors, $k_1$ and $k_2$. The data are the
same as the data in Table \ref{table:simmatdist} with
$\delta_1=\delta_2=0$ and hence the true rank pair is $(3,2)$. Table
\ref{table:simmatfreq} shows the relative frequencies of estimated
rank pairs over $200$ simulation runs. The four pairs $(2,1)$,
$(2,2)$, $(3,1)$ and $(3,2)$ have high appearances in all of the
combinations of $p_1$, $p_2$ and $T$. The row for the true rank pair
$(3,2)$ is highlighted. It shows that the relative frequency of
correctly estimating the true rank pair improves with increasing
sample size $T$. Table \ref{table:simcompfreq} shows a comparison
between the matrix and vector-valued approaches on estimating the
total number of latent factors $k=k_1k_2$. The column with the true
rank $k=6$ is highlighted. The results show that the two approaches
have similar performance when the sample size $T$ is large. For
smaller $T$, the probability of the matrix-valued approach to select
the rank pair $(3,1)$ is high and hence the frequency of estimating
the true rank $k$ decreases.

\begin{table}
\small
\centering
\begin{tabular}{|c|ccc|ccc|ccc|}
\hline
 & \multicolumn{3}{c|}{$p_1=20$,   $p_2=20$} & \multicolumn{3}{c|}{$p_1=20$,   $p_2=50$} & \multicolumn{3}{c|}{$p_1=50$,   $p_2=50$}\\ \hline
$(\hat{k}_1,\hat{k}_2) $ & $T=.5p$ & $T=p$ & $T=2p$ & $T=.5p$ & $T=p$ & $T=2p$ & $T=.5p$ & $T=p$ & $T=2p$\\ \hline
(2,1) & 0.2 & 0.055 & 0 & 0.32 & 0.005 & 0 & 0 & 0 & 0 \\
(2,2) & 0.055 & 0.04 & 0 & 0.025 & 0.005 & 0 & 0 & 0 & 0 \\
(3,1) & 0.19 & 0.215 & 0.01 & 0.47 & 0.325 & 0.005 & 0.005 & 0 & 0 \\
\rowcolor[gray]{0.8}
(3,2) & 0.365 & 0.66 & 0.985 & 0.17 & 0.665 & 0.995 & 0.995 & 1 & 1 \\
Others&0.19&0.03&0.005&0.015&0&0&0&0&0 \\ \hline
\end{tabular}
\caption{Relative frequency of estimated rank pair
  $(\hat{k}_1,\hat{k}_2)$ over $200$ runs. The row with the true rank pair
  $(3,2)$ is highlighted. Here $p=p_1p_2$.}
\label{table:simmatfreq}
\end{table}

\begin{table}
\small
\centering
\begin{tabular}{|cc|c|cc|cc|cc|cc|>{\columncolor[gray]{0.8}}c>{\columncolor[gray]{0.8}}c|cc|}
\hline
& & & \multicolumn{2}{c|}{$\hat{k}=1$} &\multicolumn{2}{c|}{$\hat{k}=2$} &\multicolumn{2}{c|}{$\hat{k}=3$} &\multicolumn{2}{c|}{$\hat{k}=4$}  &\multicolumn{2}{c|}{\cellcolor[gray]{0.8}{$\hat{k}=6$}} &\multicolumn{2}{c|}{Others} \\ \hline
$p_1$ & $p_2$ & $T$ & vec & mat & vec & mat & vec & mat & vec & mat & vec & mat & vec & mat \\ \hline
20&20&$.5p$&0.25&0.125&0.33&0.22&0.035&0.19&0.015&0.07&0.345&0.365&0.025&0.03 \\
&&$p$&0.055&0.02&0.105&0.06&0&0.215&0&0.045&0.83&0.66&0.01&0 \\
&&$2p$&0&0&0.005&0&0&0.01&0&0.005&0.995&0.985&0&0 \\ \hline
20&50&$.5p$&0.03&0.015&0.62&0.32&0&0.47&0&0.025&0.34&0.17&0.01&0 \\
&&$p$&0&0&0.14&0.005&0&0.325&0&0.005&0.86&0.665&0&0 \\
&&$2p$&0&0&0&0&0&0.005&0&0&1&0.995&0&0 \\ \hline
50&50&$.5p$&0.07&0&0&0&0&0.005&0&0&0.93&0.995&0&0 \\
&&$p$&0&0&0&0&0&0&0&0&1&1&0&0 \\
&&$2p$&0&0&0&0&0&0&0&0&1&1&0&0 \\ \hline
\end{tabular}
\caption{Relative frequency of estimated total rank $\hat{k}$ over
  $200$ replicates for both the vector and matrix-valued
  approaches. The column with the true rank $k=6$ is highlighted. Here
  $p=p_1p_2$.}
\label{table:simcompfreq}
\end{table}

We now study the effects of the lag parameter $h_0$. The $k_1k_2$
factors are assumed to be
independent and follow the same model which is either an
AR(1) or an MA(2) model. For the AR(1) model, the coefficients of all
the factors are $0.9$, $0.6$ or $0.3$. For the MA(2) model, we
consider the case $f_t = e_t + 0.9 e_{t-2}$. We take $\delta_1 =
\delta_2 = 0$, $T=p_1p_2$ and compare the estimation accuracies of the
two loading spaces for four lag choices, $h_0=1,2,3,4$. Table
\ref{table:hzero} shows the results of $\cD(\widehat{\bQ}_1,\bQ_1)$
and $\cD(\widehat{\bQ}_2,\bQ_2)$. It is seen that, for the AR(1)
processes, taking
$h_0=1$ is sufficient. Larger $h_0$ in fact decreases the performance,
especially for small AR coefficient cases. Note that
larger $h_0$ increases the
signal strength in the matrix $\bM$, but also increases the noise level in
its sample version $\hat{\bM}$.
For an AR(1) model with small AR coefficient, the autocorrelation in higher
lags is relatively small hence the additional signal strength is limited.
For the MA(2) process, one must use $h_0\geq 2$ since lag 1 autocovariance
 matrix is zero and does not provide any information.
$h_0=2$ performs the best, since  all higher lags carry no additional
information, but add significant amount of noise.

\begin{table}
\tiny
\centering
\begin{tabular}{|c|cc|cc|cc|cc|cc|}
\hline
& & & \multicolumn{2}{c|}{$h_0=1$} & \multicolumn{2}{c|}{$h_0=2$} & \multicolumn{2}{c|}{$h_0=3$}  & \multicolumn{2}{c|}{$h_0=4$} \\ \hline
AR(1) & $p_1$& $p_2$ & $\cD(\widehat{\bQ}_1,\bQ_1)$ &  $\cD(\widehat{\bQ}_2,\bQ_2)$ & $\cD(\widehat{\bQ}_1,\bQ_1)$ &  $\cD(\widehat{\bQ}_2,\bQ_2)$ & $\cD(\widehat{\bQ}_1,\bQ_1)$ &  $\cD(\widehat{\bQ}_2,\bQ_2)$ & $\cD(\widehat{\bQ}_1,\bQ_1)$ &  $\cD(\widehat{\bQ}_2,\bQ_2)$ \\ \hline
 0.9 &  20 & 20&0.13(0.02)&0.10(0.02)&0.13(0.02)&0.10(0.02)&0.14(0.02)&0.10(0.02)&0.14(0.02)&0.11(0.02) \\
 &  20 & 50&0.05(0.01)&0.07(0.01)&0.05(0.01)&0.07(0.01)&0.05(0.01)&0.07(0.01)&0.06(0.01)&0.08(0.01) \\
 &  50 & 50&0.03(0.00)&0.03(0.00)&0.03(0.00)&0.03(0.00)&0.03(0.00)&0.03(0.00)&0.03(0.00)&0.03(0.00) \\ \hline
 0.6 &  20 & 20&0.36(0.07)&0.26(0.04)&0.41(0.08)&0.27(0.05)&0.47(0.10)&0.28(0.05)&0.53(0.12)&0.29(0.05) \\
 &  20 & 50&0.15(0.03)&0.19(0.02)&0.18(0.04)&0.20(0.03)&0.21(0.05)&0.21(0.03)&0.24(0.06)&0.21(0.03) \\
 &  50 & 50&0.09(0.01)&0.07(0.01)&0.10(0.01)&0.08(0.01)&0.11(0.02)&0.08(0.01)&0.12(0.02)&0.08(0.01) \\ \hline
 0.3 &  20 & 20&1.56(0.72)&0.57(0.13)&2.31(0.99)&0.60(0.16)&2.82(1.04)&0.63(0.18)&3.12(1.05)&0.67(0.18) \\
 &  20 & 50&0.64(0.21)&0.45(0.10)&1.14(0.49)&0.49(0.13)&1.66(0.68)&0.55(0.19)&2.13(0.82)&0.63(0.24) \\
 &  50 & 50&0.26(0.05)&0.17(0.03)&0.39(0.08)&0.18(0.03)&0.56(0.11)&0.20(0.04)&0.74(0.14)&0.21(0.04) \\ \hline
 MA(2) &  20 & 20&2.60(1.11)&0.88(0.28)&0.48(0.12)&0.27(0.05)&0.59(0.15)&0.28(0.05)&0.68(0.17)&0.28(0.06) \\
 &  20 & 50&2.76(1.16)&1.13(0.56)&0.21(0.04)&0.21(0.03)&0.27(0.06)&0.22(0.04)&0.32(0.07)&0.22(0.04) \\
 &  50 & 50&2.85(1.15)&0.68(0.23)&0.11(0.02)&0.08(0.01)&0.13(0.02)&0.08(0.01)&0.15(0.02)&0.08(0.01) \\ \hline
\end{tabular}
\caption{Means and standard deviations (in parentheses) of
  $\cD(\widehat{\bQ}_1,\bQ_1)$ and $\cD(\widehat{\bQ}_2,\bQ_2)$ for
  different lag parameter $h_0$. All numbers are the true numbers
  multiplied by 10.}
\label{table:hzero}
\end{table}

 Next  we study
the performance of recovering the signal $\bS_t$. The latent factors
are simulated in the same way as the data in Table
\ref{table:simmatdist}. We take $\delta_1=\delta_2=0$,
$p_1=p_2=10,20,50$, and $T=50,200,1000,5000$. The recovery accuracy of
$\widehat{\bS}_t$, denoted by $\cD(\widehat{\bS},\bS)$, is estimated
by the average of $\| \widehat{\bS}_t - \bS_t\|_2$ for
$t=1,2,\ldots,T$ further normalized by $\sqrt{p_1p_2}$, that is,
$\cD(\widehat{\bS},\bS) = p_1^{-1/2}p_2^{-1/2}\big(\sum_{t=1}^T \|
\widehat{\bS}_t -\bS_t\|_2 / T \big)$.  Table \ref{table:simshat}
presents the results of $\cD(\widehat{\bS},\bS)$ and
$\cD(\widehat{\bQ},\bQ)$ for the two approaches. It shows that, when $T$
is relatively large (hence the $\bQ$ is estimated relatively accurately),
increasing $p$ improves the estimation of $\bS$. For the same $p$ and
relatively large $T$, further increasing
$T$ has a limited benefit in improving the estimation of $\bS$. The
estimation accuracies of $\bS$ of the proposed matrix-valued
approach are better than that of the vector-valued approach, though the
relative improvement decreases as $T$ increases, even the improvement of
estimating $\bQ$ is significant.

\begin{table}
\centering
\begin{tabular}{|c|c|cc|cc|}
\hline
$p_1 = p_2$ & $T$ & $\cD_{\rm vec}(\widehat{\bQ},\bQ)$ & $\cD_{\rm mat}(\widehat{\bQ},\bQ)$ & $\cD_{\rm vec}(\widehat{\bS},\bS)$ & $\cD_{\rm mat}(\widehat{\bS},\bS)$ \\ \hline
 10 &  50&6.26(0.38)&3.66(0.92)&4.05(0.28)&3.41(0.39) \\
 &  200&4.11(0.33)&1.42(0.42)&3.02(0.19)&2.62(0.15) \\
 &  1000&2.12(0.13)&0.50(0.09)&2.48(0.05)&2.40(0.04) \\
 &  5000&0.99(0.05)&0.21(0.03)&2.38(0.02)&2.36(0.01) \\ \hline
 20 &  50&5.65(0.39)&2.36(1.26)&3.07(0.27)&1.86(0.61) \\
 &  200&3.64(0.23)&0.71(0.16)&1.96(0.16)&1.11(0.05) \\
 &  1000&1.88(0.11)&0.29(0.04)&1.25(0.05)&1.02(0.01) \\
 &  5000&0.87(0.03)&0.13(0.01)&1.04(0.01)&1.00(0.00) \\ \hline
 50 &  50&5.81(0.35)&3.17(1.47)&2.64(0.26)&1.95(0.79) \\
 &  200&3.78(0.24)&0.62(0.19)&1.57(0.16)&0.56(0.09) \\
 &  1000&2.04(0.10)&0.21(0.03)&0.82(0.06)&0.42(0.01) \\
 &  5000&0.97(0.04)&0.09(0.01)&0.49(0.02)&0.40(0.00) \\ \hline
\end{tabular}
\caption{Means and standard deviations (in parentheses) of estimation
  accuracies of $\bQ$ and $\bS$. All numbers are the true numbers
  multiplied by 10.}
\label{table:simshat}
\end{table}

Next, we conduct a $10$-fold cross-validation study. The data are
generated in the same way as the data in Table \ref{table:simmatdist}
with $\delta_1=\delta_2=0$, $p_1=p_2=20$ and $T=1000$. We vary the
estimated number of factors $k_1$ and $k_2$ from all combinations of
$k_1=1,2,3,4$ and $k_2=1,2,3$. The means of the
out-of-sample RSS/SST are reported in Table \ref{table:simcv}. For the
matrix-valued approach, the RSS/SST decreases rapidly when
$k_1$ and $k_2$ increase, before they reach the true rank pair
$(3,2)$ (highlighted in the table).
Then the RSS/SST value remain roughly the same with increasing
estimated ranks when $k_1 > 3$ and $k_2 > 2$. For the
vector-valued approach, $k=k_1k_2$, hence the values in the table
are the same for the same $k_1k_2$ value
(e.g. $(2,3)$ and $(3,2)$ are equivalent).
Its performance improves quickly
as $k$ increases until $k=6$, the true number of
factors. Then the performance remains relatively the same for
$\hat{k} > 6$.

\begin{table}
\centering
\begin{tabular}{|c|cc|cc|cc|}
\hline
& \multicolumn{2}{c|}{$\hat{k}_2=1$} & \multicolumn{2}{c|}{$\hat{k}_2=2$} & \multicolumn{2}{c|}{$\hat{k}_2=3$} \\ \hline
$\hat{k}_1$ & vec & mat & vec & mat & vec & mat \\ \hline
 1&0.83&0.83&0.72&0.75&0.63&0.75 \\ \hline
 2&0.72&0.71&0.58&0.56&0.49&0.55 \\ \hline
 3&0.63&0.67& {\cellcolor[gray]{0.8} 0.49}& {\cellcolor[gray]{0.8} 0.47}&0.46&0.46 \\ \hline
 4&0.58&0.66&0.46&0.47&0.44&0.43 \\ \hline
\end{tabular}
\caption{Means of out-of-sample RSS/SST for $10$-fold cross-validation
  over $200$ simulation runs. The cell corresponding to the true order
$(3,2)$ is highlighted.}
\label{table:simcv}
\end{table}

\section{Real Example: Fama-French 10 by 10 Series}

In this section we illustrate the matrix factor model using the
Fama-French 10 by 10 return series. A universe of
stocks is grouped into 100 portfolios, according to
ten  levels of market capital (size) and ten
levels of book to equity ratio (BE). Their monthly returns
from January 1964 to December, 2015 for total 624 months
and overall 62,400 observations are used in this analysis.
For more detailed information, see
{\it http://mba.tuck.dartmouth.edu/pages/faculty/ken.french/data\_library.html}.

All the 100 series are clearly related to the overall market condition.
In this analysis we simply subtract the corresponding monthly excess
market return from each of the series, resulting in 100 market-adjusted
return series. We chose not to fit a standard
CAPM model to each of the series to remove the market effect, as it will
involve estimating 100 different betas.
The market return data are obtained from the same website above.

Figure \ref{FFtsplot} shows the time series plot of the 100 series
(standardized), and Figures \ref{FFeigen1} and \ref{FFeigen2}
show the logarithms and ratios of eigenvalues of
$\widehat{\bM}_1$ and $\widehat{\bM}_2$ for the row (size) and column (BE) loading
matrices. Since the series shows very small autocorrelation beyond $h=1$,
in this example we use $h_0=1$.  The results by using $h_0=2$ are similar.
Although the eigenvalue ratio estimate presented in Section
\ref{subsec:est} indicates $k_1=k_2=1$, we use $k_1=k_2=2$ here for
illustration. Tables \ref{FFsize} and \ref{FFbe} show the estimated
loading matrices after a varimax rotation that maximizes the variance
of the squared factor loadings,
scaled by 30 for a cleaner view.
For size, it is seen that there are possibly two or three groups.
The $1$-st to $5$-th smallest size
portfolios load heavily (with roughly equal weights) on the first row
of the factor matrix, while the $6$-th to $9$-th smallest size
portfolios load heavily (with roughly equal weights) on the second row
of the factor matrix. The largest ($10$-th) size portfolio behaves similar
to the other larger size portfolios, but with some differences.
We note
that the Fama-French size factor proposed in \citet{Fama&French93} is
constructed using the return differences of the largest 30\% of the
companies (combining the $8$-th to $10$-th size portfolio) and the
smallest 30\% of the companies (combining our $1$st to $3$rd size
portfolio).

Turning to the book to equity ratio, Table \ref{FFbe} shows a
different pattern in the column loading matrix.
There seem to have three groups. The smallest $2$-nd to
$4$-th BE portfolios load heavily on the first
column of the factor matrix; the $5$th to $10$th BE portfolios load
heavily on the second columns of the factor matrix. The smaller
($1$st) BE portfolios load heavily on both columns of the factor matrix,
with different loading coefficients.

Figure \ref{FFfactor} shows the estimated factor matrices over time. It
can be potentially used to replace the Fama-French size factor (SMB) and
book to equity factor (HML) in a Fama-French factor model for asset
pricing, factor trading and other usage, though further analysis is needed
to assess their effectiveness.
Cross-correlation study shows that there are not many significant
cross-correlation of lag larger than 0 among the factors, though the
factors show some strong contemporary correlation as the factor matrices
are subject to rotation -- in our case we performed rotation to reveal the
group structure in the loading matrices. A principle component analysis of the
four factor series reveals that three principle components can explain
98\% of the variation in the four factors, hence there may still be
some redundancy in the factors and the model may be further simplified.

\begin{table}
\begin{center}
\begin{tabular}{c|cccccccccc}
Factor & S1 & S2 & S3 & S4 & S5 & S6 & S7 & S8 & S9 & S10 \\ \hline
1 & \cellcolor[gray]{0.8}{-13} & \cellcolor[gray]{0.8}{-14} &
\cellcolor[gray]{0.8}{-13} & \cellcolor[gray]{0.8}{-13} &
\cellcolor[gray]{0.8}{-10}
& -5 & -2 & 1& 6&  \cellcolor[gray]{0.6}{7} \\
2 & 0 & 0 & -2 & 3 & 5 & \cellcolor[gray]{0.8}{12} &
\cellcolor[gray]{0.8}{12} & \cellcolor[gray]{0.8}{18} &
\cellcolor[gray]{0.8}{15}  & \cellcolor[gray]{0.6}{5}
\end{tabular}
\end{center}
\caption{Fama-French series: Size loading matrix after rotation and scaling.}
\label{FFsize}
\end{table}

\begin{table}
\begin{center}
\begin{tabular}{c|cccccccccc}
Factor & BE1 & BE2 & BE3 & BE4 & BE5 & BE6 & BE7 & BE8 & BE9 & BE10 \\ \hline
1 & {\cellcolor[gray]{0.6} -21}  & {\cellcolor[gray]{0.8} -14}
& {\cellcolor[gray]{0.8} -11} & {\cellcolor[gray]{0.8} -9}
& -4 & -1 & -1 & -4 & 1 & 3 \\
2 & {\cellcolor[gray]{0.6} -9} &  2 & 3  & 7 & {\cellcolor[gray]{0.8} 9}
& {\cellcolor[gray]{0.8} 10} & {\cellcolor[gray]{0.8} 10}
& {\cellcolor[gray]{0.8} 10} & {\cellcolor[gray]{0.8} 13}
& {\cellcolor[gray]{0.8} 14}
\end{tabular}
\end{center}
\caption{Fama-French series: BtoE loading matrix after rotation and scaling.}
\label{FFbe}
\end{table}

\begin{figure}
\centerline{\includegraphics[width=6.3in,height=5.0in]{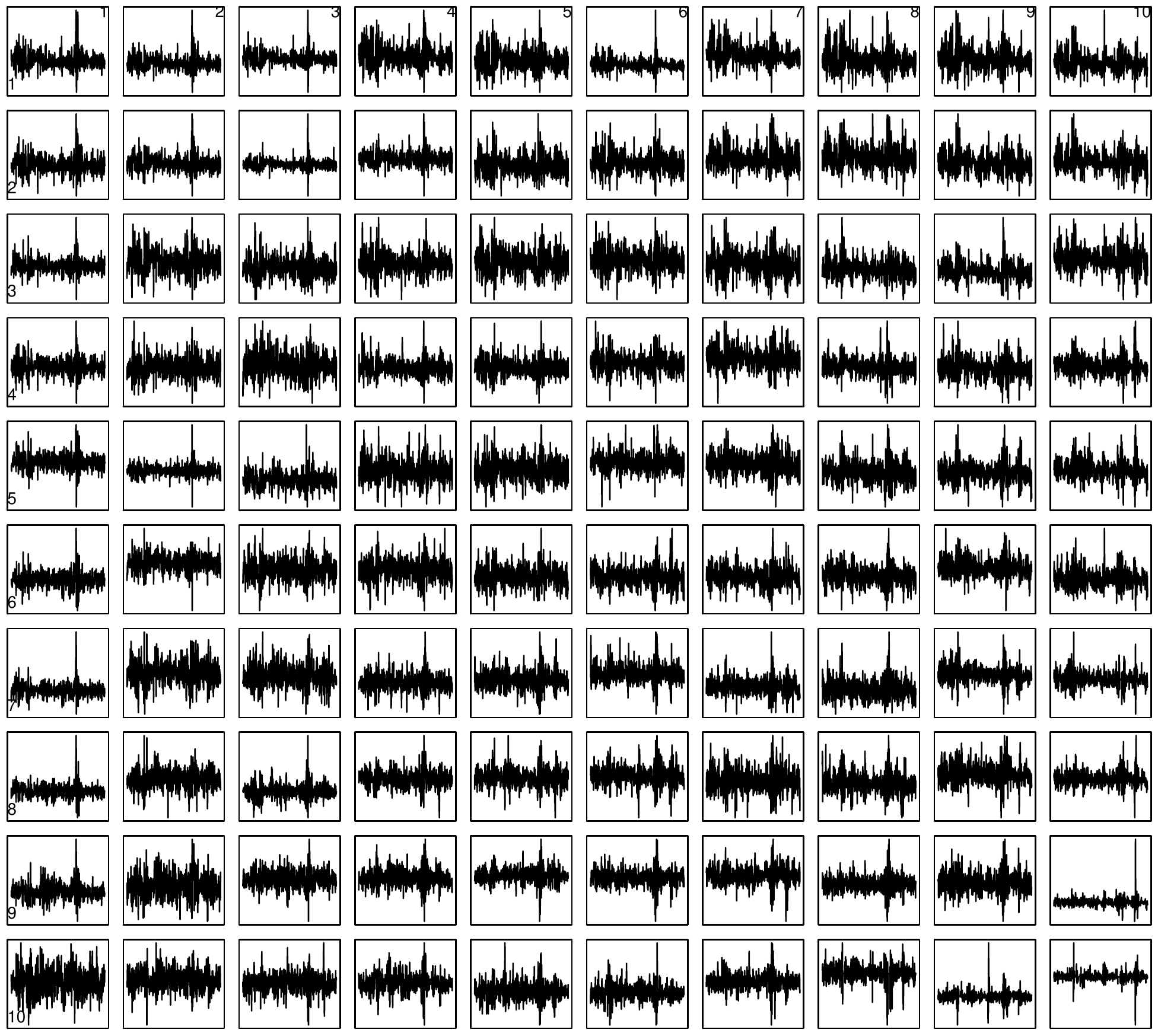}}
\caption{Time series plot of Fama-French 10 by 10 series.}
 \label{FFtsplot}
\end{figure}

\begin{figure}
\centerline{\includegraphics[width=3.2in,height=3.2in]{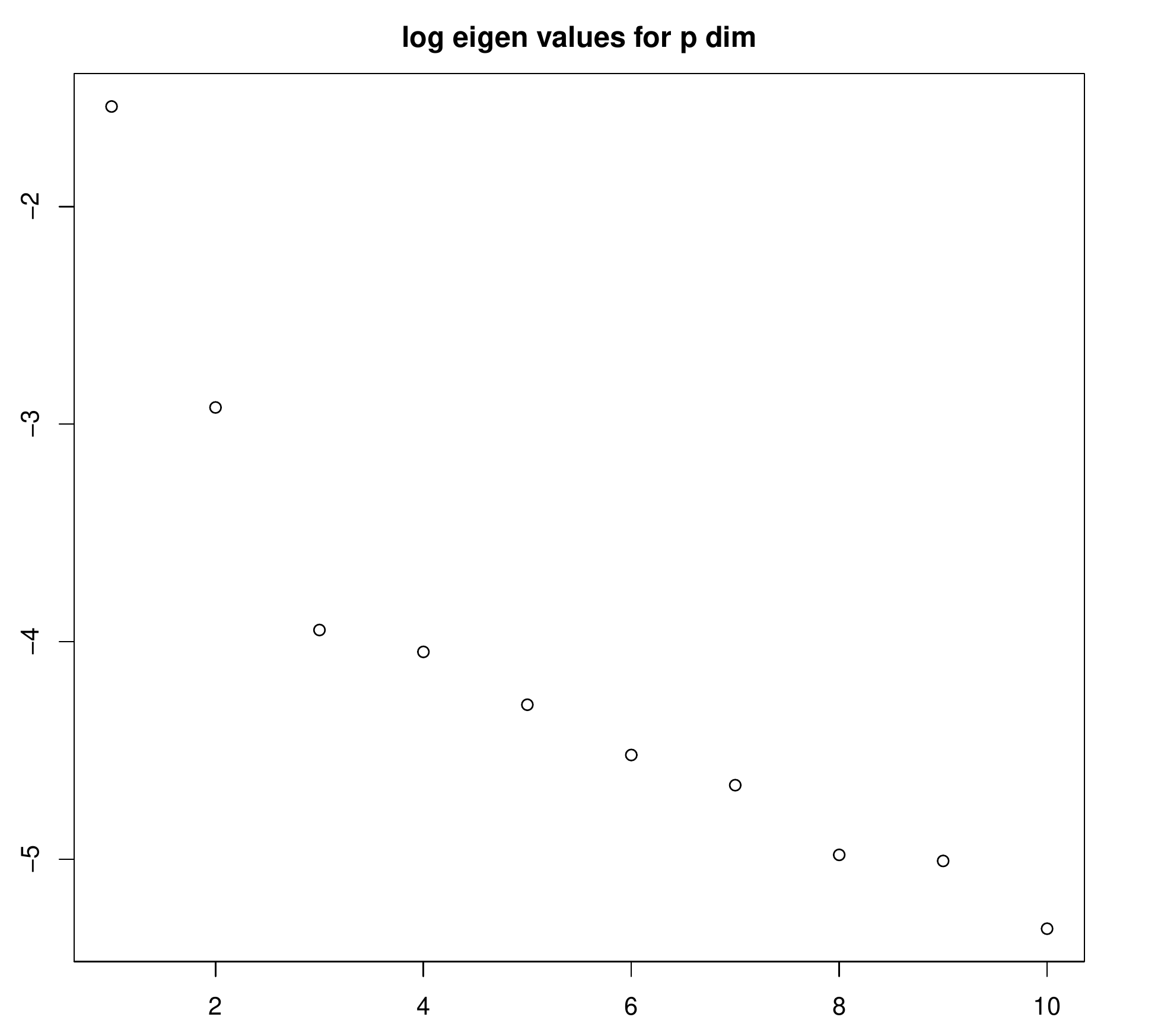}
\includegraphics[width=3.2in,height=3.3in]{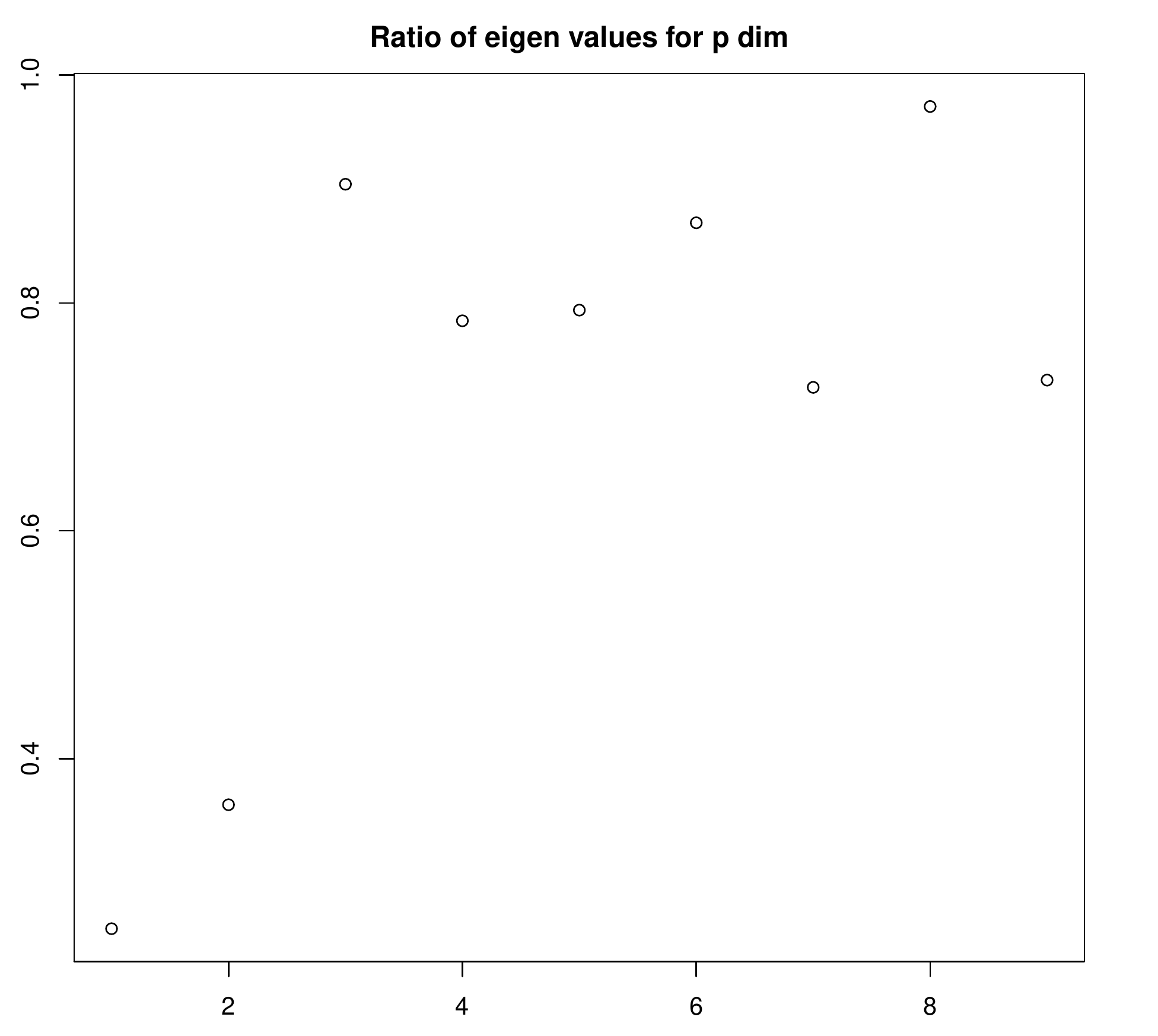}}
\caption{Fama-French series: Logarithms and ratios of eigenvalues of
$\widehat{\bM}_1$ for the row (size) loading  matrix.}
\label{FFeigen1}
\end{figure}

\begin{figure}
\centerline{\includegraphics[width=3.2in,height=3.2in]{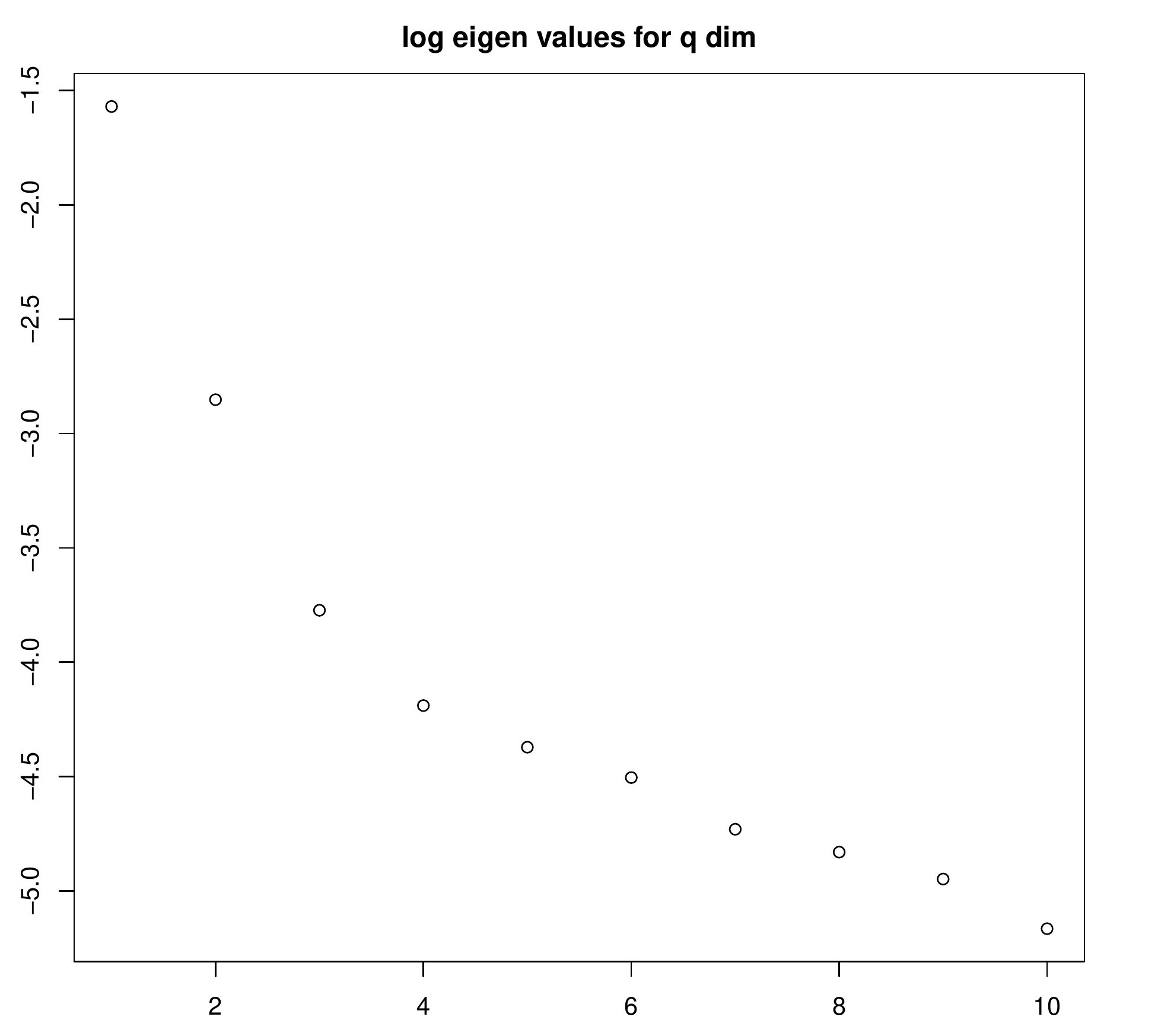}
\includegraphics[width=3.2in,height=3.2in]{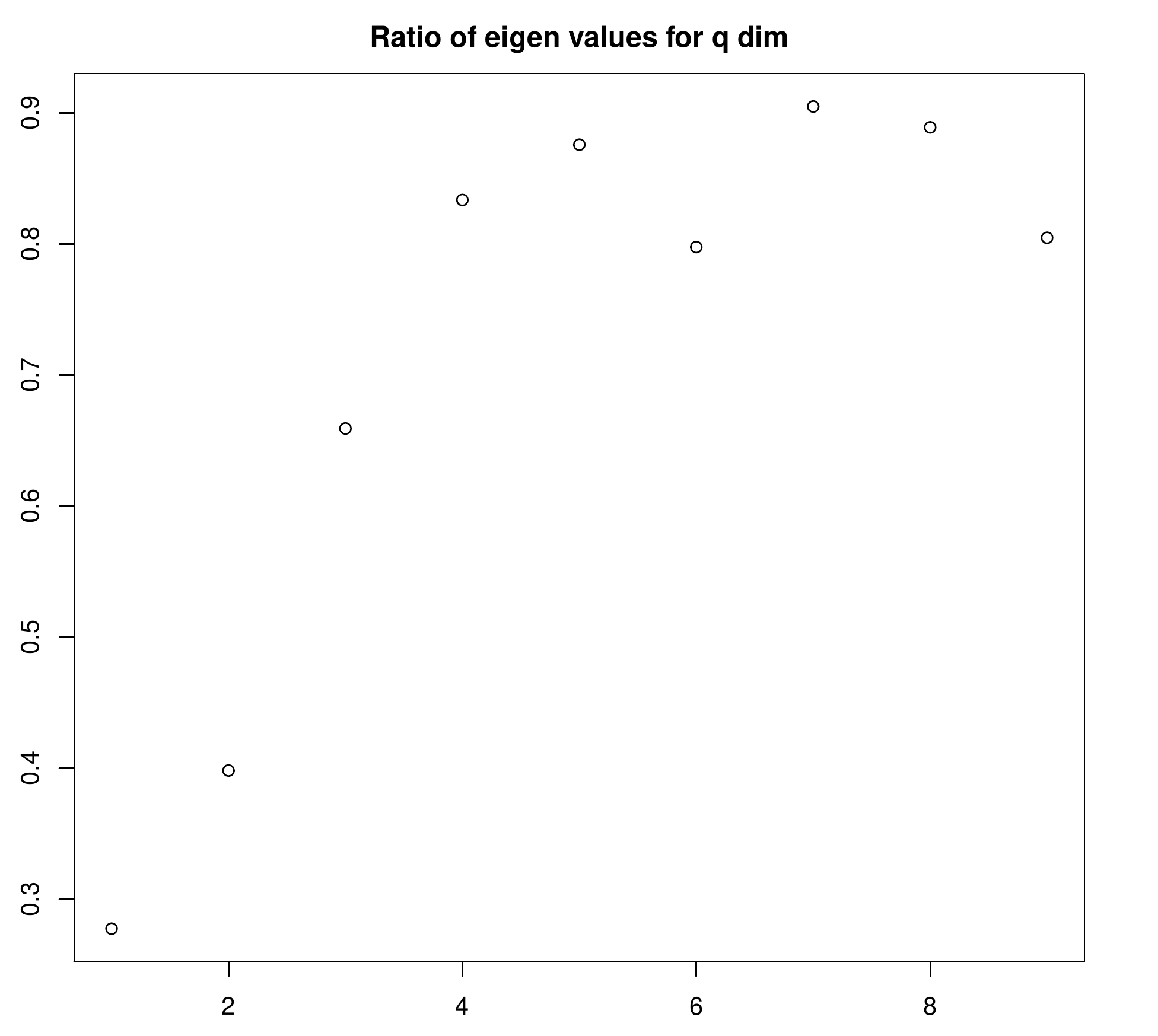}}
\caption{Fama-French series: Logarithms and ratios of eigenvalues of
$\widehat{\bM}_2$ for the column (BE) loading matrix.}
\label{FFeigen2}
\end{figure}


\begin{figure}
\centerline{\includegraphics[width=6.0in,height=5.0in]{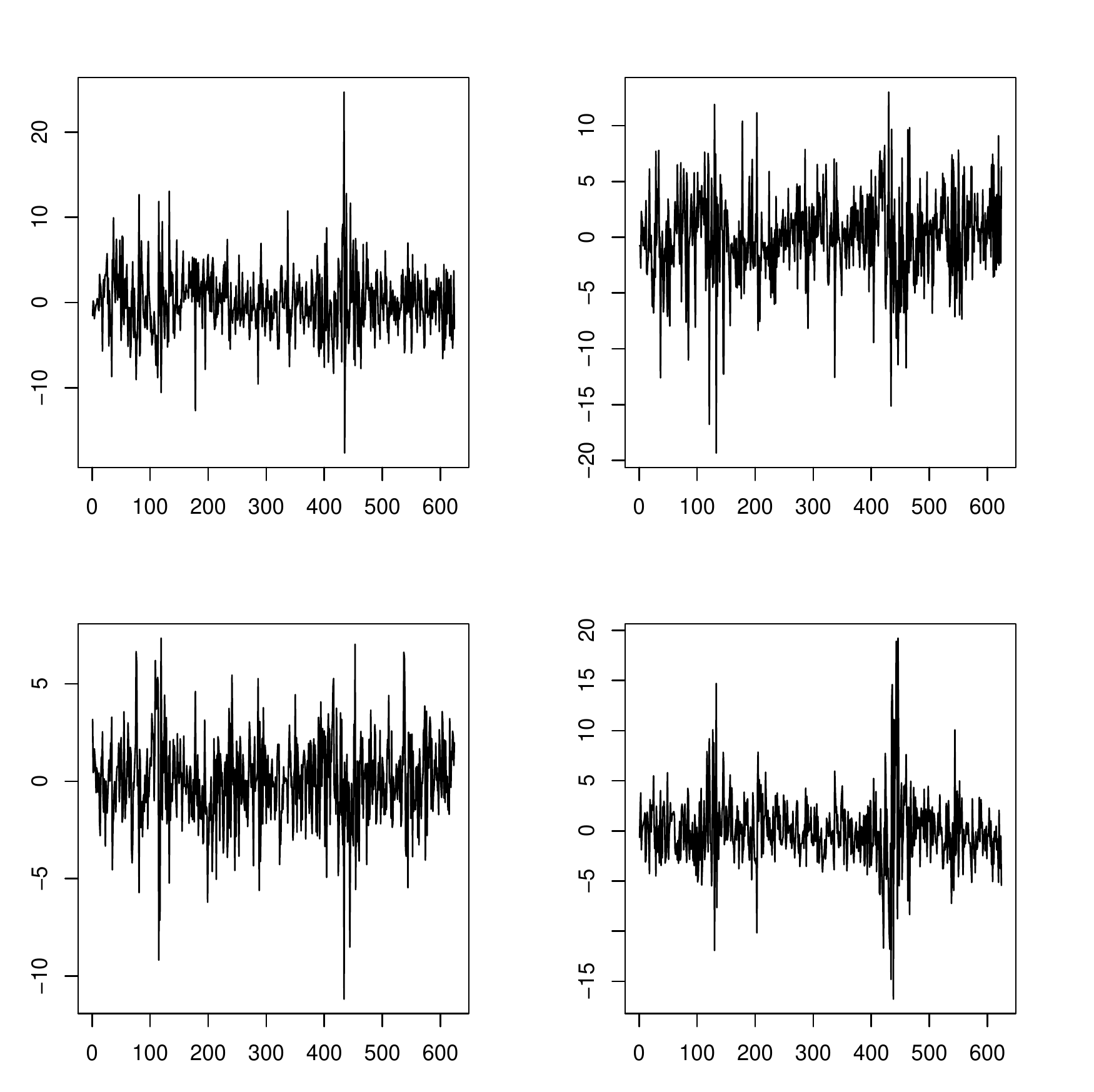}}
\caption{Fama-French series: Estimated factors.}
\label{FFfactor}
\end{figure}

\begin{figure}
\centerline{\includegraphics[width=3.2in,height=3.2in]{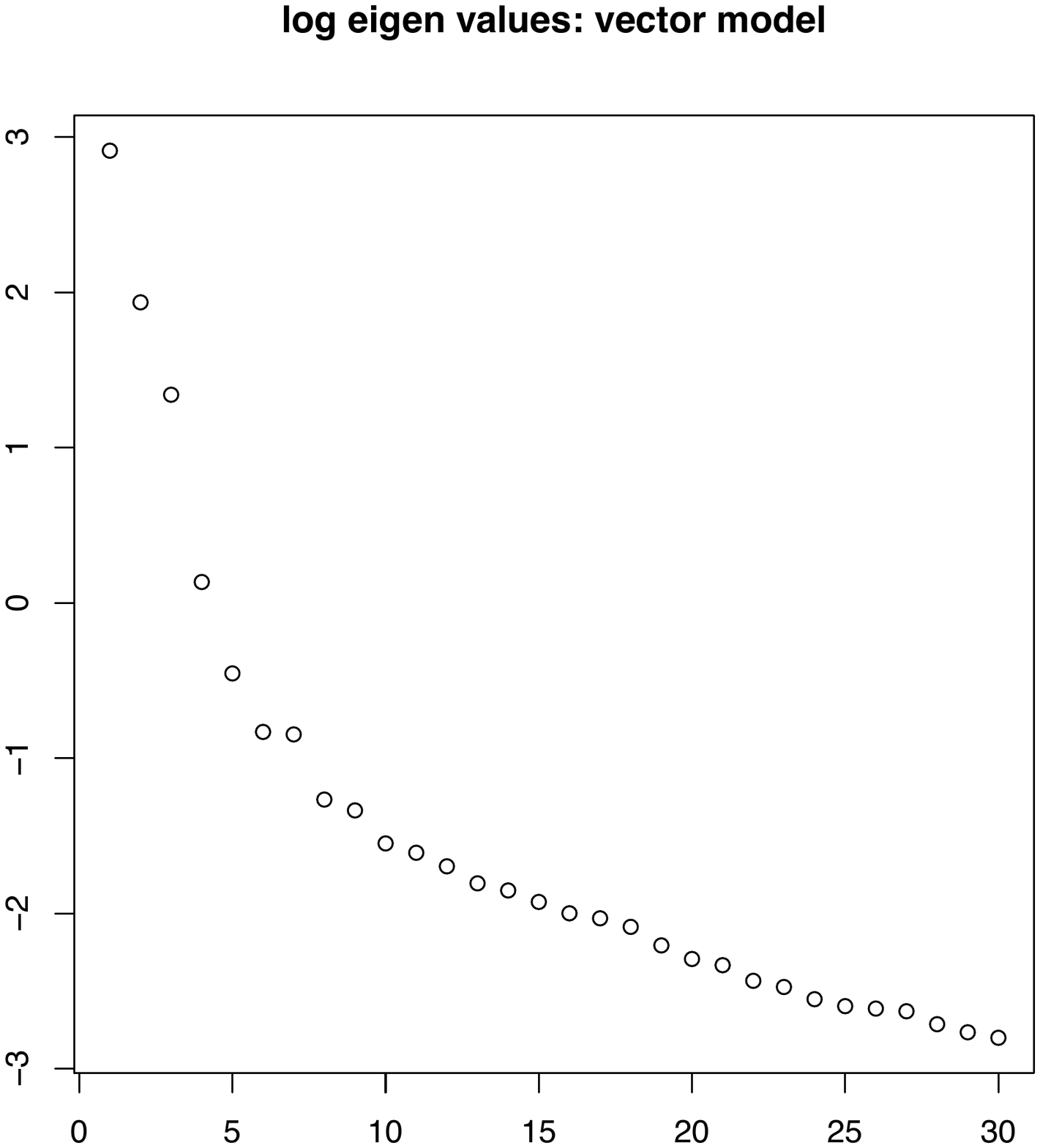}
\includegraphics[width=3.2in,height=3.2in]{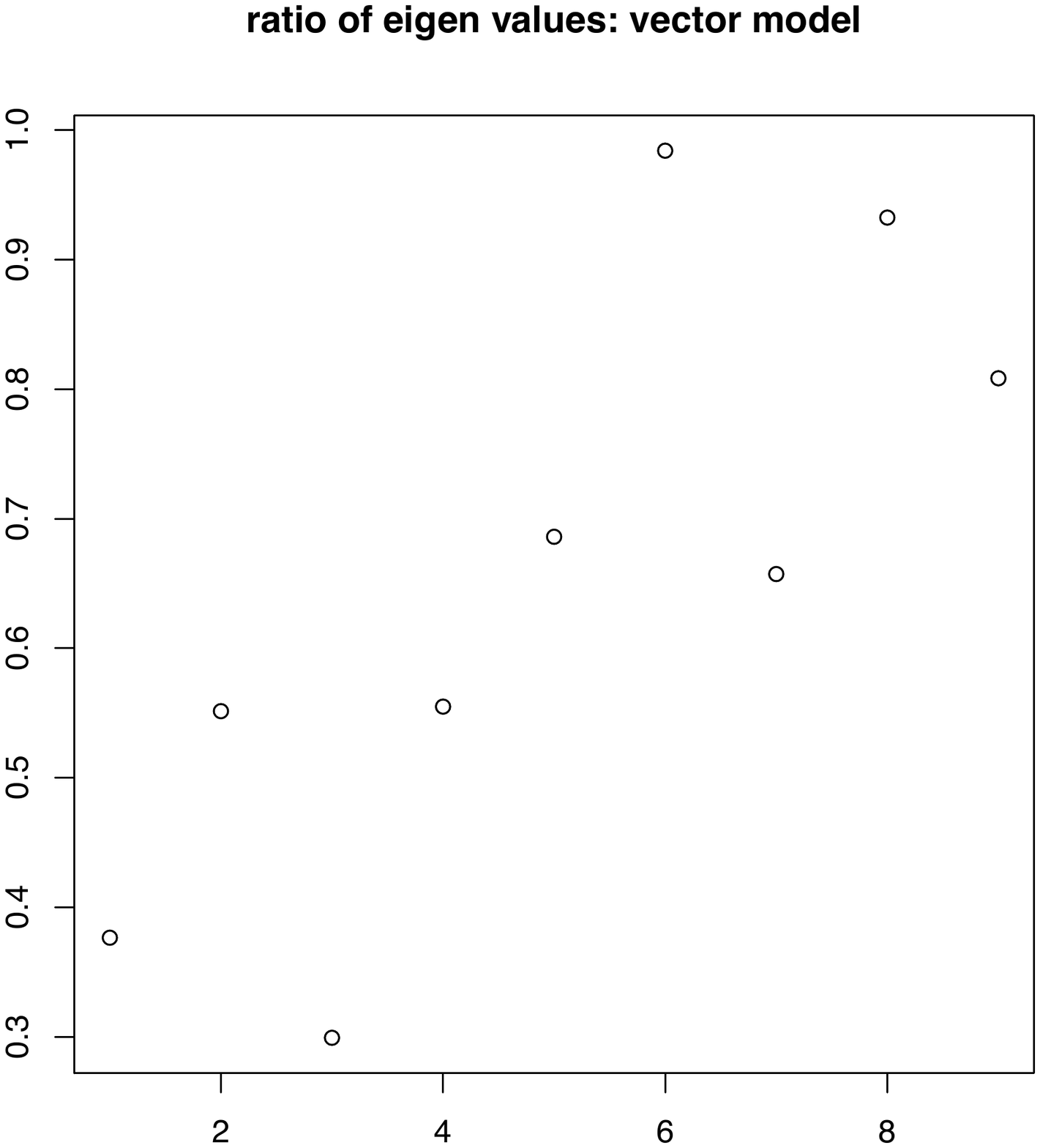}}
\caption{Fama-French series: Logarithms and ratios of eigenvalues of $\bM$ for the vectorized factor model.}
\label{FFeign3}
\end{figure}

\begin{table}
\begin{center}
\begin{tabular}{c|ccccc}
 & factor & RSS & \# factors & \# parameters \\ \hline
Matrix model & (0,0) & 29,193 & 0 & 0 \\
Matrix model & (2,2) & 14,973 & 4 & 40 \\
Matrix model & (2,3) & 14,514 & 6 & 50 \\
Matrix model & (3,2) & 14,166 & 6 & 50 \\
Matrix model & (3,3) & 13,530 & 9 & 60 \\ \hline
Vector model &    3  & 16,262 & 3 & 300 \\
Vector model &    4  & 15,365 & 4 & 400 \\
Vector model &    5  & 14,565 & 5 & 500 \\
Vector model &    6  & 14,149 & 6 & 600 \\
\end{tabular}
\end{center}
\caption{Comparison of different models for Fama-French series.}
\label{table:FFcompare}
\end{table}

Figure \ref{FFeign3} shows the logarithms and ratios of eigenvalues of
$\widehat{\bM}$ in \cite{lam2011estimation} for a vectorized factor
model (\ref{eqn:1dmodel}).  Models with various number of factors were
estimated and a comparison is shown in Table \ref{table:FFcompare}
using a version of rolling-validation. Specifically, for each year
between 1996 to 2015, we use all data available before the year to fit
a matrix (or vector) factor model and estimate the corresponding
loading matrices. Using these estimated loading matrices and the
observed 12 months of the data in the year, we estimate the factors
and the corresponding residuals.  Total sum of squares of the 12
residuals of the 100 series of the 20 years are reported. The RSS
corresponding to model $(0,0)$ is the total sum of squares of the
observed 100 series of the 20 years being studied.  It is seen that
the matrix factor model with $(2,2)$ factor matrices performs better
than the vectorized factor model with equal number of factors and many
more parameters in the loading matrices.  The $(3,2)$ matrix factor
model performs similarly as the 6-factor vectorized factor model, but
the number of parameters used is much smaller.

\section{Real Example: Series of Company Financials}

In this example we analyze the series of financial data reported by a group of
200 companies. We constructed 16 financial characteristics based on company
quarterly financial reports. The list of variables and their definitions is
given in Appendix 2. The period is from the first quarter of 2006 to
the fourth
quarter of 2015 for 10 years with total
40 observations. The total number of time series is 3,200.

Figures \ref{CFeign1} and \ref{CFeign2}
show the eigenvalues and their ratios of $\widehat{\bM}_1$ and $\widehat{\bM}_2$ for row factors and column factors. The estimated
dimensions $k_1$ and $k_2$ are both 3, though we use $k_1=5$ and $k_2=20$
for this illustration, with interesting results. Estimation is done using
$h_0=2$.

\begin{figure}
\centerline{\includegraphics[width=6.8in,height=3.0in]{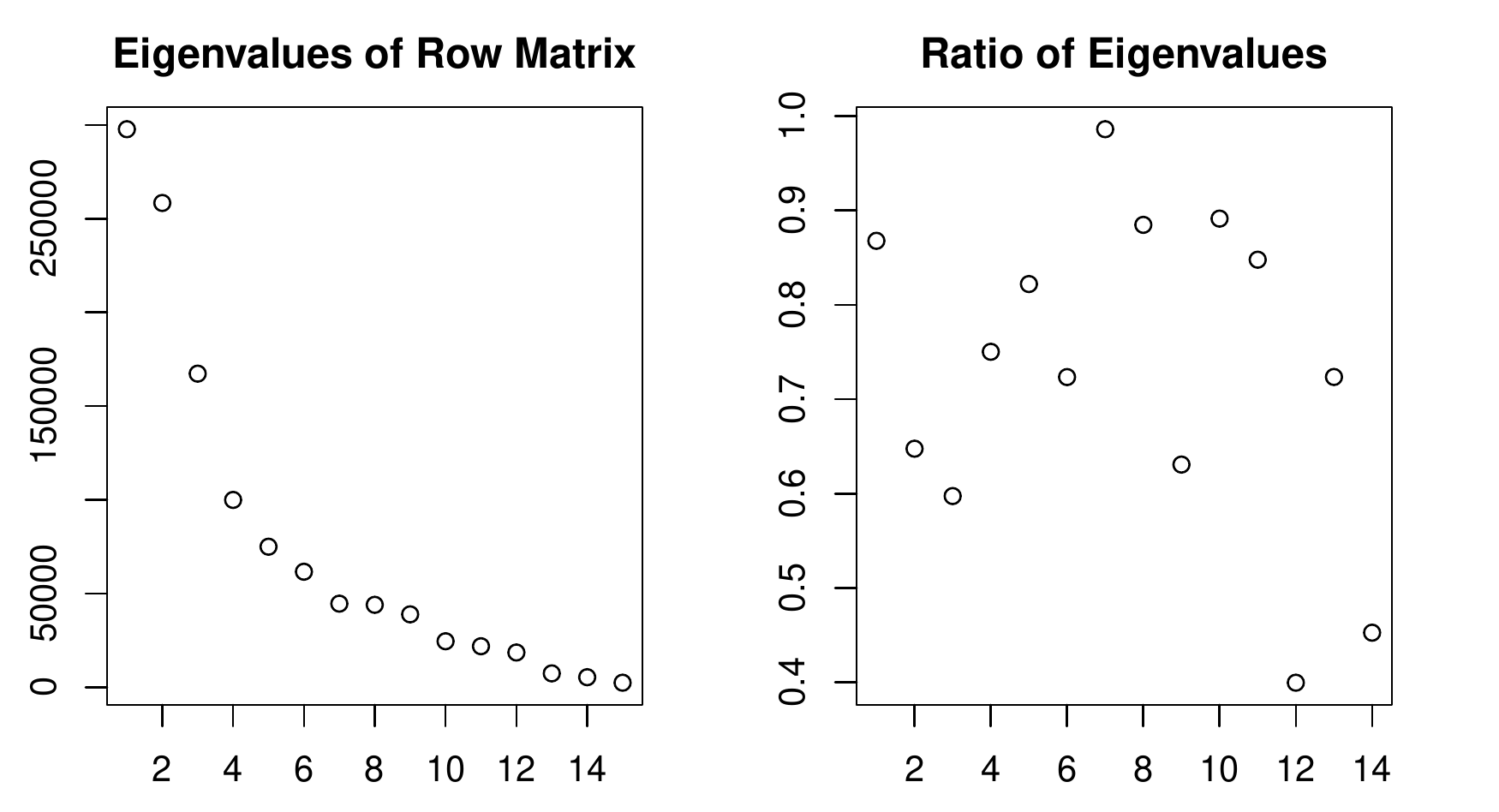}}
\caption{Financial series: Eigenvalues and their ratios
of $\widehat{\bM}_1$. }
\label{CFeign1}
\end{figure}

\begin{figure}
\centerline{\includegraphics[width=6.8in,height=3.0in]{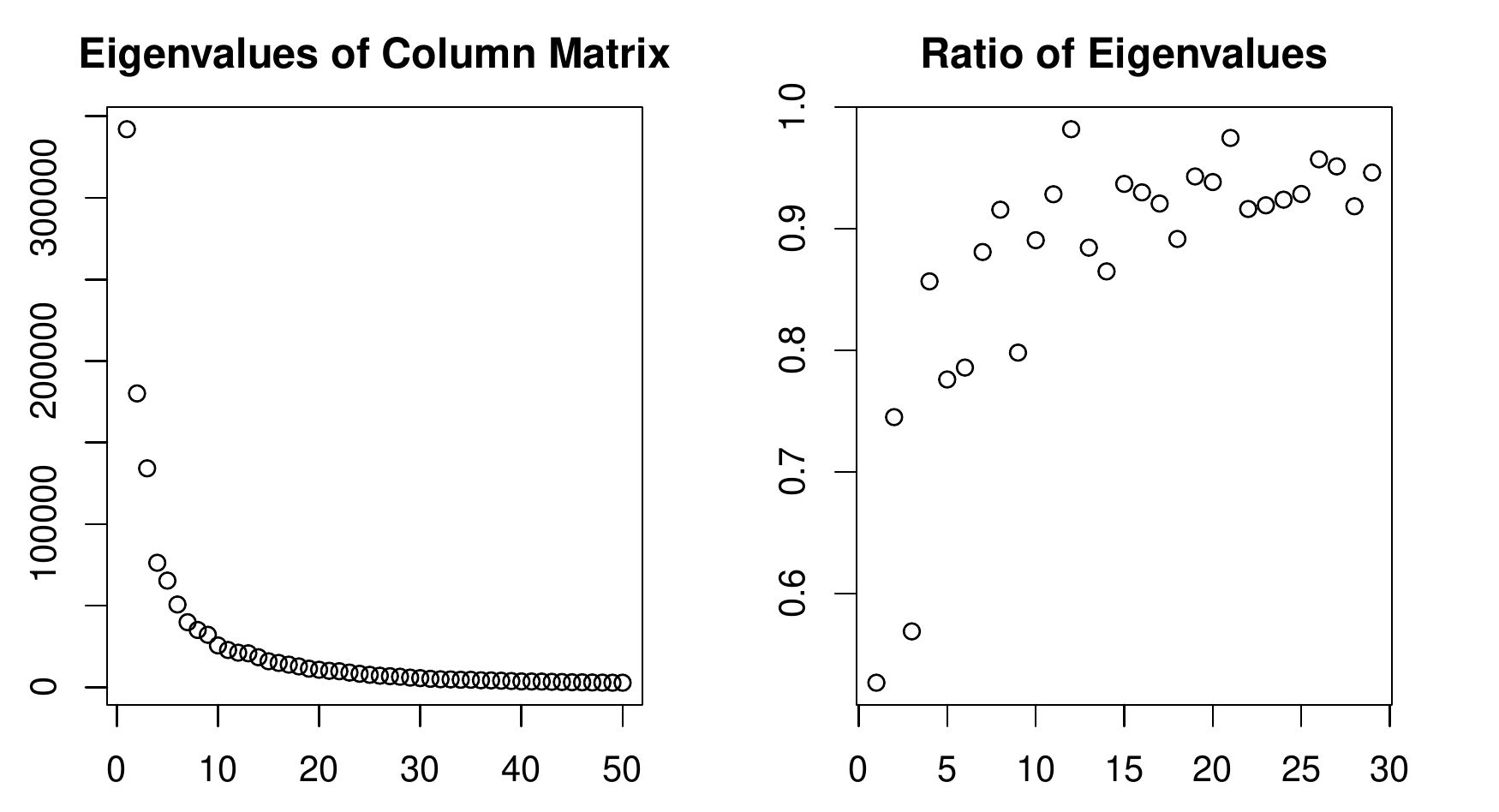}}
\caption{Financial series: Eigenvalues and their ratios of
$\widehat{\bM}_2$. }
\label{CFeign2}
\end{figure}

The estimated row loading matrix is rotated to maximize its variance, with
potential grouping shown by the shaded areas in
Table \ref{fin:group.loading}.
It shows the loading of each financial on the five rows
of the factor matrix,
after proper scaling (30 times) and reordering for easy visualization. The
two financials in Group 1 load almost exclusively on Row 1 of the factor
matrix, with almost the same weights.
The six financials in Group 2 load heavily on Row 2,
again with almost the same weights.  The three
financials in Group 3 load on Rows 3 and 4, with somewhat different weights.
Finally, the five financials in Group 4 mainly load on Row 5 of the factor
matrix, with Payout.Ratio having opposite weights from the others.

The detailed grouping is shown in Table \ref{fin:group}.
Group 1 consists of
asset to equity ratio and liability to equity ratio. They are
two very closely related measures. Group 2 consists of
six measures on earnings and returns. Group 3 consists
of cash and revenue per share, and gross margin.
Group 4 consists of profit growth and revenue growth comparing to
the previous quarter and the same quarter last year. The Payout Ratio
variable is also included in the group.
Such groupings are relatively expected.

\begin{table}
\begin{center}
\begin{tabular}{c|cccccccccccccccc}
Row Factor & F1 & F2 & F3 & F4 & F5 & F6 & F7 & F8 & F9 & F10
&F11 & F12 & F13 & F14 &F15 &F16 \\ \hline
1 &
\cellcolor[gray]{0.8}{21} & \cellcolor[gray]{0.8}{21} &	-1 & -1 & -1 & -1
& -1 & 	4 & -2 & 2 & 1 & 0 & 1 & 0 & 0 & -1 \\
2 &
1 & 1 & \cellcolor[gray]{0.8}{-13} &\cellcolor[gray]{0.8}{-9}&
\cellcolor[gray]{0.8}{-11} & \cellcolor[gray]{0.8}{-11}
&\cellcolor[gray]{0.8}{-13} & \cellcolor[gray]{0.8}{-12}
& 7 & -6 & -1 & 3 & -1 & 1 & 0 & 1 \\
3 &
0 & 0 & 0 & -10 & 0 & -1 & 0 & -1 &
\cellcolor[gray]{0.8}{-11} & \cellcolor[gray]{0.8}{9} &
\cellcolor[gray]{0.8}{-24} & -4  & -1 & 2 & -2 & 0 \\
4 &
0 & 0 & -3 & 2 & 5 & 4 & -2 & -2 &
\cellcolor[gray]{0.8}{19} &
\cellcolor[gray]{0.8}{21} &
\cellcolor[gray]{0.8}{-2} & 4 & 1 & 3  & 0 & -1 \\
5 &
0 & 0 & -1 & 1 & 2 & 0 & -2 & -2 & -4 & 2 & 1 & \cellcolor[gray]{0.8}{9} &
\cellcolor[gray]{0.8}{-8} &
\cellcolor[gray]{0.8}{-13} &
\cellcolor[gray]{0.8}{-14} &
\cellcolor[gray]{0.8}{-18} \\
\end{tabular}
\end{center}
\caption{Financial series: Loading matrix after a varimax rotation and scaling}
\label{fin:group.loading}
\end{table}

\begin{table}
\begin{tabular}{|c|cccccc|}
\hline
Group 1 &  AssetE.R &  LiabilityE.R & & & &  \\
Group 2 & Earnings.R & EPS &  Oper.M & Profit.Margin & ROA & ROE \\
Group 3 & Cash.PS & Gross.Margin &  Revenue.PS & & & \\
Group 4 & Payout.R & Profit.G.Q & Profit.G.Y & Revenue.G.Q & Revenue.G.Y & \\ \hline
\end{tabular}
\caption{Financial series: Grouping of company financials}
 \label{fin:group}
\end{table}

Based on the 200 rows of the
estimated columns loading matrix (corresponding to the companies),
after rotation to  maximize the variance, the companies are grouped into
6 groups. Table \ref{table:comgroup} shows the grouping
corresponding to the industry classification index. The pattern is not
as clear as the clustering of the row loading matrix but we still make some interesting discoveries.  Industrial companies are mainly clustered
in Groups 1 to 3; Health Care companies in Groups 2 and 3; Information
Technology companies in Groups 1, 3 and 5; and Materials companies in
Group 3. Looking from the other angle, we find that Group 4 mainly contains Energy companies; Group 5 mainly contains Consumer Discretionary, Financials and
Information Technology companies; Group 6 mainly
contains Utility companies.

Figure \ref{CFfig5} shows the total 100 factor series in the 5 by
20 factor matrix. Interpretation of the factors is difficult. There are
significant redundancy and correlation among the factors, since we have 100
factors but the time series length is only 40.
Clearly the model tends to overfit. This example is for
illustration purpose only, though we do find interesting features.



\begin{figure}
\centerline{\includegraphics[width=5.8in,height=4.3in]{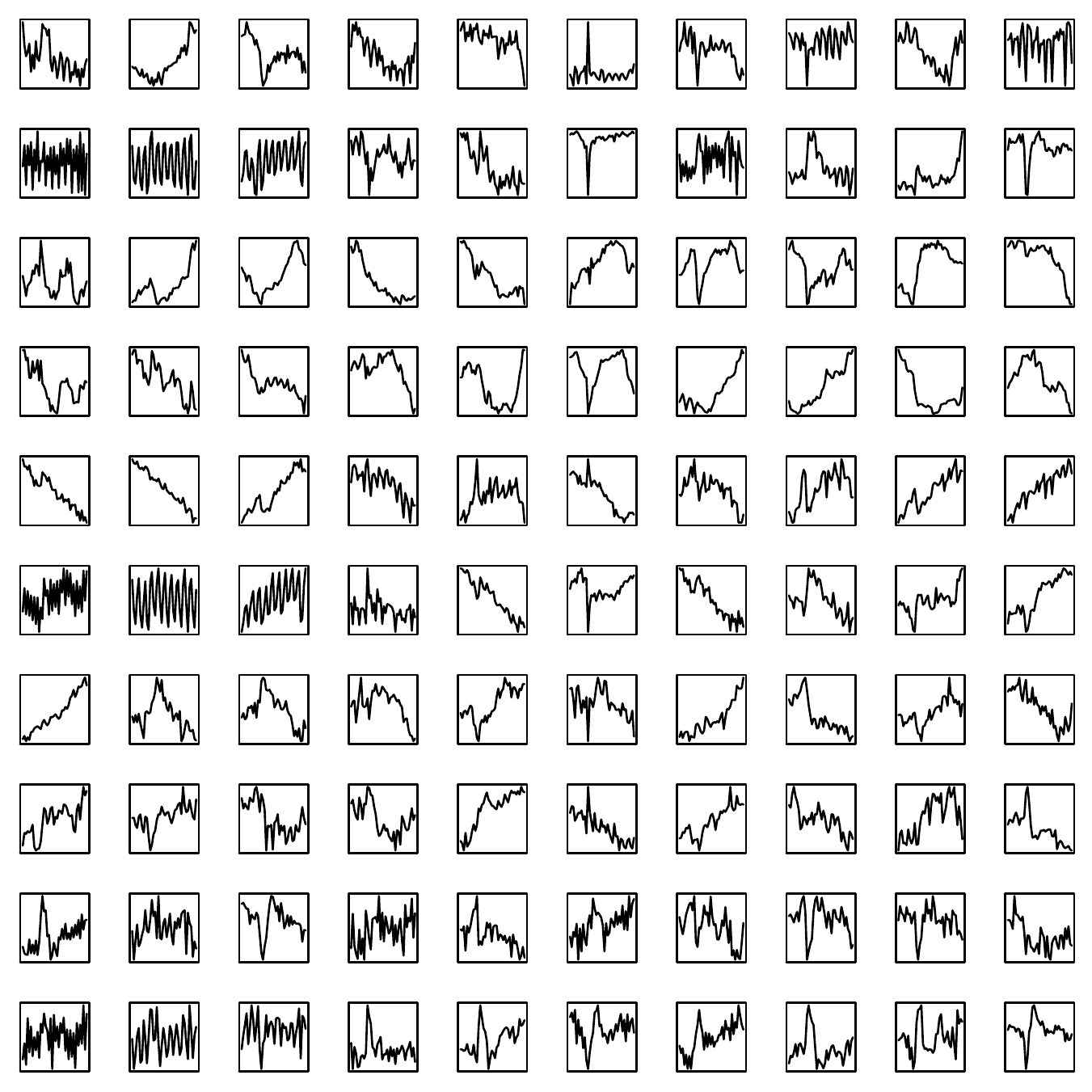}}
\caption{Financial series: Plot of the 100 series in the factor matrix}
\label{CFfig5}
\end{figure}

\begin{table}
\begin{center}
\begin{tabular}{|c|cccccc|} \hline
group &	1 &2 &3	&4 &5 &6  \\ \hline
Consumer Discretionary	& 2 & 3 & 4 & 1 & 5 & 4	\\
Consumer Staples & 3 & 4 & 6 & 1 & 0 & 1  \\
Energy	& 2 & 3 & 4 & 9 & 0 & 4  \\
Financials & 0&	5& 2 & 0& 4& 0 \\
Health Care & 0 & 5 & 17 & 0&  1 & 2 \\
Industrials & 12 & 7 & 17 & 0 &	1 & 2 \\
Information Technology & 4 & 0 & 12 & 0 & 5 & 0  \\
Materials & 2 &	5 & 8 &	1 & 1 &	1 \\
Telecommunications Services & 0 & 2 & 1 & 0 & 0 & 0 \\
Utilities & 0 & 6 & 5 & 1 & 0 & 13  \\ \hline
\end{tabular}
\end{center}
\caption{Financial series: Matching the companies and the industry}
\label{table:comgroup}
\end{table}

\begin{table}
\begin{center}
\begin{tabular}{c|ccccc}
 & factor & RSS & RSS/SST & \# factors & \# parameters \\ \hline
Matrix model & (4,10) &  86,739 & 0.701 & 40 & 2,064 \\
Matrix model & (4,20) &  74,848 & 0.610 & 80 & 4,064 \\
Matrix model & (5,10) &  84.517 & 0.688 & 50 & 2,080 \\
Matrix model & (5,20) &  71,535 & 0.582 & 100 & 4,080 \\
Matrix model & (5,30) &  65,037 & 0.530 & 150 & 6,080 \\ \hline
Vector model &    3  & 79,704  & 0.650 & 3 & 9,600 \\
Vector model &    4  & 73,457  & 0.598 & 4 & 12,800 \\
Vector model &    5  & 68,428  & 0.557 & 5 & 16,000 \\
Vector model &    6  & 63,031  & 0.514 & 6 & 19,200 \\
\end{tabular}
\end{center}
\caption{Financial series: Comparison of different models for company financials series}
\label{table:FinComparison}
\end{table}

Table \ref{table:FinComparison} shows a simple comparison between the
matrix factor models and vectorized factor models of various size and
number of factors. Since the time series is short, the table shows
in-sample residual sum of squares. Again, it is seen that
the matrix factor models use much fewer parameters in loading
matrices to achieve similar estimation performance. The number of parameters involved is large
as we are jointly modeling 3,200 time series.

\section{Summary}

In this paper we propose a matrix factor model for high-dimensional
matrix-valued time series, along with an estimation procedure. Theoretical
analysis shows the asymptotic properties of the estimators. Simulation
and real examples are used to illustrate the model and finite sample
properties of the estimators. The real examples show the usefulness of the
model and its ability to reveal interesting features of high-dimensional
time series. Significant amount of effort is needed to investigate model
validation and model comparison procedures for the proposed model. Extensions
to multi-term model and approaches to simply reducing factor redundancy
are important research topics. Extending the model to dynamic factor model
with an imposed dynamic structure on the factor matrix will be useful in
terms of prediction and better
understanding the dynamic nature of the matrix-valued time
series.

\vspace{0.2in}

\noindent
{\Large \bf Acknowledgments}

We thank the Editors and two anonymous
referees for their helpful
insightful comments and suggestions, which lead to significant improvement
of the paper in motivation and justification, design of simulation study and
the analysis of real examples.

\vspace{0.2in}

\bibliographystyle{apalike}
\bibliography{matfm1}

\newpage
\noindent
{\Large \bf Appendix 1: Proofs}

We start by defining some quantities used in the
proofs. Write
\begin{eqnarray*}
\bOmega_{s,ij}(h) &=& \frac{1}{T-h} \sum_{t=1}^{T-h} \Cov (\bR\bF_t  \bc_{i\cdot}', \bR\bF_{t+h}\bc_{j\cdot}'), \\
\bOmega_{fc,ij}(h) &=& \frac{1}{T-h} \sum_{t=1}^{T-h} \Cov (\bF_t  \bc_{i\cdot}', \bF_{t+h}\bc_{j\cdot}')\\
\widehat{\bOmega}_{s,ij}(h) &=& \frac{1}{T-h}  \sum_{t=1}^{T-h} \bR\bF_t \bc_{i\cdot}' \bc_{j\cdot} \bF_{t+h}'\bR',\\
\widehat{\bOmega}_{fc,ij}(h) &=& \frac{1}{T-h} \sum_{t=1}^{T-h} \bF_t \bc_{i\cdot}' \bc_{j\cdot} \bF_{t+h}',\\
\widehat{\bOmega}_{s\epsilon,ij}(h) &=& \frac{1}{T-h} \sum_{t=1}^{T-h}\bR \bF_t\bc_{i\cdot}' \bepsilon_{t+h,j}',\\
\widehat{\bOmega}_{\epsilon s,ij}(h)&=&\frac{1}{T-h}  \sum_{t=1}^{T-h} \bepsilon_{t,i}\bc_{j\cdot} \bF_{t+h}'\bR',\\
\widehat{\bOmega}_{\epsilon,ij}(h)&=&\frac{1}{T-h}  \sum_{t=1}^{T-h} \bepsilon_{t,i} \bepsilon_{t+h,j}'.
\end{eqnarray*}

The following lemma establishes the entry-wise convergence rate of the
covariance matrix estimation of the vectorized latent factor process
$\rmvec(\bF_t)$.

\begin{lemma} Let $F_{t,ij}$ denote the $ij$-th entry of
  $\bF_t$. Under Conditions 1 and 2, for any $i,k=1,2,\ldots,k_1$, and
  $j,l=1,2,\ldots, k_2$, it follows that
\begin{equation*}
\Big| \frac{1}{T-h} \sum_{t=1}^{T-h} \Big( F_{t,ij} F_{t+h,kl}-
       \Cov (F_{t,ij}, F_{t+h,kl})\Big) \Big|= O_p(T^{-1/2}).
\end{equation*}
\end{lemma}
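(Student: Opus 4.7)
The plan is to establish this as a standard mixing-process weak law for the empirical autocovariance at lag $h$. Define the stationary-in-distribution (really, centered) summand
\[
Y_t = F_{t,ij}F_{t+h,kl} - \expec\bigl(F_{t,ij}F_{t+h,kl}\bigr),
\]
so that the quantity of interest equals $(T-h)^{-1}\sum_{t=1}^{T-h} Y_t$. Since $\expec(F_{t,ij})=0$, the centering matches the covariance in the statement. The goal reduces to showing $\Var\!\bigl(\sum_{t=1}^{T-h}Y_t\bigr) = O(T)$, from which Chebyshev's inequality immediately yields the $O_p(T^{-1/2})$ rate for the sample mean.

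First I would verify the moment control on $Y_t$. By Cauchy--Schwarz and Condition~2,
\[
\expec|F_{t,ij}F_{t+h,kl}|^{\gamma} \le \bigl(\expec|F_{t,ij}|^{2\gamma}\bigr)^{1/2}\bigl(\expec|F_{t+h,kl}|^{2\gamma}\bigr)^{1/2} \le C,
\]
so $\|Y_t\|_\gamma$ is uniformly bounded. Next, I would transfer the mixing property: $Y_t$ is a measurable function of $\bigl(\rmvec(\bF_t),\rmvec(\bF_{t+h})\bigr)$, so the mixing coefficients $\alpha_Y(k)$ of $\{Y_t\}$ satisfy $\alpha_Y(k) \le \alpha(k-h)$ for all $k>h$, where $\alpha(\cdot)$ is the mixing coefficient of $\rmvec(\bF_t)$ from Condition~1. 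Hence $\sum_{k=1}^\infty \alpha_Y(k)^{1-2/\gamma} < \infty$ as well.

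Then I would apply Davydov's covariance inequality: for $\gamma>2$,
\[
|\Cov(Y_s,Y_t)| \le 8\,\alpha_Y(|t-s|)^{1-2/\gamma}\,\|Y_s\|_\gamma\,\|Y_t\|_\gamma \le C'\,\alpha(|t-s|-h)^{1-2/\gamma}
\]
for $|t-s|>h$, and trivially bounded by $\|Y_1\|_\gamma^2$ otherwise. Summing,
\[
\Var\!\Bigl(\sum_{t=1}^{T-h} Y_t\Bigr) = \sum_{s,t} \Cov(Y_s,Y_t) \le (T-h)\cdot C''\Bigl(1 + \sum_{k=1}^\infty \alpha(k)^{1-2/\gamma}\Bigr) = O(T).
\]
Dividing by $(T-h)^2$ and applying Chebyshev yields the claim.

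The only genuinely delicate step is the mixing-rate bookkeeping for the product process $\{Y_t\}$ and the application of Davydov's inequality at the right $L^\gamma$ norm — everything else is routine. The moment assumption $\expec|F_{t,ij}|^{2\gamma}\le C$ in Condition~2 is precisely calibrated so that the product $Y_t$ inherits a bounded $\gamma$-th moment, which is exactly what Davydov's inequality demands to convert the summable mixing rate in Condition~1 into a summable covariance bound.
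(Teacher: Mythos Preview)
Your proposal is correct and follows essentially the same argument as the paper: bound the variance of the centered partial sum via Davydov's covariance inequality, splitting into near lags ($|t_1-t_2|\le h$, handled by the uniform moment bound) and far lags (controlled by the summable mixing coefficients), then apply Chebyshev. Your write-up is in fact more explicit than the paper's about the Cauchy--Schwarz step yielding $\|Y_t\|_\gamma\le C$ and about the mixing-coefficient shift $\alpha_Y(k)\le\alpha(k-h)$, but the structure is identical.
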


\begin{proof}
Under Conditions 1 and 2, by Davydov's inequality, it follows that
\begin{eqnarray*}
\lefteqn{\expec \left( \frac{1}{T-h}\sum_{t=1}^{T-h} \Big( F_{t,ij} F_{t+h,kl} - \Cov(F_{t,ij}, F_{t+h, kl})\Big) \right)^2}\\
&=&\frac{1}{(T-h)^2} \sum_{|t_1-t_2|\leq h} {\rm E}  [F_{t_1,ij} F_{t_1+h, kl}- {\rm E} (F_{t_1,ij} F_{t_1+h,kl})]
 [F_{t_2,ij} F_{t_2+h, kl}- {\rm E} (F_{t_2,ij} F_{t_2+h,kl})]\}\\
&&+ \frac{1}{(T-h)^2} \sum_{|t_1- t_2|>h} {\rm E}  [F_{t_1,ij} F_{t_1+h, kl}- {\rm E} (F_{t_1,ij} F_{t_1+h,kl})]
 [F_{t_2,ij} F_{t_2+h, kl}- {\rm E} (F_{t_2,ij} F_{t_2+h,kl})]\} \\
&\leq &\frac{C}{T-h}+\frac{C}{T-h} \sum_{u=1}^{T-2h-1} \alpha(u)^{1-2/\gamma}=O(T^{-1}).
\end{eqnarray*}
Here $C$ denotes a constant.
\end{proof}

Under the matrix-valued factor model \eqref{eqn:2dmodel}, the
$\bR \bF_t \bC'$ can be view as the signal part and $\bE_t$ as
noise. The following lemma concerns the rates of convergence for
estimation of the signal, the noise, and the interaction between the two.
\begin{lemma}
\label{lemma:fourrates}
 Under Conditions 1-4, it holds that
\begin{eqnarray*}
\sum_{i=1}^{p_2} \sum_{j=1}^{p_2} \|\widehat{\bOmega}_{s,ij}(h)-\bOmega_{s,ij}(h)\|_2^2&=&O_p(p_1^{2-2\delta_1} p_2^{2-2\delta_2}T^{-1}),\\
\sum_{i=1}^{p_2} \sum_{j=1}^{p_2} \|\widehat{\bOmega}_{s\epsilon,ij}(h)\|^2_2&=&O_p(p_1^{2-\delta_1}p_2^{2-\delta_2}T^{-1}),\\
\sum_{i=1}^{p_2} \sum_{j=1}^{p_2} \|\widehat{\bOmega}_{\epsilon s,ij}(h)\|_2^2&=&O_p(p_1^{2-\delta_1}p_2^{2-\delta_2}T^{-1}),\\
\label{eqn:epsilonhatrate}
\sum_{i=1}^{p_2} \sum_{j=1}^{p_2} \|\widehat{\bOmega}_{\epsilon,ij}(h)\|_2^2&=&O_p(p_1^2p_2^2 T^{-1}).
\end{eqnarray*}
\end{lemma}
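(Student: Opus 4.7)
The plan is to treat each of the four bounds via the same template: write each matrix as a bilinear form sandwiched between the loading matrices (or between the loading matrices and the noise), pull out the spectral norms $\|\bR\|_2^2\asymp p_1^{1-\delta_1}$ and $\|\bC\|_2^2\asymp p_2^{1-\delta_2}$ from Condition 4, and reduce the remaining stochastic piece either to Lemma 1 (for the factor--factor part) or to a variance calculation that uses the white-noise assumption on $\rmvec(\bE_t)$ and the independence of $\bF$ and $\bE$.

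For the first bound, note that $\widehat{\bOmega}_{s,ij}(h)-\bOmega_{s,ij}(h)=\bR\bigl(\widehat{\bOmega}_{fc,ij}(h)-\bOmega_{fc,ij}(h)\bigr)\bR'$, so $\|\cdot\|_2\le\|\bR\|_2^2\|\widehat{\bOmega}_{fc,ij}-\bOmega_{fc,ij}\|_F$. The $(a,b)$ entry of the inner matrix equals $\sum_{l,m}c_{il}c_{jm}U_{a,b,l,m}$ where $U_{a,b,l,m}=(T-h)^{-1}\sum_{t}[F_{t,al}F_{t+h,bm}-\Cov(F_{t,al},F_{t+h,bm})]=O_p(T^{-1/2})$ by Lemma 1. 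Summing $\bigl(\sum_{l,m}c_{il}c_{jm}U_{a,b,l,m}\bigr)^2$ first over $j$ gives $\|\bC\|_2^2\|U_{ab}\bc_{i\cdot}'\|^2$, then over $i$ gives $\|\bC\|_2^2\|U_{ab}\|_2^2\|\bC\|_F^2$. Since $k_1,k_2$ are fixed, $\|\bC\|_F^2\asymp\|\bC\|_2^2\asymp p_2^{1-\delta_2}$ and $\|U_{ab}\|_F^2=O_p(T^{-1})$, and summing the $O(1)$ many $(a,b)$ pairs yields $\sum_{i,j}\|\widehat{\bOmega}_{fc,ij}-\bOmega_{fc,ij}\|_F^2=O_p(p_2^{2-2\delta_2}T^{-1})$. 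Multiplying by $\|\bR\|_2^4\asymp p_1^{2-2\delta_1}$ gives the target rate.

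For the second and third bounds, $\widehat{\bOmega}_{s\epsilon,ij}(h)=\bR\bA_{ij}$ with $\bA_{ij}=(T-h)^{-1}\sum_t\bF_t\bc_{i\cdot}'\bepsilon_{t+h,j}'$ of size $k_1\times p_1$, and $\expec\bA_{ij}=\bzero$ by independence and zero means. Computing $\expec[(A_{ij})_{ab}^2]$ and using the whiteness $\expec[\epsilon_{t+h,b,j}\epsilon_{s+h,b,j}]=0$ for $s\neq t$, the double sum collapses to the diagonal, giving a bound of the form $(T-h)^{-1}\cdot\|\bc_{i\cdot}\|^2\cdot(\bSigma_e)_{(b,j),(b,j)}$ up to constants from $\expec[F_{t,am}F_{t,am'}]$ which is $O(1)$ by Condition 2. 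Summing over the $k_1 p_1$ entries and then over $j$ (a factor $p_2$) and over $i$ (which produces $\|\bC\|_F^2\asymp p_2^{1-\delta_2}$) gives $\sum_{i,j}\expec\|\bA_{ij}\|_F^2 = O(p_1 p_2^{2-\delta_2}T^{-1})$, and multiplying by $\|\bR\|_2^2\asymp p_1^{1-\delta_1}$ yields the claim. The bound for $\widehat{\bOmega}_{\epsilon s,ij}(h)$ is completely symmetric.

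For the fourth bound, expand $(\widehat{\bOmega}_{\epsilon,ij})_{ab}^2$ entry-wise: whiteness of $\rmvec(\bE_t)$ (interpreted in the usual strong sense, so that $\bE_t$'s at distinct times are independent) kills every term with $s\neq t$, since in those cases the four indices $\{t,t+h,s,s+h\}$ can be split into disjoint time groups whose factors are independent and at least one group has an unpaired mean-zero factor. The diagonal contribution $\expec[\epsilon_{t,a,i}^2\epsilon_{t+h,b,j}^2]$ is bounded by $(\bSigma_e)_{(a,i),(a,i)}^{1/2}(\bSigma_e)_{(b,j),(b,j)}^{1/2}$ (or by bounded fourth moments, implicit in the paper's setup) uniformly by Condition 3, giving $\expec[(\widehat{\bOmega}_{\epsilon,ij})_{ab}^2]=O(T^{-1})$. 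Summing over the $p_1^2 p_2^2$ pairs $(a,b,i,j)$ and using $\|\cdot\|_2\le\|\cdot\|_F$ yields the stated rate. The main technical hurdle is the fourth-moment bookkeeping in this last step; the plan is to organize the $(s,t)$ pairings carefully so that whiteness of $\{\bE_t\}$ leaves only the $s=t$ diagonal, avoiding any explicit fourth-cumulant assumption beyond what Condition 3 and the paper's white-noise convention provide.
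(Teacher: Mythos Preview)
Your proposal is correct and follows essentially the same approach as the paper: factor out $\|\bR\|_2$ and the $\bc_{i\cdot}$-norms via Condition 4, then control the remaining stochastic core either by Lemma 1 (factor--factor) or by a second-moment computation using independence and whiteness (the three noise terms). The only cosmetic difference is that the paper packages the first two bounds through the identity $\rmvec(\bF_t\bc_{i\cdot}'\bc_{j\cdot}\bF_{t+h}')=(\bF_{t+h}\otimes\bF_t)\rmvec(\bc_{i\cdot}'\bc_{j\cdot})$ (and similarly $\bF_t\bc_{i\cdot}'\bepsilon_{t+h,j}'$ via $\bepsilon_{t+h,j}\otimes\bF_t$), pulling out $\|\bc_{i\cdot}\|_2^2\|\bc_{j\cdot}\|_2^2$ in one stroke, whereas you do the same bookkeeping entry-wise; and the paper simply asserts the last two rates without the fourth-moment discussion you (rightly) flag.
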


\begin{proof}
Firstly, we have
\begin{eqnarray*}
\lefteqn{\|\widehat{\bOmega}_{fc,ij}(h)-\bOmega_{fc,ij}(h)\|_2^2 \leq \| \widehat{\bOmega}_{fc,ij}(h) -\bOmega_{fc,ij}(h)\|_F^2} \nonumber \\
& =&\| \rmvec(\widehat{\bOmega}_{fc,ij}(h)-{\bOmega}_{fc,ij}(h))\|_2^2 \nonumber \\
&=&\Big\| \frac{1}{T-h}\sum_{t=1}^{T-h}\rmvec (\bF_{t}\bc_{i\cdot}' \bc_{j \cdot}  \bF_{t+h}'- \expec (\bF_{t} \bc_{i\cdot}' \bc_{j \cdot} \bF_{t+h}) ) \Big\|_2^2 \nonumber \\
&=&\Big\| \frac{1}{T-h}\sum_{t=1}^{T-h} \big[\bF_{t+h} \otimes \bF_t- \expec (\bF_{t+h} \otimes \bF_t) \big]\rmvec (\bc_{i\cdot}' \bc_{j\cdot})\Big\|_2^2 \nonumber \\
&\leq &  \Big\| \frac{1}{T-h}\sum_{t=1}^{T-h}  (\bF_{t+h} \otimes \bF_t- \expec (\bF_{t+h} \otimes \bF_t)\Big\|_2^2  \| \rmvec (\bc_{i\cdot}'\bc_{j\cdot})\|_2^2 \\
&=&  \Big\| \frac{1}{T-h}\sum_{t=1}^{T-h}  (\bF_{t+h} \otimes \bF_t- \expec (\bF_{t+h} \otimes \bF_t)\Big\|_F^2  \| \bc_{i\cdot}' \bc_{j\cdot}\|_F^2 \\
&\leq & \Big\| \frac{1}{T-h}\sum_{t=1}^{T-h}  (\bF_{t+h} \otimes \bF_t- \expec (\bF_{t+h} \otimes \bF_t)\Big\|_F^2  \|\bc_{i\cdot}\|_2^2 \cdot \|\bc_{j\cdot}\|_2^2.
\end{eqnarray*}
Hence, by Condition 4 and Lemma 1, it follows that
\begin{eqnarray*}
\lefteqn{\sum_{i=1}^{p_2} \sum_{j=1}^{p_2}\| \widehat{\bOmega}_{s,ij}(h)-\bOmega_{s,ij}(h)\|_2^2=\sum_{i=1}^{p_2} \sum_{j=1}^{p_2}\| \bR (\widehat{\bOmega}_{fc,ij}(h)-\bOmega_{fc,ij}(h))\bR'\|_2^2}\\
&\leq&  \|\bR\|_2^4 \Big\| \frac{1}{T-h}\sum_{t=1}^{T-h}  (\bF_{t+h} \otimes \bF_t- \expec (\bF_{t+h} \otimes \bF_t)\Big\|_F^2 \left( \sum_{i=1}^{p_2}\|\bc_{i\cdot}\|_2^2 \right)^2 \\
&=&\|\bR\|_2^4 \Big\| \frac{1}{T-h}\sum_{t=1}^{T-h}  (\bF_{t+h} \otimes \bF_t- \expec (\bF_{t+h} \otimes \bF_t)\Big\|_F^2 \|\bC\|_F^4\\
&\leq & k_2^2 \|\bR\|_2^4 \Big\| \frac{1}{T-h}\sum_{t=1}^{T-h}  (\bF_{t+h} \otimes \bF_t- \expec (\bF_{t+h} \otimes \bF_t)\Big\|_F^2 \|\bC\|_2^4=O_p(p_1^{2-2\delta_1}p_2^{2-2\delta_2}T^{-1}).
\end{eqnarray*}

Similarly, for the interaction component between signal and noise, we have
\begin{eqnarray*}
\lefteqn{\sum_{i=1}^{p_2} \sum_{j=1}^{p_2}\| \widehat{\bOmega}_{s\epsilon,ij}(h)\|_2^2 \leq  \sum_{i=1}^{p_2}\sum_{j=1}^{p_2} \|\bR\|_2^2 \Big\| \frac{1}{T-h} \sum_{t=1}^{T-h} \bF_t\bc_{i\cdot}'\bepsilon_{t+h,j}' \Big\|_2^2}\\
&\leq&  \|\bR\|_2^2  \left(\sum_{j=1}^{p_2} \Big\| \frac{1}{T-h} \sum_{t=1}^{T-h} \bepsilon_{t+h,j} \otimes \bF_t  \Big\|_2^2 \right) \left( \sum_{i=1}^{p_2} \|  \bc_{i\cdot}\|_2^2 \right)\\
&= &O_p(p_1^{2-\delta_1}p_2^{2-\delta_2}T^{-1}),
\end{eqnarray*}
and
\begin{eqnarray*}
\sum_{i=1}^{p_2} \sum_{j=1}^{p_2} \|\widehat{\bOmega}_{\epsilon s,ij}(h)\|_2^2&=&O_p(p_1^{2-\delta_1}p_2^{2-\delta_2}T^{-1}).
\end{eqnarray*}

Lastly, for the noise term, we have
\begin{eqnarray*}
\sum_{i=1}^{p_2} \sum_{j=1}^{p_2}\|\widehat{\bOmega}_{\epsilon,ij}(h)\|_2^2= \sum_{i=1}^{p_2} \sum_{j=1}^{p_2} \|\frac{1}{T-h} \sum_{t=1}^{T-h} \bepsilon_{t,i} \bepsilon_{t+h,j}' \|_2^2
=O_p(p_1^2p_2^2 T^{-1}).
\end{eqnarray*}
\end{proof}

With the four rates established in Lemma \ref{lemma:fourrates}, we
can now study the rate of convergence for the observed covariance
matrix $\widehat{\bOmega}_{x,ij}(h)$.
\begin{lemma}
\label{lemma:omegayijdiffrate}
Under Conditions 1-4, it holds that
\begin{equation*}
\sum_{i=1}^{p_2} \sum_{j=1}^{p_2} \|\widehat{\bOmega}_{x,ij}(h)- \bOmega_{x,ij}(h)\|_2^2=O_p(p_1^2 p_2^2 T^{-1}).
\end{equation*}
\end{lemma}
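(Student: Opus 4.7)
The plan is to decompose $\widehat{\bOmega}_{x,ij}(h)$ into four pieces corresponding to signal, signal-noise cross terms, and pure noise, then match them against the four rates already established in Lemma~\ref{lemma:fourrates}.

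First I would use the signal-plus-noise representation $\bx_{t,j} = \bR\bF_t \bc_{j\cdot}' + \bepsilon_{t,j}$ from \eqref{eqn:ytj} to expand the product $\bx_{t,i}\bx_{t+h,j}'$ into four terms. Averaging over $t=1,\ldots,T-h$ yields
\begin{equation*}
\widehat{\bOmega}_{x,ij}(h) \;=\; \widehat{\bOmega}_{s,ij}(h) + \widehat{\bOmega}_{s\epsilon,ij}(h) + \widehat{\bOmega}_{\epsilon s,ij}(h) + \widehat{\bOmega}_{\epsilon,ij}(h).
\end{equation*}
On the population side, because $\bE_t$ is white and independent of $\bF_t$, all cross terms have zero expectation for $h\geq 1$, so $\bOmega_{x,ij}(h)=\bOmega_{s,ij}(h)$. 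Subtracting then gives
\begin{equation*}
\widehat{\bOmega}_{x,ij}(h)-\bOmega_{x,ij}(h) \;=\; \bigl[\widehat{\bOmega}_{s,ij}(h)-\bOmega_{s,ij}(h)\bigr] + \widehat{\bOmega}_{s\epsilon,ij}(h) + \widehat{\bOmega}_{\epsilon s,ij}(h) + \widehat{\bOmega}_{\epsilon,ij}(h).
\end{equation*}

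Next I would apply the elementary inequality $\|A_1+A_2+A_3+A_4\|_2^2 \leq 4\sum_{\ell=1}^{4}\|A_\ell\|_2^2$ term by term and sum over $i,j=1,\ldots,p_2$:
\begin{equation*}
\sum_{i=1}^{p_2}\sum_{j=1}^{p_2}\|\widehat{\bOmega}_{x,ij}(h)-\bOmega_{x,ij}(h)\|_2^2 \;\leq\; 4\sum_{i,j}\Bigl[\|\widehat{\bOmega}_{s,ij}(h)-\bOmega_{s,ij}(h)\|_2^2 + \|\widehat{\bOmega}_{s\epsilon,ij}(h)\|_2^2 + \|\widehat{\bOmega}_{\epsilon s,ij}(h)\|_2^2 + \|\widehat{\bOmega}_{\epsilon,ij}(h)\|_2^2\Bigr].
\end{equation*}
Lemma~\ref{lemma:fourrates} supplies the four rates $O_p(p_1^{2-2\delta_1}p_2^{2-2\delta_2}T^{-1})$, $O_p(p_1^{2-\delta_1}p_2^{2-\delta_2}T^{-1})$ twice, and $O_p(p_1^2 p_2^2 T^{-1})$. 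Since $\delta_1,\delta_2\in[0,1]$ by Condition~4, each of the first three rates is dominated by $p_1^2 p_2^2 T^{-1}$, so the pure noise term governs and the claimed bound follows.

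There is no real obstacle here: the argument is a clean decomposition plus triangle inequality plus a rate comparison. The only point that deserves care is noting that $\bOmega_{x,ij}(h)=\bOmega_{s,ij}(h)$ for $h\geq 1$ (so no extra bias terms are introduced when we subtract), which uses precisely the white-noise and independence assumptions on $\bE_t$ invoked earlier in Section~\ref{sec:method}.
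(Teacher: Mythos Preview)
Your proposal is correct and follows essentially the same route as the paper: decompose $\widehat{\bOmega}_{x,ij}(h)$ into the four pieces $\widehat{\bOmega}_{s,ij}(h)+\widehat{\bOmega}_{s\epsilon,ij}(h)+\widehat{\bOmega}_{\epsilon s,ij}(h)+\widehat{\bOmega}_{\epsilon,ij}(h)$, apply the inequality $\|\sum_{\ell=1}^4 A_\ell\|_2^2\le 4\sum_\ell\|A_\ell\|_2^2$, and invoke Lemma~\ref{lemma:fourrates}. Your write-up is in fact a bit more explicit than the paper's in noting that $\bOmega_{x,ij}(h)=\bOmega_{s,ij}(h)$ for $h\ge 1$ and in checking that the noise rate dominates the other three.
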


\begin{proof}
From the definition of $\widehat{\bOmega}_{x,ij}(h)$
in \eqref{eqn:hatomegayijdef}, we can decompose
$\widehat{\bOmega}_{x,ij}(h)$ into four parts as follows,
\begin{eqnarray*}
\lefteqn{\widehat{\bOmega}_{x,ij}(h)=\frac{1}{T-h}\sum_{t=1}^{T-h} \bx_{t,i}\bx_{t+h,j}'}\\
&=&\frac{1}{T-h}\sum_{t=1}^{T-h} (\bR \bF_t \bc_{i\cdot}'+\bepsilon_{t,i})(\bR \bF_{t+h} \bc_{j\cdot}'+\bepsilon_{t+h,j})'\\
&=& \widehat{\bOmega}_{s,ij}(h)+ \widehat{ \bOmega}_{s\epsilon,ij} (h) +\widehat{\bOmega}_{\epsilon s,ij}(h)+ \widehat{\bOmega}_{\epsilon,ij}(h).
\end{eqnarray*}
Then by Lemma 2, it follows that
\begin{eqnarray*}
\lefteqn{\sum_{i=1}^{p_2} \sum_{j=1}^{p_2} \big\|  \widehat{\bOmega}_{x,ij}(h)- \bOmega_{x,ij}(h) \big\|_2^2}\nonumber \\
&\leq& 4 \sum_{i=1}^{p_2} \sum_{j=1}^{p_2}\left( \| \widehat{ \bOmega}_{s,ij}(h)-\bOmega_{s,ij}(h) \|_2^2 +\|\widehat{\bOmega}_{s\epsilon,ij}(h)\|_2^2
+\| \widehat{\bOmega}_{\epsilon s,ij}(h)\|_2^2+\|\widehat{\bOmega}_{\epsilon,ij}(h)\|_2^2 \right) \nonumber\\
&=&O_p(p_1^2p_2^2 T^{-1}).
\end{eqnarray*}
\end{proof}

\begin{lemma} Under Conditions 1-4, and $p_1^{\delta_1}p_2^{\delta_2} T^{-1/2}=o(1)$, it holds that
\[
\|\widehat{\bM}_1 -\bM_1\|_2= O_p(p_1^{2-\delta_1} p_2^{2-\delta_2} T^{-1/2}).
\]
\end{lemma}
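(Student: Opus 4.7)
The plan is to write the difference $\widehat{\bM}_1-\bM_1$ as a telescoping sum over $(h,i,j)$, and for each term exploit the identity
\[
\widehat{\bOmega}_{x,ij}\widehat{\bOmega}_{x,ij}' - \bOmega_{x,ij}\bOmega_{x,ij}' = (\widehat{\bOmega}_{x,ij}-\bOmega_{x,ij})\widehat{\bOmega}_{x,ij}' + \bOmega_{x,ij}(\widehat{\bOmega}_{x,ij}-\bOmega_{x,ij})'.
\]
After triangle inequality on the spectral norm and rewriting $\widehat{\bOmega}_{x,ij}=\bOmega_{x,ij}+(\widehat{\bOmega}_{x,ij}-\bOmega_{x,ij})$, everything reduces to controlling two sums: a cross term $\sum_{i,j}\|\widehat{\bOmega}_{x,ij}-\bOmega_{x,ij}\|_2\,\|\bOmega_{x,ij}\|_2$ and a quadratic term $\sum_{i,j}\|\widehat{\bOmega}_{x,ij}-\bOmega_{x,ij}\|_2^2$. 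The second is already given by Lemma~\ref{lemma:omegayijdiffrate} to be $O_p(p_1^2p_2^2T^{-1})$; the first will be handled by Cauchy--Schwarz once I have a good bound on $\sum_{i,j}\|\bOmega_{x,ij}\|_2^2$.

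The key step --- which I expect to be the main obstacle --- is bounding $\sum_{i=1}^{p_2}\sum_{j=1}^{p_2}\|\bOmega_{x,ij}(h)\|_2^2$. Starting from the decomposition $\bOmega_{x,ij}(h)=\bR\,\bOmega_{fc,ij}(h)\,\bR'$ (with $\bOmega_{fc,ij}$ defined at the start of the appendix), I would argue that for fixed $k_1,k_2$ the entries of $\bOmega_{fc,ij}(h)$ are bilinear in $\bc_{i\cdot}$ and $\bc_{j\cdot}$ with coefficients that are bounded by the corresponding autocovariance entries of $\bSigma_f(h)$. Using Condition~2 (boundedness of $\|\bSigma_f(h)\|_2$) this gives $\|\bOmega_{fc,ij}(h)\|_2 \leq C\,\|\bc_{i\cdot}\|_2\,\|\bc_{j\cdot}\|_2$. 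Summing and using Condition~4 (so that $\sum_i\|\bc_{i\cdot}\|_2^2 = \|\bC\|_F^2 \leq k_2\|\bC\|_2^2 \asymp p_2^{1-\delta_2}$) together with $\|\bR\|_2^4 \asymp p_1^{2-2\delta_1}$ yields
\[
\sum_{i,j}\|\bOmega_{x,ij}(h)\|_2^2 \leq \|\bR\|_2^4 \sum_{i,j}\|\bOmega_{fc,ij}(h)\|_2^2 = O\!\left(p_1^{2-2\delta_1}p_2^{2-2\delta_2}\right).
\]

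Putting the pieces together, Cauchy--Schwarz gives
\[
\sum_{i,j}\|\widehat{\bOmega}_{x,ij}-\bOmega_{x,ij}\|_2\|\bOmega_{x,ij}\|_2 \leq \Bigl(\sum_{i,j}\|\widehat{\bOmega}_{x,ij}-\bOmega_{x,ij}\|_2^2\Bigr)^{1/2}\Bigl(\sum_{i,j}\|\bOmega_{x,ij}\|_2^2\Bigr)^{1/2} = O_p\!\left(p_1^{2-\delta_1}p_2^{2-\delta_2}T^{-1/2}\right),
\]
while the quadratic remainder is $O_p(p_1^2p_2^2T^{-1})$, which under the rate assumption $p_1^{\delta_1}p_2^{\delta_2}T^{-1/2}=o(1)$ is of strictly smaller order than $p_1^{2-\delta_1}p_2^{2-\delta_2}T^{-1/2}$. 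Finally, summing over the finite index $h=1,\ldots,h_0$ only costs a constant factor, so the stated rate $\|\widehat{\bM}_1-\bM_1\|_2 = O_p(p_1^{2-\delta_1}p_2^{2-\delta_2}T^{-1/2})$ follows.
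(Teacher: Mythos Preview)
Your proposal is correct and follows essentially the same approach as the paper's own proof: the same telescoping decomposition of $\widehat{\bOmega}\widehat{\bOmega}'-\bOmega\bOmega'$ into a quadratic remainder and a cross term, the same bound $\sum_{i,j}\|\bOmega_{x,ij}(h)\|_2^2 = O(p_1^{2-2\delta_1}p_2^{2-2\delta_2})$ via $\bOmega_{x,ij}=\bR\,\bOmega_{fc,ij}\,\bR'$ and Conditions~2 and~4, and the same Cauchy--Schwarz step combined with Lemma~\ref{lemma:omegayijdiffrate}. The only cosmetic difference is that the paper makes the bilinearity explicit through the identity $\bOmega_{fc,ij}(h)=(\bc_{i\cdot}\otimes\bI_{k_1})\bSigma_f(h)(\bc_{j\cdot}'\otimes\bI_{k_1})$, which yields your inequality $\|\bOmega_{fc,ij}(h)\|_2\leq C\,\|\bc_{i\cdot}\|_2\|\bc_{j\cdot}\|_2$ directly.
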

\begin{proof}
From the definitions of $\widehat{\bM}_1$ and $\bM_1$ in \eqref{eqn:hatmdef} and \eqref{eqn:Mexpression}, it follows that
\begin{eqnarray*}
\lefteqn{\|\widehat{\bM}_1-\bM_1\|_2 =\Big\| \sum_{h=1}^{h_0} \sum_{i=1}^{p_2}
\sum_{j=1}^{p_2}
\Big(\widehat{\bOmega}_{x,ij}(h) \widehat{\bOmega}_{x,ij}'(h)
- \bOmega_{x,ij}(h) \bOmega_{x,ij}'(h) \Big) \Big\|_2}\\
&\leq& \sum_{h=1}^{h_0}  \sum_{i=1}^{p_2}\sum_{j=1}^{p_2} \Big( \| ( \widehat{\bOmega}_{x,ij}(h)-\bOmega_{x,ij}(h) ) ( \widehat{\bOmega}_{x,ij}(h)-\bOmega_{x,ij}(h) )' \|_2
+ 2 \|\bOmega_{x,ij}(h)\|_2  \|\widehat{\bOmega}_{x,ij}(h)-\bOmega_{x,ij}(h) \|_2 \Big ) \\
&\leq&  \sum_{h=1}^{h_0} \sum_{i=1}^{p_2} \sum_{j=1}^{p_2} \|  \widehat{\bOmega}_{x,ij}(h)-\bOmega_{x,ij}(h) \|_2^2
+2 \sum_{h=1}^{h_0}  \sum_{i=1}^{p_2}\sum_{j=1}^{p_2} \|\bOmega_{x,ij}(h)\|_2  \|\widehat{\bOmega}_{x,ij}(h)-\bOmega_{x,ij}(h) \|_2.
\end{eqnarray*}

We have
\begin{eqnarray}
\lefteqn{\sum_{i=1}^{p_2} \sum_{j=1}^{p_2}\|\bOmega_{x,ij}(h)\|_2^2 =\sum_{i=1}^{p_2} \sum_{j=1}^{p_2} \| \bR\bOmega_{fc,ij}(h) \bR'\|_2^2
\leq \sum_{i=1}^{p_2} \sum_{j=1}^{p_2} \|\bR\|_2^4 \|\bOmega_{fc,ij}(h)\|_2^2} \nonumber \\
&  \leq & \|\bR\|_2^4 \cdot \Big\| \frac{1}{T-h}\sum_{t=1}^{T-h}  \expec (\bF_{t+h} \otimes \bF_t)\Big\|_2^2 \cdot   \left(\sum_{i=1}^{p_2}\| \bc_{i\cdot}\|_2^2\right)^2 \nonumber \\
& = & O(p_1^{2-2\delta_1}p_2^{2-2\delta_2}). \label{eqn:gamma1}
\end{eqnarray}

By \eqref{eqn:gamma1} and Lemma \ref{lemma:omegayijdiffrate},
\begin{eqnarray}
\lefteqn{\left(\sum_{i=1}^{p_2}\sum_{j=1}^{p_2} \|\bOmega_{x,ij}(h)\|_2  \|\widehat{\bOmega}_{x,ij}(h)-\bOmega_{x,ij}(h) \|_2\right)^2} \nonumber \\
&\leq& \left(\sum_{i=1}^{p_2}\sum_{j=1}^{p_2} \|\bOmega_{x,ij}(h)\|_2^2\right) \cdot \left( \sum_{i=1}^{p_2}\sum_{j=1}^{p_2}  \|\widehat{\bOmega}_{x,ij}(h)-\bOmega_{x,ij}(h) \|_2^2  \right)\nonumber \\
\label{eqn:productrate}
&\leq& O_p(p_1^{2-2\delta_1}p_2^{2-2\delta_2}p_1^2p_2^2T^{-1}) = O_p(p_1^{4-2\delta_1}p_2^{4-2\delta_2}T^{-1}),
\end{eqnarray}
where the second inequality follows from Cauchy-Schwarz inequality.

From \eqref{eqn:productrate}, Lemma \ref{lemma:omegayijdiffrate}, and the
condition $p_1^{\delta_1}p_2^{\delta_2} T^{-1/2}=o(1)$, it follows
that
\[
\|\widehat{\bM}_1-\bM_1\|_2 =O_p(p_1^{2-\delta_1} p_2^{2-\delta_2} T^{-1/2}).
\]
\end{proof}

\begin{lemma} Under Conditions 2 and 3, we have
\[
\lambda_i(\bM_1) \asymp p_1^{2-2 \delta_1} p_2^{2-2\delta_2}, \quad i=1,2,... k_1,
\]
where $\lambda_i(\bM_1)$ denotes the $i$-th largest eigenvalue of $\bM_1$.
\end{lemma}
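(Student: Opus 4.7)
My plan is to reduce the eigenvalue problem for the $p_1 \times p_1$ matrix $\bM_1$ to one on a $k_1 \times k_1$ core matrix, then bound that core both above and below using the conditions.

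\textbf{Step 1 (Reduction).} Starting from equation (\ref{eqn:Mexpression}) together with $\bOmega_{x,ij}(h)=\bR\bOmega_{fc,ij}(h)\bR'$, I would write $\bM_1=\bR\bPsi\bR'$ with
\[
\bPsi \;=\; \sum_{h=1}^{h_0}\sum_{i,j=1}^{p_2}\bOmega_{fc,ij}(h)\,\bR'\bR\,\bOmega_{fc,ij}(h)',
\]
a $k_1\times k_1$ PSD matrix. Using the decomposition $\bR=\bQ_1\bW_1$ with orthonormal $\bQ_1$, the non-zero eigenvalues of $\bM_1$ coincide with those of $\bW_1\bPsi\bW_1'$. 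By Condition 4 applied to $\bR$, the singular values of $\bW_1$ all satisfy $\|\bW_1\|_2^2\asymp\|\bW_1\|_{\min}^2\asymp p_1^{1-\delta_1}$, so for every $i=1,\ldots,k_1$,
\[
\|\bW_1\|_{\min}^2\,\lambda_i(\bPsi)\;\le\;\lambda_i\bigl(\bW_1\bPsi\bW_1'\bigr)\;\le\;\|\bW_1\|_2^2\,\lambda_i(\bPsi).
\]
It therefore suffices to prove $\lambda_i(\bPsi)\asymp p_1^{1-\delta_1}p_2^{2-2\delta_2}$ for $i=1,\ldots,k_1$.

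\textbf{Step 2 (Upper bound).} Using $\|\bR'\bR\|_2=\|\bR\|_2^2\asymp p_1^{1-\delta_1}$ and the Cauchy-Schwarz style bound already used in Lemma 2, namely $\|\bOmega_{fc,ij}(h)\|_F\le\|\bSigma_f(h)\|_2\|\bc_{i\cdot}\|_2\|\bc_{j\cdot}\|_2$, I would obtain
\[
\|\bPsi\|_2\;\le\;\|\bR\|_2^2\sum_{h,i,j}\|\bOmega_{fc,ij}(h)\|_F^2\;\le\;h_0\,\|\bR\|_2^2\,\max_h\|\bSigma_f(h)\|_2^2\,\|\bC\|_F^4\;\lesssim\;p_1^{1-\delta_1}p_2^{2-2\delta_2},
\]
where I use Condition 2 ($\|\bSigma_f(h)\|_2\asymp 1$) and $\|\bC\|_F^2\le k_2\|\bC\|_2^2\asymp p_2^{1-\delta_2}$. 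This yields the matching upper bound $\lambda_{\max}(\bM_1)\lesssim p_1^{2-2\delta_1}p_2^{2-2\delta_2}$.

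\textbf{Step 3 (Lower bound -- the main obstacle).} Drop all but the lag $h^*\in\{1,\ldots,h_0\}$ in Condition 2 and use $\bR'\bR\succeq\|\bR\|_{\min}^2\bI_{k_1}$ to get
\[
\bPsi\;\succeq\;\|\bR\|_{\min}^2\sum_{i,j=1}^{p_2}\bOmega_{fc,ij}(h^*)\bOmega_{fc,ij}'(h^*).
\]
I would then exploit the Kronecker representation $\bOmega_{fc,ij}(h^*)=(\bc_{i\cdot}\otimes\bI_{k_1})\bSigma_f(h^*)(\bc_{j\cdot}^{T}\otimes\bI_{k_1})$ and the identity $\sum_j(\bc_{j\cdot}^{T}\otimes\bI_{k_1})(\bc_{j\cdot}\otimes\bI_{k_1})=(\bC^{T}\bC)\otimes\bI_{k_1}$, together with $\bC^{T}\bC\succeq\|\bC\|_{\min}^2\bI_{k_2}\asymp p_2^{1-\delta_2}\bI_{k_2}$, to collapse the double sum into a quadratic form in $\bSigma_f(h^*)$. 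The serial-dependence non-degeneracy built into Condition 2 (the non-vanishing of $\sum_t\Cov(\bff_{t,i},\bff_{t+h,i})$ for every row and column), combined with the rank/singular-value lower bound $\sigma_k(\bSigma_f(h^*))\asymp 1$ and Condition 5 (so that the resulting $k_1\times k_1$ matrix is full rank), gives the required lower bound
\[
\sum_{i,j}\bOmega_{fc,ij}(h^*)\bOmega_{fc,ij}'(h^*)\;\succeq\;c\,p_2^{2-2\delta_2}\bI_{k_1}.
\]
Multiplying by $\|\bR\|_{\min}^2\asymp p_1^{1-\delta_1}$ then $\|\bW_1\|_{\min}^2$ closes the argument. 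The delicate step is this last Kronecker-algebra lower bound: one needs to avoid losing the right $p_2^{2-2\delta_2}$ factor after the double collapse, and this is where both parts of Condition 2 (the spectral and the non-degeneracy parts) plus Condition 5 are essential.
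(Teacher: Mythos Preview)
Your reduction in Step 1 and the upper bound in Step 2 are fine and match the paper's argument (the paper leaves the upper bound implicit via the calculation in (\ref{eqn:gamma1})). Step 3 follows the paper's route up to the $j$-collapse, but the part you flag as ``delicate'' is in fact left open in your write-up.

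After the $j$-collapse you are looking at a $k_1\times k_1$ matrix of the form
\[
\sum_{i=1}^{p_2}(\bc_{i\cdot}\otimes\bI_{k_1})\,\bSigma_f(h^*)\,(\bC'\bC\otimes\bI_{k_1})\,\bSigma_f'(h^*)\,(\bc_{i\cdot}'\otimes\bI_{k_1}).
\]
You propose to use $\bC'\bC\succeq\|\bC\|_{\min}^2\bI_{k_2}$ and then appeal to Condition 2 and Condition 5. This is not enough: once you replace $\bC'\bC$ by a multiple of the identity, the remaining $i$-sum cannot be ``collapsed'' by the same Kronecker identity because $\bSigma_f(h^*)\bSigma_f'(h^*)$ sits in the middle and need only have rank $\max(k_1,k_2)$, not $k_1k_2$. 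In particular, Condition 5 only tells you the resulting $k_1\times k_1$ matrix is nonsingular; it gives no quantitative lower bound of order $p_2^{2-2\delta_2}$, and the lemma is proved in the paper without Condition 5.

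The paper closes this gap by a different device: it uses the fact that the nonzero eigenvalues of $\sum_i G_iG_i'$ and of $\sum_i G_i'G_i$ agree (with $G_i=(\bc_{i\cdot}\otimes\bI)\bSigma_f(h)(\bC'\otimes\bI)$) to swap the position of the $i$-sum, collapse it into a second $\bC'\bC\otimes\bI$ factor, then write $\bC'\bC=\bU\bU'$ and swap once more. After these swaps everything is expressed through the $k_1k_2\times k_1k_2$ matrix $(\bU'\otimes\bI)\bSigma_f(h)(\bU\otimes\bI)$, whose $k_1$-th singular value is bounded below via a singular-value product inequality (Theorem 9 in \citet{merikoski2004}) together with $\sigma_{k_1}(\bSigma_f(h))\asymp 1$ and $\sigma_{k_1k_2}(\bU\otimes\bI)\asymp p_2^{(1-\delta_2)/2}$. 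That is the missing ingredient in your Step 3.
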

\begin{proof}
By definition, we have
\begin{eqnarray*}
\bOmega_{fc,ij}(h)&=&\frac{1}{T-h}\sum_{t=1}^{T-h} {\rm E} \big[(\bc_{i\cdot} \otimes \bI_{k_1}){\rm vec}(\bF_t) {\rm vec}( \bF_{t+h})'( \bc_{j \cdot}' \otimes \bI_{k_1}) \big]\\
&=& (\bc_{i \cdot} \otimes \bI_{k_1})\bSigma_{f}(h)( \bc_{j\cdot}' \otimes \bI_{k_1}).
\end{eqnarray*}

Under Conditions 2-3 and by properties of Kronecker product we have
\begin{eqnarray*}
\lambda_{k_1}(\bM_1)&=&\lambda_{k_1} \left(\sum_{h=1}^{h_0} \sum_{i=1}^{p_2} \sum_{j=1}^{p_2} \bR \bOmega_{fc,ij}(h)\bR' \bR \bOmega_{fc,ij}'(h) \bR' \right)\\
&\geq& \| \bR\|_{\min}^4 \cdot   \lambda_{k_1} \left( \sum_{h=1}^{h_0} \sum_{i=1}^{p_2} \sum_{j=1}^{p_2}\bOmega_{fc,ij}(h) {\bOmega}_{fc,ij}'(h)\right)\\
&=& \|\bR\|_{\min}^4  \cdot  \lambda_{k_1} \left( \sum_{h=1}^{h_0} \sum_{i=1}^{p_2} \sum_{j=1}^{p_2}   (\bc_{i\cdot} \otimes \bI_{k_1}) \bSigma_f(h) ( \bc_{j}' \otimes \bI_{k_1}) (\bc_{j\cdot} \otimes \bI_{k_1})\bSigma_f'(h) (\bc_{i\cdot}' \otimes \bI_{k_1}) \right)\\
&\geq& \|\bR\|_{\min}^4  \cdot   \lambda_{k_1} \left(\sum_{h=1}^{h_0} \sum_{i=1}^{p_2} \sum_{j=1}^{p_2} (\bc_{i\cdot} \otimes \bI_{k_1}) \bSigma_f(h) (\bc_{j\cdot}' \bc_{j\cdot}  \otimes \bI_{k_1})\bSigma_f'(h) (\bc_{i\cdot}' \otimes \bI_{k_1}) \right)\\
&=& \|\bR\|_{\min}^4  \cdot   \lambda_{k_1} \left(\sum_{h=1}^{h_0} \sum_{i=1}^{p_2}  (\bc_{i\cdot} \otimes \bI_{k_1}) \bSigma_f(h) (\bC' \bC \otimes  \bI_{k_1} )\bSigma_f'(h) (\bc_{i\cdot}' \otimes \bI_{k_1}) \right)\\
&=& \|\bR\|_{\min}^4  \cdot   \lambda_{k_1} \left(\sum_{h=1}^{h_0} \sum_{i=1}^{p_2}  (\bc_{i\cdot} \otimes \bI_{k_1}) \bSigma_f(h) (\bC' \otimes \bI_{k_1}  ) (\bC \otimes \bI_{k_1} )\bSigma_f'(h) (\bc_{i\cdot}' \otimes \bI_{k_1}) \right)\\
&=& \|\bR\|_{\min}^4  \cdot   \lambda_{k_1} \left(\sum_{h=1}^{h_0} \sum_{i=1}^{p_2}  (\bC \otimes \bI_{k_1} ) \bSigma_f'(h) (\bc_{i\cdot}'\otimes \bI_{k_1}) (\bc_{i\cdot} \otimes \bI_{k_1}) \bSigma_f(h) (\bC' \otimes  \bI_{k_1} )\right)\\
&=& \|\bR\|_{\min}^4  \cdot   \lambda_{k_1} \left(\sum_{h=1}^{h_0}   (\bC \otimes \bI_{k_1}  ) \bSigma_f'(h) (\bC'\bC \otimes \bI_{k_1})\bSigma_f(h)  (\bC' \otimes \bI_{k_1}  )\right).
\end{eqnarray*}
Since $\bC'\bC$ is a $k_2\times k_2$ symmetric positive definite matrix, we can find a $k_2\times k_2$ positive definite matrix $\bU$, such that $\bC'\bC=\bU\bU'$ and $\|\bU\|_2^2\asymp   O(p_2^{1-\delta_2}) \asymp\|\bU\|_{\min}^2$. By the properties of Kronecker product, we can show that $\sigma_{1}(\bU \otimes \bI_{k_1})\asymp O(p_2^{1/2-\delta_2/2}) \asymp \sigma_{k_1k_2}(\bU \otimes \bI_{k_1})$. Under Condition 2 using Theorem 9 in \cite{merikoski2004}, it follows that
$\sigma_{k_1}\left(\bSigma_{f}'(h)(\bU \otimes \bI_{k_1})\right) \asymp O(p_2^{1/2-\delta_2/2})$.

Using Theorem 9 in \cite{merikoski2004} again, we have
\begin{eqnarray*}
\lambda_{k_1}(\bM_1) &\geq & \|\bR\|_{\min}^4  \cdot   \lambda_{k_1} \left(\sum_{h=1}^{h_0}   (\bC \otimes \bI_{k_1}  ) \bSigma_f'(h) (\bU \otimes \bI_{k_1})  (\bU' \otimes \bI_{k_1})\bSigma_f(h)  (\bC' \otimes \bI_{k_1}  )\right)\\
&=& \|\bR\|_{\min}^4  \cdot   \lambda_{k_1} \left(\sum_{h=1}^{h_0}    (\bU' \otimes \bI_{k_1})\bSigma_f(h)  (\bC' \bC \otimes \bI_{k_1}  ) \bSigma_f'(h) (\bU \otimes \bI_{k_1}) \right)\\
&=& \|\bR\|_{\min}^4  \cdot   \lambda_{k_1} \left(\sum_{h=1}^{h_0}    (\bU' \otimes \bI_{k_1})\bSigma_f(h) (\bU \otimes \bI_{k_1})(\bU' \otimes \bI_{k_1})\bSigma_f'(h) (\bU \otimes \bI_{k_1}) \right)\\
&\geq& \|\bR\|_{\min}^4  \cdot   \left[ \sigma_{k_1}\left((\bU' \otimes \bI_{k_1}) \bSigma_f'(h) (\bU \otimes \bI_{k_1}) \right) \right]^2 = O(p_1^{2-2\delta_1}p_2^{2-2\delta_2}).
\end{eqnarray*}
\end{proof}
\noindent{\bf Proof of Theorem 1}
\begin{proof}
 By Lemmas 1-5, and Lemma 3 in \citet{lam2011estimation}, Theorem 1 follows.
\end{proof}

\noindent{\bf Proof of Theorem 2}
\begin{proof}
The proof is quite similar to that of Theorem 1 of
\citet{lam2012factor}.  We denote $\widehat{\lambda}_{1,j}$ and
$\widehat{\bq}_{1,j}$ for the $j$-th largest eigenvalue of
$\widehat{\bM}_1$ and its corresponding eigenvector, respectively. The
corresponding population eigenvalues are denoted by $\lambda_{1,j}$ and
$\bq_{1,j}$ for the matrix $\bM_1$.  Let
$\widehat{\bQ}_1=(\widehat{\bq}_{1,1}, \ldots, \widehat{\bq}_{1,k_1})$ and
$\bQ_1=(\bq_{1,1}, \ldots, \bq_{1,k_1})$. We have
\[
\lambda_{1,j}=\bq_{1,j}' \bM_1 \bq_{1,j}, \mbox{\ \ and \ \ }
\quad \widehat{\lambda}_{1,j}=\widehat{\bq}_{1,j}' \widehat{\bM}_1 \widehat{\bq}_{1,j}, \qquad j=1, \ldots, p_1.
\]
We can decompose $\widehat{\lambda}_{1,j}-\lambda_{1,j}$ by
\[
\widehat{\lambda}_{1,j}-\lambda_{1,j}= \widehat{\bq}_{1,j}' \widehat{\bM}_1\, \widehat{\bq}_{1,j} -\bq_{1,j}' \bM_1 \bq_{1,j}=I_1 +I_2 +I_3 +I_4 +I_5,
\]
where
\[
I_1= (\widehat{\bq}_{1,j}-\bq_{1,j})' (\widehat{\bM}_1-\bM_1)\widehat{\bq}_{1,j}, \quad I_2=(\widehat{\bq}_{1,j}-\bq_{1,j})' \bM_1 (\widehat{\bq}_{1,j}-\bq_{1,j}),
\]
\[
I_3=(\widehat{\bq}_{1,j}-\bq_{1,j})' \bM_1\bq_{1,j}, \quad I_4=\bq_{1,j}' (\widehat{\bM}_1-\bM_1)\widehat{\bq}_{1,j}, \quad
I_5= \bq_{1,j}' \bM_1  (\widehat{\bq}_{1,j}-\bq_{1,j}).\]
For $j=1,\ldots, k_1$, $\|\widehat{\bq}_{1,j}-\bq_{1,j}\|_2 \leq \|\widehat{\bQ}_1-\bQ_1\|_2=O_p(h_{T})$, where $h_{T}=p_1^{\delta_1}p_2^{\delta_2}T^{-1/2}$ by Theorem 1, and $\|\bM_1\|_2=O_p(p_1^{2-\delta_1}p_2^{2-\delta_2})$. By Lemma 4, we have
$\|I_1\|_2$ and $\|I_2\|_2$ are of order
$O_p(p_1^{2-2\delta_1}p_2^{2-2\delta_2}h_{T}^2)$ and $\|I_3\|_2$, $\|I_4\|_2$ and
$\|I_5\|_2$ are of order $O_p(p_1^{2-2\delta_1}p_2^{2-2\delta_2}h_{T})$.
So $|\widehat{\lambda}_{1,j}-\lambda_{1,j}|=O_p(p_1^{2-2\delta_1}p_2^{2-2\delta_2}h_{T})=O_p(p_1^{2-\delta_1}p_2^{2-\delta_2}T^{-1/2})$.

\noindent For $j=k_1+1, \ldots, p_1$, define,
\[
\widetilde{\bM}_1=\sum_{h=1}^{h_0} \sum_{i=1}^{p_2} \sum_{j=1}^{p_2} \widehat{\bOmega}_{x,ij}(h) \bOmega_{x,ij}'(h), \quad \widehat{\bB}_1=(\widehat{\bq}_{1,k_1+1}, \ldots, \widehat{\bq}_{1,p_1}),
\mbox{\ \ and \ \ } \bB_1=(\bq_{1,k_1+1}, \ldots, \bq_{1,p_1}).
\]
It can be shown that $\|\widehat{\bB}_1-\bB_1\|_2=O_p(h_T)$, similar to proof of Theorem 1 with Lemma 3 in \citet{lam2011estimation}. Hence,
$\|\widehat{\bq}_{1,j}-\bq_{1,j}\|_2 \leq \|\widehat{\bB}_1-\bB_1\|_2=O_p(h_{T})$.\\
Since $\lambda_{1,j}=0$, for $j=k_1+1, \ldots, p_1$, consider the decomposition
\[
\widehat{\lambda}_{1,j}= \widehat{\bq}_{1,j}' \widehat{\bM}_1 \widehat{\bq}_{1,j}=K_1+K_2+K_3,
\]
where
\[
K_1= \widehat{\bq}_{1,j}' (\widehat{\bM}_1-\widetilde{\bM}_1-\widetilde{\bM}_1+\bM_1) \widehat{\bq}_{1,j}, \quad
K_2= 2 \widehat{\bq}_{1,j}' (\widetilde{\bM}_1-\bM_1)(\widehat{\bq}_{1,j}-\bq_{1,j}),
\]
\[
K_3=( \widehat{\bq}_{1,j}-\bq_{1,j})' \bM_1(\widehat{\bq}_{1,j}-\bq_{1,j}).
\]
By Lemma 2 and Lemma 4,
\begin{eqnarray*}
K_1 & = &  \sum_{h=1}^{h_0} \|\sum_{i=1}^{p_2} \sum_{j=1}^{p_2}(\widehat{\bOmega}_{x,ij}(h) -\bOmega_{x,ij}(h))\widehat{\bq}_{1,j}\|_2^2 \leq \sum_{h=1}^{h_0} \sum_{i=1}^{p_2} \sum_{j=1}^{p_2} \|\widehat{\bOmega}_{x,ij}(h) -\bOmega_{x,ij}(h)\|_2^2=O_p(p_1^2p_2^2 T^{-1}), \\
|K_2| & = &
O_p(\|\widetilde{\bM}_1-\bM_1\|_2 \cdot \|\widehat{\bq}_{1,j}-\bq_{1,j} \|_2)=O_p(\|\widetilde{\bM}-\bM_1\|_2 \cdot \|\widehat{\bB}_1-\bB_1\|_2)=O_p(p_1^2p_2^2T^{-1}), \\
|K_3|& = & O_p(\|\widehat{\bB}_1-\bB_1\|_2^2 \cdot \|\bM_1\|_2)=O_p(p_1^{2-2\delta_1}p_2^{2-2\delta_2}h_T^2)=O_p(p_1^2p_2^2T^{-1}).
\end{eqnarray*}
Hence $\widehat{\lambda}_{1,j}=O_p(p_1^2p_2^2T^{-1})$.

If we use the transpose of $\bX_t$ to construct $\bM_2$, we can obtain the asymptotic properties of the eigenvalues of estimated $\bM_2$ in a similar way.
\end{proof}

\noindent{\bf Proof of Theorem 3}
\begin{proof}
\begin{align*}
\widehat{\bS}_t-\bS_t=&  \widehat{\bQ}_1 \widehat{\bQ}_1' \bX_t \widehat{\bQ}_2 \widehat{\bQ}_2'-\bQ_1 \bZ_t \bQ_2'= \widehat{\bQ}_1 \widehat{\bQ}_1' (\bQ_1 \bZ_t\bQ_2' +\bE_t)\widehat{\bQ}_2 \widehat{\bQ}_2'-\bQ_1\bQ_1' \bQ_1 \bZ_t \bQ_2' \bQ_2\bQ_2'\\
=&\widehat{\bQ}_1 \widehat{\bQ}_1' \bQ_1\bZ_t \bQ_2'(\widehat{\bQ}_2\widehat{\bQ}_2'-\bQ_2\bQ_2') +(\widehat{\bQ}_1\widehat{\bQ}_1' - \bQ_1\bQ_1') \bQ_1 \bZ_t \bQ_2' + \widehat{\bQ}_1 \widehat{\bQ}_1' \bE_t \widehat{\bQ}_2 \widehat{\bQ}_2' \\
=&I_1+I_2 +I_3 .
\end{align*}
By Theorem 1, we have
\begin{align*}
\|I_1\|_2&\leq 2\|\bZ_t\|_2 \| \widehat{\bQ}_2 -\bQ_2\|_2= O_p(p_1^{1/2-\delta_1/2} p_2^{1/2-\delta_2/2}\| \widehat{\bQ}_2 -\bQ_2\|_2)=O_p(p_1^{1/2+\delta_1/2}p_2^{1/2+\delta_2/2}T^{-1/2}),\\
\|I_2\|_2&\leq 2\| \widehat{\bQ}_1 -\bQ_1\|_2 \|\bZ_t\|_2 =O_p(p_1^{1/2-\delta_1/2} p_2^{1/2-\delta_2/2}\| \widehat{\bQ}_1 -\bQ_1\|_2)=O_p(p_1^{1/2+\delta_1/2}p_2^{1/2+\delta_2/2}T^{-1/2}),\\
\|I_3\|_2&\leq \| \widehat{\bQ}_1' \bE_t \widehat{\bQ}_2\|_2= \| (\widehat{\bQ}_2' \otimes \widehat{\bQ}_1'){\rm vec}(\bE_t)\|_2 \leq k_1k_2 \| \bSigma_e\|_2 =O_p(1).
\end{align*}
The conclusion follows.
\end{proof}

\noindent{\bf Proof of Theorem 4}
\begin{proof}
We assume that $\bQ_1$ is uniquely defined as $\bQ_1=(\bq_{1,1}, \bq_{1,2}, \ldots, \bq_{1,k_1})$, where $\bq_{1,1}, \ldots, \bq_{1,k_1}$ are eigenvectors of $\bM_1$ corresponding to the largest $k_1$ eigenvalues $\lambda_{1,1}, \ldots, \lambda_{1,k_1}$, and $\lambda_{1,1}> \lambda_{1,2} > \ldots > \lambda_{1,k_1}$. Then similar to proof of Theorem 3 in \citet{liu2016regime}, we can obtain the results.
\end{proof}

\vspace{0.5in}

\noindent
{\Large \bf Appendix 2: Definitions of Financials Used}

The following table shows the definition of the company financials used
in the analysis. Some are directly reported by the company in their
quarterly reports, and some are derived using the reported figures.

\begin{center}
\begin{table}
\begin{tabular}{lll}
Short Name & Variable Name & Calculation  \\ \hline
Profit.M & Profit Margin &  Net Income/Revenue  \\	
Oper.M & Operating Margin & Operating Income / Revenue  \\	
EPS & Diluted Earing per share	& from report \\	
Gross.Margin & Gross Margin	& Gross Profit / Revenue \\	
ROE & Return on equity	& Net Income / Shareholders Equity	\\
ROA & Return on assets	& Net Income / Total Assets	\\
Revenue.PS & Revenue Per Share	& Revenue / Shares Outstanding \\
LiabilityE.R & Liability/Equity Ratio &	
Total Liabilities / Shareholders Equity	\\
AssetE.R & Asset/Equity Ratio &	Total Assets / Shareholders Equity \\
Earnings.R & Basic Earnings Power Ratio	& EBIT / Total Assets	\\
Payout.R & Payout Ratio	& Dividend Per Share / EPS Basic \\
Cash.PS & Cash Per Share & Cash and other / Shares Outstanding \\
Revenue.G.Q & Revenue Growth over last Quarter & Revenue/ Revenue Last Quarter $-1$ \\
Revenue.G.Y & Revenue Growth over same Quarter Last Year & Revenue/ Revenue Last Year $-1$ \\	
Profit.G.Q & Profit Growth over last Quarter & Profit / Profit Last Quarter
$-1$
\\	
Profit.G.Y & Profit Growth over same Quarter last Year &
profit / Profit Last Quarter $-1$ \\ \hline
\end{tabular}
\end{table}
\end{center}

In calculating profit growth ratio, an NA is recorded when profit changes from
negative to positive or from positive to negative.

\end{document}